\def\llncs{0}
\def\fullpage{1}
\def\anonymous{0}
\def\draft{1}
\def\submission{0}
\def\shorter{0}

\ifnum\submission=1
\def\llncs{1}
\def\draft{0}
\def\anonymous{1}
\def\fullpage{0}
\fi

\ifnum\llncs=1
    \documentclass{llncs}
    \ifnum\fullpage=1
    \usepackage{fullpage}
    \fi
\else
    \documentclass[letterpaper,hmargin=1.05in,vmargin=1.05in]{article}
    \ifnum\fullpage=1
    \usepackage{fullpage}
    \fi
    \usepackage{microtype}
\fi

\usepackage{booktabs}
\usepackage{esvect}
\usepackage{verbatim}
\usepackage{authblk}
\usepackage{graphicx} 
\usepackage{physics} 
\usepackage{xcolor} 
\usepackage{amsmath}
\allowdisplaybreaks[4] 
\usepackage{amssymb}
\usepackage{mathtools}

 \usepackage{amsthm}
 \usepackage{thmtools}
\usepackage{enumitem} 
\usepackage{dsfont}
\usepackage[colorlinks=true,linkcolor=magenta,citecolor=blue,pagebackref=true,hypertexnames=false,pdftex,pdfpagelabels,bookmarks,hyperindex,hyperfigures]{hyperref}
\usepackage{dashbox}
\usepackage{fancybox,framed}
\usepackage[skins]{tcolorbox}
\usepackage[compatibility=false]{subcaption}
\usepackage{breakcites}
\usepackage{comment} 
\let\subparagraph\paragraph
\usepackage{titlesec}
\titleformat*{\paragraph}{\normalsize\bfseries}

\usepackage[capitalise,nameinlink]{cleveref}

\ifnum\draft=1
	\newcommand{\minki}[1]{\textcolor{blue}{$\langle\langle$Minki: #1$\rangle\rangle$}}
\else
	\newcommand{\minki}[1]{}
\fi

\ifnum\draft=1
	\newcommand{\qipeng}[1]{\textcolor{red}{Qipeng: #1}}
\else
	\newcommand{\qipeng}[1]{}
\fi

\ifnum\draft=1
	\newcommand{\sung}[1]{\textcolor{brown}{Sunghyuk: #1}}
\else
	\newcommand{\sung}[1]{}
\fi

\ifnum\draft=1
	\newcommand{\revised}[1]{\textcolor{purple}{#1}}
\else
	\newcommand{\revised}[1]{#1}
\fi

\ifnum\llncs=0
	\newtheorem{theorem}{Theorem}[section]
	\newtheorem{lemma}[theorem]{Lemma}
	\newtheorem{corollary}[theorem]{Corollary}
	
	\newtheorem{definition}[theorem]{Definition}
	
	\newtheorem{fact}[theorem]{Fact}
	\newtheorem{claim}[theorem]{Claim}

	\theoremstyle{remark}
	\newtheorem{remark}[theorem]{Remark}

\else
	\spnewtheorem{fact}{Fact}{\bfseries}{\itshape}
	\spnewtheorem{algorithm}{Algorithm}{\bfseries}{\rmfamily}

	\spnewtheorem{claim}{Claim}{\bfseries}{\itshape}
\fi
\crefname{appendix}{Appendix}{Appendices}
\Crefname{appendix}{Appendix}{Appendices}
\crefname{fact}{Fact}{Facts}
\Crefname{fact}{Fact}{Facts}

\renewcommand{\epsilon}{\varepsilon}

\ifnum\draft=1
	
\else
	
\fi

\newcommand{\perm}{\mathsf{perm}}

\newcommand{\E}{\mathbb{E}}
\newcommand{\Z}{\mathbb{Z}}
\newcommand{\one}{\mathbf{1}}
\newcommand{\Samp}{\mathsf{Sample}}
\newcommand{\Ver}{\mathsf{Verify}}
\newcommand{\Query}{\mathsf{Query}}
\newcommand{\sample}{\overset{\$}{\leftarrow}}

\newcommand{\unif}{\mathsf{unif}}
\newcommand{\img}{\mathsf{img}}
\renewcommand{\norm}[1]{\left\lVert #1\right\rVert}
\newcommand{\PermInv}{\mathsf{PermInv}}

\renewcommand{\eqref}[1]{\cref{#1}}

\newcommand{\papersubtitle}{%
    \ifnum\draft=1
        Tight(er) Bounds for Decision Games and Salting%
    \else
        Tight(er) Bounds for Decision Games and Salting%
    \fi
}
\ifnum\llncs=1
    \title{An Operator-Norm Approach\texorpdfstring{\\}{ }to Security with Quantum Advice}
    \subtitle{\papersubtitle}
\else
    \title{An Operator-Norm Approach\texorpdfstring{\\}{ }to Security with Quantum Advice\\[0.4em]
        {\large\papersubtitle}}
\fi
\date{}

\ifnum\llncs=1
    \ifnum\anonymous=1
        \author{}
        \institute{}
    \else
        \author{
        Minki Hhan\inst{1} 
        \and
        Sunghyuk Jo\inst{1}
        \and
        Qipeng Liu\inst{2}
        }
        \institute{
        KAIST
        \\\email{minkihhan@kaist.ac.kr}
        \and
        \\UCSD
        \\\emaail{}
        }
    \fi
\else
    \author[1]{Minki Hhan}
    \author[1]{Sunghyuk Jo}
    \author[2]{Qipeng Liu}
    \affil[1]{{\small KAIST, Daejeon, Korea}
    \authorcr{\small minkihhan@kaist.ac.kr, \quad josung9921@kaist.ac.kr}}
    \affil[2]{{\small UC San Diego, USA}
    \authorcr{\small qipengliu0@gmail.com}}

\fi

\begin{document}

\maketitle

\begin{abstract}
    Non-uniform security allows an adversary to receive bounded advice about an oracle before attempting a fresh challenge. This captures the most realistic attacks and has already been studied extensively in prior work. In this work, we introduce an operator-norm approach for non-uniform security in the quantum random oracle and random permutation models. This new approach yields a unified reduction for both search success probability and distinguishing advantage. Previously, the reduction only worked with success probability even in the decision games, yielding a worse bound.

    Our framework enables tight bounds for Yao's box, both with and without salting, and improved bounds for pseudorandom generators. We also prove an optimal generic salting theorem for decision games. By defining a property of a game, which separates the contributions of the existing queries in the offline stage and subsequent online queries, we obtain stronger bounds for specific salted constructions. These include salted permutation inversion, tight up to logarithmic factors, and salted random-function inversion, tight up to logarithmic factors and the gap already present in classical function inversion.
\end{abstract}

\ifnum\fullpage=1
\vfill
\paragraph{AI use disclosure.} ChatGPT-5.6 Sol Ultra discovered a slightly worse PRG bound in a one-shot query. 
Subsequent conversations improved and generalized the results to the current bounds. 
The original proof was written in the language of operator inequalities.
The authors interpreted the intermediate terms as the operational characterization of the game advantages and observed their relations. The current manuscript is entirely written by the authors based on this understanding, with assistance from ChatGPT-6 Astra in improving its presentation.
\fi
\ifnum\llncs=1
\fi


\ifnum\submission=1
\else
	\clearpage
	\newpage
	\setcounter{tocdepth}{2}
	\tableofcontents
	\newpage
\fi

\section{Introduction}

In modern cryptographic designs, hash functions and block ciphers form a handy toolkit for building practical and secure constructions. They are in turn analyzed in idealized models such as the random oracle model (ROM) \cite{CCS:BelRog93}, where we treat a hash function as a uniformly random oracle. 
Public permutations and block ciphers are similarly idealized in the random permutation model (RPM) and ideal cipher model (ICM).
While these idealized models provide rigorous security analysis, the usefulness of these guarantees depends on how faithfully the model captures the resources available to an adversary.

Preprocessing is a particularly important resource. In practice, a relatively small collection of hash functions and block ciphers is standardized, while they are reused across many applications and over long periods of time. An adversary may therefore invest substantial computation in analyzing these standardized primitives in advance, and later make use of this preprocessing to accelerate the attack. This motivates non-uniform security models \cite{STOC:Yao90,C:Unruh07,C:CorDodGuo18} in which an online adversary receives \emph{advice} of bounded size, occasionally called \emph{auxiliary input}, that depends on the underlying ideal primitives.
Further work established time-space tradeoffs for function inversion and pseudorandom generators~\cite{C:DeTreTul10,EC:DodGuoKat17}, followed by tight bounds for permutation-based PRGs and matching attacks for Yao's box~\cite{SODA:GGKL21}. Other work characterized how salting limits the advantage from preprocessing~\cite{C:DonLiuWu24}.

Quantum computing adds a further resource dimension. On one hand, the quantum adversary making queries to the oracles in superposition becomes a new threat \cite{AC:BDFLSZ11}, introducing the quantum random oracle model (QROM), which allows quantum queries to the random oracles---QRPM and QICM are defined analogously.
On the other hand, massive classical preprocessing can begin before a large-scale quantum computer becomes available, introducing a possibility of quantum attacks with classical preprocessing \cite{NABT14,AC:HhaXagYam19}. 
Subsequent work established stronger quantum time-space tradeoffs for function inversion and developed a framework with applications to Yao's box and salted cryptography~\cite{CGLQ20}.

The bit-fixing (BF) model is one of the methods to deal with the preprocessing lower bound.
In this model, the adversary presamples or fixes parts of the oracles instead of receiving the advice \cite{C:Unruh07,EC:CDGS18}. The preprocessing security is then bounded by the bit-fixing security. 
Guo, Li, Liu, and Zhang developed a quantum version of bit-fixing \cite{TCC:GLLZ21} based on the implicit ideas in \cite{CGLQ20}. These ideas lead to the tight non-uniform security of function inversion and more in the QROM given \emph{classical} advice.

More generally, a quantum adversary may store a quantum state as the outcome of preprocessing, which we refer to as quantum auxiliary input (QAI). 
Early lower bounds for inversion with quantum advice were established for random permutations~\cite{AC:HhaXagYam19} and random functions~\cite{ITC:ChuLiaQia20}.
The most conservative model is therefore the quantum idealized model with quantum advice, for example QAI-QROM and QAI-QICM. The security in the presence of QAI is much more difficult to show. 
One difficulty is that quantum advice cannot generally be copied, and using it may disturb the state, complicating arguments based on reuse or rewinding. Beyond these proof-technical obstacles, quantum advice can be intrinsically more powerful than classical advice: exponential separations are known for certain games in the QROM~\cite{Liu23}, and a separation between $\mathsf{BQP/qpoly}$ and $\mathsf{BQP/poly}$ is known relative to classically accessible classical oracles~\cite{ITCS:LLPY24}. Related work on quantum versus classical proofs established a classical-oracle separation of $\mathsf{QMA}$ and $\mathsf{QCMA}$ under a quantum pseudorandomness conjecture~\cite{STOC:LiuMutYue25}, followed by an unconditional classical-oracle separation~\cite{BHNZ26}.

The reduction between QAI security and the quantum version of BF security has been established in~\cite{CGLQ20,Liu23}. There are two versions of these reductions with multiplicative loss and additive loss, which are used for the search problems and decision problems, respectively.
The multiplicative version is relatively tight and yields the tight OWF security of random oracles and permutations in the corresponding QAI models \cite{Liu23,ABC+26}.
More precisely, they obtain a security bound $O((ST + T^2)/N)$, where $N$ is the domain size of the function or permutation (assuming a codomain of size at least $N$ in the function case), $S$ is the number of advice qubits, and $T$ is the number of online quantum queries.

The tightness of the additive version (as well as the non-uniform security for many decision games) is unclear. 
For example, the proven PRG advantage is about $O((T^2/N)^{1/2} + (ST/N)^{1/3})$ while there are known attacks with the advantage $\Omega(T^2/N + (S/N)^{1/2})$ where the first term is due to quantum search and the second term is from \cite{C:DeTreTul10}.\footnote{The preprocessing inversion algorithm \cite{Hellman80} can be used to attack PRGs with the advantage $\min(ST/N , (S^2 T/ N^2)^{1/3})$. This does not change the discussion about the parameters here. This observation was suggested by ChatGPT 6 Astra.} Ignoring constant factors, obtaining an advantage bound on the order of $2^{-128}$ from the previous result requires $n\approx640$ when $S=T=2^{128}$, compared with $n\approx384$ for the corresponding OWF bound.



Another open question concerns preprocessing security with salting. The generic phenomenon called the salting-defeat-preprocessing \cite{EC:CDGS18} is shown in the QAI setting as well. However, for the size of the salting space $K$, these generic theorems incur a loss of $S/K$ for many search games \cite{C:DonLiuWu24} and $(ST/K)^{1/3}$ for all decision games \cite{Liu23}. In many security games, the desired security would be tighter; for example, the OWF security with an $O(ST/KN)$ term is known in AI-ROM \cite{EC:DodGuoKat17,EC:CDGS18}. A similar quantum result is only in the AI-QROM setting with loose bounds of $O(ST^2/KN)$ \cite{AC:HhaXagYam19}.
With the previous results, salting is known to defeat preprocessing only with a very lengthy salt, e.g., $K=2^{640}$ against $S=T=2^{128}$ to have $2^{-128}$ PRG advantage.

\subsection{Our Results}

In this paper, we systematically study the idealized model in the QAI setting. 
We present a new route towards understanding non-uniform security with quantum advice in the idealized models.
We present a new method based on the operator norm of a quantum algorithm, and connect it to the bit-fixing model~\cite{CGLQ20,TCC:GLLZ21}.

To state our results, let $S$ be the number of advice qubits, $T$ denote the number of online queries, and $P$ the number of offline queries in the bit-fixing model. Let $N$ and $M$ denote the oracle domain and range sizes, respectively.

Our first result is the unified reduction between QAI security and the quantum bit-fixing security. Previously, the bound connected the success probabilities in both the QAI setting and the BF setting. Our new result directly applies to the advantages in both cases---for decision games, the advantage is defined as the absolute difference between the success probability and $1/2$.
This improves the additive bound of \cite[Theorem 6.1]{Liu23} and \cite[Theorem 1]{TCC:GLLZ21}.

\begin{theorem}[QAI security to BF security, informal]
    Fix any game $G$.
    Let $\delta(S,T)$ be the maximum advantage that a quantum algorithm with $S$-qubit auxiliary input and $T$ quantum queries can achieve.  Let $\nu(P, T)$ be the maximum advantage in the $P$-BF model with at most $T$ quantum queries. Then we have
    \begin{align*}
        \delta(S,T) \le \sqrt 2 \cdot \nu(Sr,T), 
    \end{align*}
    where $r:=2(T_{\Samp}+T+T_{\Ver})$. Here $T_{\Samp}$ and $T_{\Ver}$ are the numbers of queries made by the sampler and verifier in $G$, respectively.
\end{theorem}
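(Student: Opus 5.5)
We prove the bound through an operator-norm formulation of the advantage, followed by the multiplicative QAI-to-BF reduction of \cite{CGLQ20,Liu23} applied to a single amplified game.

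\textbf{Step 1: the advantage as an operator.} Fix an oracle $H$ and an online algorithm $\mathcal{A}$ with $T$ queries. Let $V_H$ be the unitary that runs the sampler, then $\mathcal{A}$, then the verifier, and let $\Pi_{\mathrm{acc}}$ be the projector onto acceptance. Define the Hermitian operator
\[
  O_H := V_H^\dagger \Pi_{\mathrm{acc}} V_H - \tfrac12 I
\]
on the advice register, with all other registers in their fixed initial states. For decision games, the advantage on advice $\sigma_H$ equals $\big|\E_H \operatorname{Tr}(O_H \sigma_H)\big|$. Because the advice may depend on $H$, the best $S$-qubit advice yields
\[
  \delta(S,T) = \E_H \big\|O_H\big\|_{\mathrm{op}},
\]
after splitting into the positive and negative parts of $O_H$; a sign-choosing adversary shows this is achievable. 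For search games we use $\Pi_{\mathrm{acc}}$ alone, with no $\tfrac12$ shift.

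\textbf{Step 2: from operator norm to a product game.} Any Hermitian operator $O$ on $S$ qubits satisfies
\[
  \|O\|^{2k} \le \operatorname{Tr}(O^{2k}) \le 2^S \|O^{2k}\|.
\]
Now $\operatorname{Tr}(O_H^{2k})$ with the maximally mixed advice is $2^S$ times the quantity $\langle \phi| O_H^{2k} |\phi\rangle$. We interpret this quantity as the value of a $2k$-fold sequential game in which one uses the uniform mixed state, i.e.\ no advice, and alternately applies $V_H$ and $V_H^\dagger$. Each round costs $2(T_{\Samp}+T+T_{\Ver})=r$ queries, since the round uses the forward and inverse circuits and both involve sampler, adversary and verifier queries.

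With uniform advice, the multiplicative lemma of \cite{CGLQ20,Liu23} applies: a $k'$-round game without advice bounds the $S$-advice game via presampling $P = Sr$ points. Concretely, the standard argument yields
\[
  \E_H \|O_H\|^{2k} \lesssim \text{(product of per-round BF values)}.
\]
Each round is bounded by $\nu(P,T)^2$ in a $P$-BF model, where $P$ grows with the number of rounds already executed.

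\textbf{Step 3: choosing $k$ and taking roots.} Take $2k \approx S$, so that the $2^S$ factor becomes a constant after the $1/(2k)$-th root. The accumulated fixing is then $k\cdot r \approx Sr$ points. Jensen's inequality, $\E\|O\| \le (\E\|O\|^{2k})^{1/(2k)}$, then gives
\[
  \delta(S,T) \le 2^{S/2k}\,\nu(Sr,T).
\]
With $2k = 2S$ this constant is $\sqrt2$, and the fixing budget $Sr$ matches the statement once the round count and per-round query count are counted precisely.

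\textbf{Main obstacle.} The hard step is Step 2: showing that each factor of $O_H^2 = (V^\dagger \Pi V - \tfrac12)^2$ in the sequential game is bounded by the \emph{squared} BF advantage conditioned on the previous rounds, rather than by the success probability. This requires recognizing $O_H^2$ as a distinguishing-type quantity, the "advantage squared" of a game run forward and backward, and checking that conditioning on earlier rounds is captured by fixing the $r$ queried points per round. This is how the additive loss of earlier work would be avoided. I would first verify it for a single round with trivial advice and then induct on the rounds.
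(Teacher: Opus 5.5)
Your skeleton matches the paper's. Your $O_H$ is exactly the paper's $E^H=\tfrac12(E^H_{\mathsf T}-E^H_{\mathsf F})$, and the maximally entangled purification of uniform advice, the $2k$-th moment, and the choice $k=S$ giving $\sqrt2$ all appear in the paper. The gap is Step~2, which carries all the content and which you leave open.

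There is no standard lemma to invoke there. The alternating-measurement (multiplicative) argument of \cite{CGLQ20,Liu23} controls success probabilities, i.e.\ positive semidefinite operators realizable as measurement outcomes. It does not extend to the indefinite $O_H$ of a decision game, as the paper points out explicitly. What must actually be proved is the per-step estimate
\[
\lVert M_{\mathcal E}|v\rangle\rVert_2\le\nu(P+r,T)\,\lVert|v\rangle\rVert_2\qquad\text{for all }|v\rangle\in\mathcal L_P,
\]
where $M_{\mathcal E}=\sum_H|H\rangle\langle H|\otimes O_H$. Because $M_{\mathcal E}$ raises the database size by at most $r$, this reduces to $\lVert\Pi_{\le P'}M_{\mathcal E}\Pi_{\le P'}\rVert_{\mathrm{op}}\le\nu(P',T)$. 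Because $M_{\mathcal E}$ is Hermitian, the left side equals $\sup_{\psi\in\mathcal L_{P'}}|\langle\psi|M_{\mathcal E}|\psi\rangle|$. But $\nu$ only bounds $|\langle\psi|M_{\mathcal E}|\psi\rangle|$ for states an offline phase can actually output. The missing idea is therefore that every unit vector of $\mathcal L_{P'}\otimes\mathcal H_A$ can be produced with $P'$ quantum queries and postselection.

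This is the paper's characterization theorem, proved as follows. First prepare a state proportional to $\sum_{|D|\le P'}|\phi_D\rangle_O|D\rangle_R$ by querying each $x\in\operatorname{supp}(D)$ with phase label $D(x)$. For permutations, instead query a uniform superposition of $P'$-subsets in standard form, giving a state proportional to $\sum_{|\alpha|=P'}|v_\alpha\rangle|\alpha\rangle$. Then write the target as $\sum_D|\phi_D\rangle_O|z_D\rangle_A$ and apply a rescaled $\sum_D|0\rangle\langle D|_R\otimes|z_D\rangle\langle0|_A$ through a block encoding with postselection on a flag qubit.

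Your heuristic that conditioning on earlier rounds amounts to fixing the $r$ points queried per round does not replace this. The iterates $M_{\mathcal E}^j|v_0\rangle$ are arbitrary signed vectors in $\mathcal L_{jr}$, not presampled-point states. An advantage bound for some states says nothing about $\lVert M_{\mathcal E}|v\rangle\rVert_2$ for others.

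A second, smaller issue is the budget. Read literally, your $2k$-fold sequential game with $r$ queries per round accumulates $2kr$ queries and would give $\nu(2Sr,T)$. The paper obtains $Sr$ from the identity $\langle v_0|M_{\mathcal E}^{2k}|v_0\rangle=\lVert M_{\mathcal E}^k|v_0\rangle\rVert_2^2=\E_H[\frac1d\Tr O_H^{2k}]$. So only $k$ applications are iterated, each contributing a factor $\nu((j+1)r,T)\le\nu(kr,T)$ to the norm. The budget at step $j+1$ is $(j+1)r$ rather than $jr$, because $M_{\mathcal E}$ maps $\mathcal L_{jr}$ into $\mathcal L_{(j+1)r}$.

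Finally, $\delta(S,T)=\E_H\lVert O_H\rVert_{\mathrm{op}}$ is not an identity for a fixed online algorithm, since the sign of the extremal eigenvalue can vary with $H$. Only the inequality $\le$ is needed, and it is immediate.
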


This theorem alone improves the QAI-QROM security of Yao's box problem from
$\widetilde O(\sqrt[19]{S^5T/N})$ in \cite{CGLQ20}\footnote{This can be further improved to $O(\sqrt[3]{ST/N})$ by \cite[Theorem~6.1]{Liu23}, but was never explicitly written up.} to $O(\sqrt{ST/N})$. 
For PRG security, we prove the bound $O(\sqrt{ST/N} + T^2/N)$.
The previous best bound was $O(\sqrt[3]{ST/N} + \sqrt{T^2/N})$ \cite{Liu23}.
The improvement of the first term is due to the improved reduction. For the second term, 
we observe that the PRG security in the quantum $P$-bit-fixing model can be improved from $O(\sqrt{(P+T^2)/N})$ to $O(\sqrt{P/N}+T^2/N)$ using the polynomial method \cite{FOCS:BBCMW98}. All comparisons can be found in \Cref{tab:unsalted-bounds}.




\setlength{\tabcolsep}{12pt}
\renewcommand{\arraystretch}{2}
\begin{table}[ht]
    \centering
    \begin{tabular}{lcc}
        \toprule
        \noalign{\vskip -6pt}
        \rule{0pt}{3ex} Decision Game & Known Bound & Ours\\[-3pt]
        \midrule
        PRG & $\displaystyle O\left(\sqrt[3]{\frac{ST}{N}}+\sqrt{\frac{T^{2}}{N}}\right)$ & $\displaystyle O\left(\sqrt{\frac{ST}{N}}+\frac{T^{2}}{N}\right)$ \\
        Yao's box & $\displaystyle \widetilde O\left(\sqrt[19]{\frac{S^{5}T}{N}}\right)$ or $\displaystyle O\left(\sqrt[3]{\frac{ST}{N}}\right)$ & $\displaystyle O\left(\sqrt{\frac{ST}{N}}\right)$ \\
        \bottomrule
    \end{tabular}
    \vspace{5pt}
    \caption{We assume $T\ge 1$ in this table. PRG bound is from \cite[Theorem~1.2]{Liu23}. Yao's box bound is from \cite[Theorem~1.4]{CGLQ20} and can be improved by \cite[Theorem~6.1]{Liu23}. 
    }
    \label{tab:unsalted-bounds}
\end{table}

\paragraph{Salting.}
Throughout the presentation below, we use the same conventions for $S$, $T$, $N$, $M$, and $P$. We let $K$ denote the size of the salt space.

Let $\nu_{\mathcal G}(T)$ and $\epsilon_{\mathcal G}(T)$ denote the base game's, i.e., without salting, uniform search success and decision advantage, respectively. Existing works~\cite{CGLQ20,TCC:GLLZ21} have established bit-fixing bounds for salted games. In the QROM, they show $O(\nu_{\mathcal G}(T)+P/K)$ for search games and $O(\epsilon_{\mathcal G}(T)+\sqrt{P/K})$ for decision games.
Using these bit-fixing bounds, \cite[Theorem~7.5]{Liu23} obtained QAI bounds of $O(\nu_{\mathcal G}(T)+Sr/K)$ for search games and $O(\epsilon_{\mathcal G}(T)+(Sr/K)^{1/3})$ for decision games, where $r:=2(T_{\Samp}+T+T_{\Ver})$.

Our generic salting theorem (\cref{thm:generic-public-salting}) recovers the search bound up to constant factors and improves the salt-dependent term in the decision bound from $(Sr/K)^{1/3}$ to $\sqrt{Sr/K}$. We obtain this refinement by combining the salting analysis of~\cite{CGLQ20} with our novel operator-norm-based approach.
\begin{theorem}[Salting for decision games, informal]
    \label{thm:salting-decision-informal}
    Fix a decision game $\mathcal G$ in the QROM or quantum random permutation model (QRPM) with uniform $T$-query advantage at most $\epsilon_{\mathcal G}(T)$.
    Let $\mathcal G_{\mathsf S}$ be its salted version with $K$ independent oracle instances and a uniformly random challenge salt.
    Then the maximum advantage of a quantum adversary with $S$-qubit auxiliary input and at most $T$ online queries satisfies
    \[
        \epsilon_{\mathcal G_{\mathsf S}}^{\mathsf{QAI}}(S,T)
        = O\!\left(\epsilon_{\mathcal G}(T)+\sqrt{\frac{Sr}{K}}\right),
    \]
    where $r:=2(T_{\Samp}+T+T_{\Ver})$, and $T_{\Samp}$ and $T_{\Ver}$ are the numbers of queries made by the sampler and verifier, respectively.
\end{theorem}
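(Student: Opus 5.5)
The plan has two stages. First, reduce the QAI advantage of $\mathcal G_{\mathsf S}$ to a bit-fixing advantage using the informal QAI-to-BF theorem stated above, applied with $P = Sr$. Second, bound the $P$-BF advantage of the salted game by $O(\epsilon_{\mathcal G}(T)+\sqrt{P/K})$ through the salting analysis of \cite{CGLQ20,TCC:GLLZ21}. Chaining the two gives
\[
\epsilon_{\mathcal G_{\mathsf S}}^{\mathsf{QAI}}(S,T)\le \sqrt2\,\nu_{\mathcal G_{\mathsf S}}(Sr,T) = O\!\left(\epsilon_{\mathcal G}(T)+\sqrt{Sr/K}\right).
\]
Because the reduction acts directly on advantages and not on success probabilities, the square root from the BF salting bound passes through unchanged. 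This is what turns the earlier $(Sr/K)^{1/3}$ into $\sqrt{Sr/K}$.

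For the first stage, I would check that the salted game fits the framework of the reduction. The oracle is the product of $K$ independent instances, viewed as one random function (or permutation) on the domain $[K]\times[N]$, with a product distribution. The sampler picks a uniform salt $k$ and runs the base sampler on instance $k$. The verifier runs the base verifier on instance $k$. So $T_{\Samp}$ and $T_{\Ver}$ are the same as in the base game, and $r=2(T_{\Samp}+T+T_{\Ver})$. In the QRPM the instances are independent permutations, and I would use the permutation version of the reduction; I assume the operator-norm reduction handles product oracle distributions in that case as well.

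For the second stage, I would follow the quantum bit-fixing salting argument. In the $P$-BF model, the adversary fixes at most $P$ positions and each fixed position lies in a single salt instance, so the number of instances containing any fixed position is at most $P$. Conditioned on the challenge salt $k$ landing in an instance with no fixed position, which has probability at least $1-P/K$, instance $k$ is uniform and independent of the fixed data. In the classical-style analysis, the online adversary could then simulate the other instances itself, giving advantage at most $\epsilon_{\mathcal G}(T)$. Charging the bad event its full probability would give $P/K$ additively.

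The main obstacle is that quantum bit-fixing, in the sense of \cite{CGLQ20,TCC:GLLZ21}, is not literal fixing. It is a measurement-based "pre-query" model, and for decision games a hybrid argument only yields $\sqrt{P/K}$. Concretely, the adversary's $P$ pre-queries are spread across salts in superposition. Bounding the total query weight on salt $k$ by $P/K$ on average, then applying the standard one-sided hybrid or Cauchy–Schwarz step, gives a distinguishing bound of $\sqrt{P/K}$. I would invoke the existing result of \cite{CGLQ20} for decision games, $O(\epsilon_{\mathcal G}(T)+\sqrt{P/K})$, in the form already quoted in the paper, rather than re-derive it. I would only verify that it is stated for advantage rather than success probability and that it covers permutations; if it does not cover permutations, I would redo the hybrid with independent per-salt permutations.
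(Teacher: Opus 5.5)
\textbf{The QROM case is fine; the QRPM case is not yet proved.}

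For the QROM your chain is sound and is, in substance, the paper's own argument. The $K$ salted instances form one random function on $[K]\times[N]$, whose database hierarchy is exactly $\mathcal L_P^{(K)}$. So \cref{thm:QAI bounded by BF} applies with $P=Sr$, and the known QROM bit-fixing bound $O(\epsilon_{\mathcal G}(T)+\sqrt{P/K})$ for salted decision games finishes the job. A one-sided bound suffices there, since a bit-fixing adversary may flip its output.

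The paper does not cite that bound; it re-derives it in operator form. This is exactly where your proposal has a gap. The statement also covers the QRPM, the salting bounds of \cite{CGLQ20,TCC:GLLZ21} are QROM-only, and ``I would redo the hybrid'' is not an argument. Your first stage is also not immediate for permutations, because $K$ independent permutations are not a random permutation on $[K]\times[N]$. Hence \cref{thm:QAI bounded by BF} does not apply verbatim. Going through the bit-fixing model would also require the state-preparation converse of \cref{thm:bf-operator-characterization} for product-permutation states, which the paper never establishes.

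\textbf{How to close the gap.} The fix is short and works for both oracle types, provided you skip the bit-fixing model and bound operator norms directly.

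Run the moment argument with $\Pi^{(K)}_{\le P}$. The initial state lies in $\mathcal L_0^{(K)}$, and a salted query raises the total database size by at most one. For permutations this needs the per-salt computation showing that a query on salt $s$ extends only $\alpha_s$, and by at most one point. It then suffices to bound $|\langle\psi|M_{\mathcal E^{(K)}}|\psi\rangle|$ for unit $|\psi\rangle\in\mathcal L_P^{(K)}$.

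Fix $s$ and write $|\psi\rangle=|\psi_0\rangle+|\psi_1\rangle$ with $|\psi_0\rangle:=\Pi_{s,=0}|\psi\rangle$.
\begin{itemize}
\item On $\operatorname{im}\Pi_{s,=0}$ the register $O_s$ is in the uniform state and in tensor product with everything else. Absorbing the other instances into $H_s$-independent workspace makes the fixed-salt experiment a genuine uniform base-game run, so $|\langle\psi_0|M_{\mathcal E_s}|\psi_0\rangle|\le\epsilon_{\mathcal G}(T)$.
\item The operator inequality $I-\Pi_{s,=0}\preceq\Lambda_s$ gives $\|\psi_1\|_2^2\le\langle\psi|\Lambda_s|\psi\rangle$.
\item The bound $\|M_{\mathcal E_s}\|_{\mathrm{op}}\le1$ controls each cross term by $\|\psi_1\|_2$ and the remaining diagonal term by $\|\psi_1\|_2^2$.
\end{itemize}
Together these give $\epsilon_{\mathcal G}(T)+2\sqrt{\langle\psi|\Lambda_s|\psi\rangle}+\langle\psi|\Lambda_s|\psi\rangle$ for each $s$. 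Averaging over $s$, using $\sum_s\langle\psi|\Lambda_s|\psi\rangle\le P$ and Jensen's inequality, yields $\epsilon_{\mathcal G}(T)+2\sqrt{P/K}+P/K$. The last term is harmless, because the advantage is trivially bounded when $P/K>1$.

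This is the quantitative content behind your ``Cauchy--Schwarz step,'' and it is what makes the permutation case go through.
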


The square-root term is optimal for general decision games. Indeed, the one-bit Yao game ($N=1$) has zero uniform advantage, and its salted version is Yao's box on $K$ bits. The attack of~\cite[Theorem~1.3]{SODA:GGKL21} achieves $\Omega(\sqrt{S(T+1)/K})$, matching this term since $r=2(T+1)$.

As already noticed by \cite{EC:CDGS18}, the salting-defeat-preprocessing gives a worse bound than the direct application of bit-fixing to the salted games. Thus, we individually work on different games and obtain refined bounds for various games, in both the QROM and QRPM. All comparisons and results can be found in \Cref{tab:salted-bounds}.

Our unsalted and salted Yao's-box bounds are tight up to constant factors, matching the classical attack of~\cite[Theorem~1.3]{SODA:GGKL21} with domain sizes $N$ and $KN$, respectively. Our salted permutation-inversion bound is tight up to logarithmic factors, by choosing between Grover search and Hellman's attack with preprocessing distributed across salts~\cite{Hellman80,ABC+26}.

For salted random-function OWFs and $T\ge1$, our bound matches the attacks (See e.g., \cite[Table~1]{AC:HhaXagYam19}) up to logarithmic factors whenever $ST^2\le K\min\{N,M\}$. It also matches Grover search whenever $S\le KT$, since the uniform term then dominates.

{
\setlength{\tabcolsep}{12pt}
\renewcommand{\arraystretch}{2}
\begin{table}[ht]
    \centering
    \begin{tabular}{lcc}
        \toprule
        \noalign{\vskip -6pt}
        \rule{0pt}{3ex} Salted Game & Known Bound & Ours\\[-3pt]
        \midrule
        PRG & $\displaystyle O\left(\sqrt[3]{\frac{ST}{K}}+\sqrt{\frac{T^{2}}{N}}\right)$ & $\displaystyle O\left(\sqrt{\frac{ST}{KN}}+\frac{T^{2}}{N}\right)$ \\
        Yao's box & $\displaystyle \widetilde O\left(\sqrt[19]{\frac{S^{5}T}{KN}}\right)$ & $\displaystyle O\left(\sqrt{\frac{ST}{KN}}\right)$ \\
        OWF & $\displaystyle O\left(\frac{ST}{K}+\frac{T^2}{\min\{N,M\}}\right)$ & $\displaystyle O\left(\frac{ST}{K\min\{N,M\}}+\frac{T^2}{\min\{N,M\}}\right)$ \\
        PermInv & $\displaystyle \widetilde O\left(\left(\frac{ST^2}{KN}+\frac{T^2}{N}\right)^{1/3}\right)$ & $\displaystyle O\left(\frac{ST}{KN}+\frac{T^2}{N}\right)$ \\
        \bottomrule
    \end{tabular}
    \vspace{5pt}
    \caption{We assume $T\ge 1$ in this table. PRG and OWF bounds are from \cite[Theorem~1.3]{Liu23}. Yao's box bound follows from \cite[Theorem~1.4]{CGLQ20} after substituting $N\mapsto KN$. PermInv bound is from \cite[Theorem~6]{AC:HhaXagYam19}.}
    \label{tab:salted-bounds}
\end{table}
}

\subsection{Technical Overview}


In this section, we will go over the main ideas used to obtain these results.

\paragraph{Purified game operators and database hierarchies.}
We start by expressing the adversary's advantage through an operator. Let $\mathcal G$ be a game which can be either a search or a decision game with oracle space $\mathcal F$ which can be either a set of functions or a set of permutations. Fix a quantum adversary $\mathcal A$ who receives $S$-qubit advice and makes $T$ quantum queries to the oracle $H\in\mathcal F$. Let $\{\rho_H\}_{H\in \mathcal F}$ be the family of advice states, each of which depends on $H$ and is stored in advice register $A$. Without loss of generality, we can purify the algorithm and assume that $\mathcal A$ applies a unitary $V^H$ on the joint registers $AW$. Let $\Pi$ be the final projection to the acceptance space. Define the Hermitian operator
    \[
        E^H:=(I_A\otimes\langle0|_W)(V^H)^\dagger\Pi V^H(I_A\otimes|0\rangle_W).
    \]
    Then, for the search game $\mathcal G$, the winning probability is $\E_{H}[\Tr(\rho_HE^H)]$. 
    
    For a decision game, we define two projections $\Pi_{\mathsf T }$ to the accepted space and $\Pi_{\mathsf F} = I - \Pi_{\mathsf T}$ to the rejected space. As in the search game case, for each projection, we define the Hermitian operator
    \[
        \begin{aligned}
        E^H_{\mathsf T}:=(I_A\otimes\langle0|_W)(V^H)^\dagger\Pi_{\mathsf T} V^H(I_A\otimes|0\rangle_W) \\
        E^H_{\mathsf F}:=(I_A\otimes\langle0|_W)(V^H)^\dagger\Pi_{\mathsf F} V^H(I_A\otimes|0\rangle_W).
        \end{aligned}
    \]
    Finally, we define $E^H:=(E^H_{\mathsf T}-E^H_{\mathsf F})/2$. Then, its decision advantage is $|\E_{H}[\Tr(\rho_HE^H)]|$. 

    Instead of bounding $|\E_{H}[\Tr(\rho_HE^H)]|$ directly, we bound it through the high-order moment of its operator norm as follows: for any integer $k \geq 1$,
    \[ 
        \begin{aligned}
        |\E_{H}[\Tr(\rho_HE^H)]|
        &\le \E_{H}[\lVert E^H\rVert_{\mathrm{op}}] \\
        &\le \left(\E_H[\lVert E^H\rVert_{\mathrm{op}}^{2k}]\right)^{1/(2k)}
        \le \left(\E_H[\Tr|E^H|^{2k}]\right)^{1/(2k)}.
        \end{aligned}
    \]
    The second inequality follows from the monotonicity of $L^p$ norms, and the last from $\lVert E^H\rVert_{\mathrm{op}}^{2k}\le\Tr|E^H|^{2k}$.
    
    The next step is to bound this spectral moment by expressing it as the norm of a vector.
    We start our argument from the purified oracle state $|\Omega_{\mathcal F}\rangle\sim\sum_{H\in\mathcal F}|H\rangle$ in oracle register $O$. We purify the family of operators $\{E^H\}_{H\in\mathcal F}$ into  $M_{\mathcal E}$: 
    \[
    M_{\mathcal E}:=\sum_{H\in\mathcal F}|H\rangle \langle H|\otimes E^H.
    \]
    Now, assume the algorithm starts with a maximally mixed advice state, but purified as a maximally entangled state using an additional reference register $R$, 
    \[
    |\Phi\rangle := \frac{1}{\sqrt{d}}\sum_{a\in [d]} |a\rangle_A |a\rangle_R.
    \]

    Since $E^H$ only applies to the advice register, we have $d=\dim \mathcal H_{A} \le 2^S$. Let $|v_{0}\rangle:=|\Omega_{\mathcal F}\rangle |\Phi\rangle$ be the initial state, with the purified oracle and the maximally entangled state. 
    
    We can define $|v_{j}\rangle := M_{\mathcal E} |v_{j-1}\rangle = M_{\mathcal E}^{j}|v_{0}\rangle$ for $j\ge 1$. Therefore, we have for all integers $k \geq 1$,
    \begin{equation}
        \lVert |v_k\rangle\rVert_2^2
        =\E_H\left[\frac1d\sum_{a=1}^d\langle a|(E^H)^{2k}|a\rangle\right]
        =\E_H\left[\frac1d\Tr|E^H|^{2k}\right].
        \label{eq:overview_A}
    \end{equation}
    Thus, from the inequality above, the advantage is at most $d^{1/(2k)}\lVert |v_k\rangle\rVert_2^{1/k}$.

    \begin{remark}[Comparison to the existing high-moment method.]
    The high-moment viewpoint already appears in~\cite{CGLQ20,Liu23}. In the alternating-measurement framework of~\cite{Liu23}, they use two types of rounds: odd rounds test success, while even rounds project the randomness register back onto its initial uniform superposition. This argument relates the success probability of a non-uniform algorithm to the success probability in the so-called alternating-measurement game.  
    Our method works directly on the operator that represents the advantage, rather than the success probability, while the alternating measurement framework does not extend to the case when the operator $E^H$ is no longer positive semi-definite (in the decision games).
    \end{remark}

\paragraph{Bit-fixing security as a database-size-restricted operator norm.}

    Next, we will analyze the state $|v_k\rangle$.
    Since it corresponds to a repeated application of $M_{\mathcal E}$ on the initial state $|v_0\rangle$, we will use the compressed-oracle technique~\cite{Zha19} for analyzing the joint state over the algorithm and oracle. It has already been applied to analyze non-uniform security in the QROM~\cite{CGLQ20}, as well as in the QRPM using its permutation counterpart~\cite{ABC+26}. We will mostly focus on the function case in the overview.

    Intuitively, the compressed oracle uses the database size to capture the amount of information an algorithm has learned about the random function so far; the database size can only be increased by one per query. Thus, we can formally define the space of joint states having database size at most $P$ by
    \[
        \mathcal L_{P}:=\operatorname{span}\{|x,y\rangle_{A,R}|\phi_D\rangle_O  : |D|\le P\}.
    \]
    Let $\Pi_{\le P}$ be the orthogonal projection onto $\mathcal L_{P}$.
    
    We observe that $M_{\mathcal E}$ increases the database size by at most $r=2(T+T_{\Samp}+T_{\Ver})$ where the $2$ factor comes from both the applications of $(V^H)^\dagger$ and $V^H$. 
    Since \(|v_j\rangle\in\mathcal L_{jr}\subseteq\mathcal L_{(j+1)r}\) and \(|v_{j+1}\rangle\in\mathcal L_{(j+1)r}\), for \(0\le j<k\),
    \[
        |v_{j+1}\rangle
        =\Pi_{\le (j+1)r}M_{\mathcal E}\Pi_{\le jr}|v_j\rangle
        =\Pi_{\le (j+1)r}M_{\mathcal E}\Pi_{\le (j+1)r}|v_j\rangle.
    \]
    Hence, 
    \[
        \lVert |v_{j+1}\rangle\rVert_2
        \le||\Pi_{\le (j+1)r} M_{\mathcal E}\Pi_{\le (j+1)r}||_{op}\lVert |v_j\rangle\rVert_2.
    \]
    Here, we call terms like $||\Pi_{\le (j+1)r} M_{\mathcal E}\Pi_{\le (j+1)r}||_{op}$ the database-size-restricted operator norm.
    
    Iterating gives, for any $k\ge 1$,
    \begin{equation}
        \lVert |v_k\rangle\rVert_2
        \le\prod_{j=0}^{k-1}||\Pi_{\le (j+1)r} M_{\mathcal E}\Pi_{\le (j+1)r}||_{op}
        \le||\Pi_{\le kr} M_{\mathcal E}\Pi_{\le kr}||_{op}^k.
        \label{eq:overview_B}
    \end{equation}
    Combining \eqref{eq:overview_A} and \eqref{eq:overview_B} with $k=S$ gives
    \[  
        |\E_{H}[\Tr(\rho_HE^H)]| \le \sqrt{2}||\Pi_{\le Sr} M_{\mathcal E}\Pi_{\le Sr}||_{op}.
    \]

    The value $||\Pi_{\le P} M_{\mathcal E}\Pi_{\le P}||_{op}$ is in fact related to the $(P,T)$-bit-fixing model, as implicitly proposed by~\cite{CGLQ20} and defined in~\cite{TCC:GLLZ21}. The bit-fixing model has two phases: the offline phase and the online phase. In the offline phase, the adversary can make $P$ quantum queries to the oracle $H\in \mathcal F$ and do any post-selection to produce a state $|\psi\rangle$. In the online phase, a challenge is received, and the adversary makes at most $T$ quantum queries to solve the challenge starting from $|\psi\rangle$. Let $A$ be the shared register between the offline phase and the online phase. Then $|\langle \psi|M_{\mathcal E}|\psi\rangle|$ is nothing but the online phase's advantage on the game $\mathcal G$ starting from $|\psi\rangle$. 
    
    Surprisingly, we show that the offline phase can produce any unit vector $|\psi\rangle \in \mathcal L_{P}$, and therefore the $(P,T)$-bit-fixing model has exactly the same security as the database-size-restricted operator norm\footnote{The other half---any state produced in the $(P, T)$-bit-fixing model is in $\mathcal{L}_P$---holds immediately by definition, and was proved in~\cite{CGLQ20}.}! 
    Take an arbitrary unit vector $|\psi\rangle \in \mathcal L_{P}$. Then by definition, it has an expression $|\psi\rangle \sim \sum_{|D|\le P}|\phi_{D}\rangle_O |z_{D}\rangle_A$. We prepare the state
    \[
        |\Psi_P\rangle \sim \sum_{|D|\le P} |\phi_D\rangle_O |0\rangle_A |D\rangle_R,
    \]
    which can be done by applying $P$ quantum queries on each support of $D$, on the state $|\Psi_0\rangle \sim |\Omega_{\mathcal F}\rangle_O \otimes |0\rangle_A \otimes \sum_{|D|\le P} |D\rangle_R $ where $R$ is a helper register.  
    Setting the operator acting on register $RA$,
    \[
        F:=\sum_{|D|\le P}|0\rangle \langle D|_R \otimes |z_{D}\rangle \langle 0|_{A},
    \]
    we have $(I_O \otimes F)|\Psi_{P}\rangle\sim |\psi\rangle_{OA}|0\rangle_R$. Taking a constant $c\in \mathbb{R}_{> 0}$ sufficiently small, we obtain $K:=cF$ with $||K||_{op}\le 1$. From the singular value decomposition of $K$, we observe that $K$ can be implemented by postselection. 
    
    Taking a supremum over $|\psi\rangle \in \mathcal L_{P}$ and online algorithms, we get
    \[
        \mathsf{Adv}_{\mathcal G}^{\mathsf {BF}}(P,T)=\sup_{\mathcal A}||\Pi_{\le P} M_{\mathcal E}\Pi_{\le P}||_{op}.
    \]
    Therefore,
    \[
        |\E_{H}[\Tr(\rho_HE^H)]| \le \sqrt{2}\mathsf{Adv}_{\mathcal G}^{\mathsf {BF}}(Sr,T).
    \]

\paragraph{Salting and the database contribution.}
    From the discussion above, it suffices to bound $||\Pi_{\le P} M_{\mathcal E}\Pi_{\le P}||_{op}$. 
    For many decision games like PRGs and Yao's box, it is already studied in~\cite{CGLQ20}; thus we can immediately get an improved bound for these decision games using the new operator-norm-based reduction. 
    We also improve a term in the PRG security using the polynomial method \cite{FOCS:BBCMW98}. We omit this detail in this overview.
    Next, we will mostly focus on salting.
    
    As $M_{\mathcal E}$ is Hermitian, it suffices to bound $|\langle \psi|M_{\mathcal E}|\psi\rangle|$ for any $|\psi\rangle \in \mathcal L_{P}$. 
    Let $\Pi_{=j}$ be the orthogonal projection to the space of joint states having database size exactly $j$.
    We define the database-size observable $\Lambda :=\sum_{j} j\Pi_{= j}$ to measure the expected size of the database with which $|\psi\rangle$ is entangled. Its value is $\langle \psi | \Lambda |\psi\rangle$ and it is less than or equal to $P$ for $|\psi\rangle \in \mathcal L_{P}$.  
    We bound $|\langle \psi|M_{\mathcal E}|\psi\rangle|$ with $f(\langle \psi | \Lambda |\psi\rangle)$ for some real-valued function $f$. 
    
    We define the salted database hierarchy. Let $[K]$ be the salt space. Every salted oracle is of the form $(H_{1},\cdots,H_{K})$. Let $O_{s}$ be the oracle register for the salt $s\in [K]$. Then the joint oracle register $O$ is $O_{1}\otimes \cdots \otimes O_{K}$. 
    Each salt $s$ has its own database $D_{s}$. The salted database is $D=(D_{1},\cdots,D_{K})$ and its size is $\sum_{s\in [K]} |D_{s}|$.
    Let $\Lambda_s$ be the database-size observable on $O_s$, acting as the identity on all other registers. Define $\mathcal L_P^{(K)}$ as the subspace spanned by eigenvectors of $\sum_{s\in[K]}\Lambda_s$ with eigenvalues at most $P$. Thus every unit vector $|\psi\rangle\in\mathcal L_P^{(K)}$ satisfies
    \[
        \sum_{s\in[K]}\langle\psi|\Lambda_s|\psi\rangle\le P.
    \]
    We define the operator $M_{\mathcal E_{s}}$ in a similar way for a fixed salt $s$, acting on register $O_{s}AZ$ where $Z$ includes the oracle registers $O_{t}$ for $t\ne s$. For a uniformly random salt, the salted game operator is
    \[
        M_{\mathcal E^{(K)}}:=\frac{1}{K}\sum_{s\in [K]}M_{\mathcal E_{s}}.
    \]
    Since each $M_{\mathcal E_{s}}$ raises the database size by at most $r$, the same is true for $M_{\mathcal E^{(K)}}$.

    We prove that the adversary's advantage in the salted game is bounded by the uniform security of the base game plus $Sr/K$ for a search game and $\sqrt{Sr/K}$ for a decision game. 
    For a fixed salt $s$, we may treat all other oracle registers other than $O_{s}$ as auxiliary workspace and absorb their queries into $H_{s}$-independent local unitaries. It follows that if a joint state has $|D_{s}|=0$, it can be considered a valid base-game uniform security experiment on the oracle $H_{s}$. Let $\Pi_{s,=0}$ be the projection to the space of joint states with $|D_{s}|=0$. 

    For any $|\psi\rangle\in \mathcal L_{P}^{(K)}$, we split it into $|\psi_{0}\rangle:=\Pi_{s,=0}|\psi\rangle$ and $|\psi_{1}\rangle:=(I-\Pi_{s,=0})|\psi\rangle$. Then $|\langle \psi_{0}|M_{\mathcal E_{\mathsf s}}|\psi_{0}\rangle |$ can be bounded by its uniform security, and $|\langle \psi_{1}|M_{\mathcal E_{\mathsf s}}|\psi_{1}\rangle |$ can be bounded by $\langle \psi|\Lambda_s|\psi\rangle$ as $I-\Pi_{s,=0}\preceq \Lambda_{s}$. 
    For a search game, we exploit the fact that $M_{\mathcal E_{\mathsf s}}^{1/2}$ exists and use the triangle inequality. Then, we get
    \[
        \langle \psi|M_{\mathcal{E}_s}|\psi\rangle
        \le \left(\sqrt{\langle \psi_0|M_{\mathcal{E}_s}|\psi_0\rangle } + \sqrt{\langle \psi_1|M_{\mathcal{E}_s}|\psi_1\rangle}\right)^2
        \le \left(\sqrt{\nu_{\mathcal{G}}(T)} + \sqrt{\langle \psi|\Lambda_{s}|\psi\rangle}\right)^2.
    \]
    Averaging it over $s\in [K]$ gives $\langle \psi|M_{\mathcal{E}^{(K)}}|\psi\rangle \le  \left(\sqrt{\nu_{\mathcal{G}}(T)} + \sqrt{P/K}\right)^2$.

    For a decision game, as $-I/2\preceq M_{\mathcal E_{\mathsf s}}\preceq I/2$, it need not admit a positive semi-definite square root, and thus a triangle inequality does not apply. We expand $\langle \psi|M_{\mathcal E_{\mathsf s}}|\psi\rangle$ in terms of $\langle \psi_i|M_{\mathcal E_{\mathsf s}}|\psi_j\rangle$ where $i,j\in \{0,1\}$ and bound each term.
    This results in the weaker exponent $1/2$. 
    
    These generic bounds can be loose for specific games, as already observed in the classical setting~\cite{EC:CDGS18}. For example, when a database of expected size $L$ contributes only $O(L/N)$, averaging over salts gives an $O(P/(KN))$ contribution, yielding an $O(Sr/(KN))$ term after our reduction. The following diffuseness condition captures such additional structure.
    We say a game is a $(\gamma,\Delta)-$diffuse game if there exist $\tau_{T}\ge 0$ and $\kappa \ge 0$ such that, for any $|\psi\rangle$,
    \[
        |\langle \psi|M_{\mathcal E}|\psi\rangle | \le \tau_{T}+\kappa\left(\frac{\langle \psi|\Lambda |\psi\rangle}{\Delta}\right)^{\gamma}.
    \]
    Then, in the salting case, the same thing applies: for any $|\psi\rangle$,
    \[ 
        |\langle \psi|M_{\mathcal E_s}|\psi\rangle |  \le \tau_{T}+\kappa\left(\frac{\langle \psi|\Lambda_s |\psi\rangle}{\Delta}\right)^{\gamma}.
    \]
    Averaging the above inequality over $s\in [K]$ proves that if a base game is a $(\gamma,\Delta)-$diffuse game, then its salt term is $\kappa(Sr/K\Delta)^{\gamma}$. We prove that many games satisfy this property, including OWF, PRG, and Yao's box; this naturally gives tighter or even tight non-uniform bounds with quantum advice, as in~\Cref{tab:unsalted-bounds,tab:salted-bounds}.

\ifnum\fullpage=0
\paragraph{AI use disclosure.} ChatGPT-5.6 Sol Ultra discovered a slightly worse PRG bound in a one-shot query. 
Subsequent conversations improved and generalized the results to the current bounds. 
The original proof was written in the language of operator inequalities.
The authors interpreted the intermediate terms as the operational characterization of the game advantages and observed their relations. The current manuscript is entirely written by the authors based on this understanding.
\fi
\section{Preliminaries}\label{sec: preliminaries}
We assume that the readers are familiar with the basics of quantum computation and information, and refer to~\cite{NC10} for standard background.
A positive operator-valued measure (POVM) on a register $A$ is a family
$\{E_z\}_z$ of positive semidefinite operators satisfying $\sum_z E_z=I_A$.
For a state $\rho$ on $A$, outcome $z$ occurs with probability
$\operatorname{Tr}(\rho E_z)$. We call $E_z$ the POVM element corresponding
to outcome $z$; in particular, $0\preceq E_z\preceq I_A$.

\subsection{Compressed Oracle and Function Database Hierarchy}
\label{sec:Random Functions: The Fourier-Degree Filtration}
In this section, we briefly review several key concepts from~\cite{Zha19}. We refer the reader to the original paper for further details.

In the quantum random oracle model (QROM), a query to a fixed oracle $H\in\mathcal F:=\{H:[N]\to[M]\}$ is coherently processed by the following unitary
\[
|x\rangle|u\rangle\longmapsto|x\rangle|u+H(x)\rangle.
\]
We work with the \emph{purified} oracles by default.
In this setting, the oracle register $O$ is initialized to
\[
|\Omega_{\mathcal F}\rangle_O:=\frac{1}{\sqrt{|\mathcal F|}}\sum_{H\in\mathcal F}|H\rangle_O
\]
and the queries to the oracle are implemented by the unitary
\[
|H\rangle|x\rangle|u\rangle\longmapsto|H\rangle|x\rangle|u+H(x)\rangle.
\]


Another equivalent and useful formulation is the phase oracle:
\[
    |H\rangle|x\rangle|u\rangle \mapsto \omega_M^{uH(x)}|H\rangle|x\rangle|u\rangle
\]
where $\omega_M:=e^{2\pi i/M}$ is the primitive $M$-th root of unity.
It is well known that the output distribution of any algorithm having access to the phase oracles is identical to the output distribution with the quantum random oracles by changing the response register to the Fourier basis.
Therefore, results for one oracle imply the same results for the other.
From now on, $\mathcal O$ denotes the phase-oracle query to the underlying oracle $H$.

\begin{lemma}[{\cite[Lemma~3]{Zha19}}]
    Let $\mathcal A$ be an (unbounded) quantum algorithm making queries to
    a standard oracle. Let $\mathcal B$ be the algorithm that is identical to $\mathcal A$, except it performs $\mathsf{QFT}$ and $\mathsf{QFT}^{\dagger}$
    before and after each query. Then the output distributions of $\mathcal A$ (given access to a standard oracle) and $\mathcal B$
    (given access to a phase oracle) are identical. Therefore, a quantum random oracle can be perfectly
    simulated as a phase oracle.
    \label{lem:std oracle and phase oracle}
\end{lemma}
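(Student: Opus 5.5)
We prove the lemma by showing the two experiments are the same unitary process up to a basis change on the response register, so no step-by-step simulation argument is needed.

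\textbf{Step 1: conjugating the standard oracle by $\mathsf{QFT}$ gives the phase oracle.} Fix $H$ and let $O_H\colon|x\rangle|u\rangle\mapsto|x\rangle|u+H(x)\rangle$ be the standard oracle, with addition modulo $M$. Use $\mathsf{QFT}|u\rangle=\frac{1}{\sqrt M}\sum_{v}\omega_M^{uv}|v\rangle$ on the response register. Write $\mathsf{QFT}^\dagger|u\rangle=\frac{1}{\sqrt M}\sum_v\omega_M^{-uv}|v\rangle$. Applying $O_H$ gives $\frac{1}{\sqrt M}\sum_v\omega_M^{-uv}|v+H(x)\rangle$. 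Substituting $w=v+H(x)$, this equals $\omega_M^{uH(x)}\cdot\frac{1}{\sqrt M}\sum_w\omega_M^{-uw}|w\rangle$. Applying $\mathsf{QFT}$ then yields $\omega_M^{uH(x)}|x\rangle|u\rangle$. Hence
\[
\mathsf{QFT}\,O_H\,\mathsf{QFT}^\dagger=\mathcal O_H,
\qquad\text{equivalently}\qquad
O_H=\mathsf{QFT}^\dagger\,\mathcal O_H\,\mathsf{QFT}.
\]
Which of $\mathsf{QFT}$ or $\mathsf{QFT}^\dagger$ comes first depends only on the sign convention and does not affect the argument. The identity also holds with the oracle register $|H\rangle$ kept coherent, because it is diagonal in $H$. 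So it applies equally to the purified oracle.

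\textbf{Step 2: the two full computations are equal.} Write $\mathcal A$ as alternating unitaries $U_0,O,U_1,\dots,O,U_T$ followed by a final measurement. Then $\mathcal B$ is $U_0,\mathsf{QFT}^\dagger\mathcal O\,\mathsf{QFT},U_1,\dots$, with the $\mathsf{QFT}$ placed before or after each query so that each conjugated phase oracle equals $O$ by Step 1. Replacing each conjugated phase oracle by $O$ shows that, for every $H$, $\mathcal B$ implements exactly the same overall unitary as $\mathcal A$. The two algorithms therefore produce identical final states and identical output distributions for every fixed $H$.

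\textbf{Step 3: average over $H$.} Since the output distributions agree for each $H$, they agree for a uniformly random $H$, or equivalently under the purified initial state $|\Omega_{\mathcal F}\rangle$. Consequently a quantum random oracle can be perfectly simulated by a phase oracle.

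The only point needing care is the sign convention in Step 1, namely whether $\mathsf{QFT}$ or $\mathsf{QFT}^\dagger$ is applied first. Either choice yields a phase $\omega_M^{\pm uH(x)}$, and the two versions are related by the relabeling $u\mapsto -u$, which is itself a unitary the algorithm can absorb.
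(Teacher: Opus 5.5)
Your proof is correct and is essentially the standard argument the paper relies on. The paper gives no proof of its own: it cites Zhandry and notes only that the equivalence follows "by changing the response register to the Fourier basis."

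Your key identity $\mathsf{QFT}^\dagger\,\mathcal O_H\,\mathsf{QFT}=O_H$ on the response register is exactly that basis change. Because the identity is diagonal in $H$, it also holds for the purified oracle, so the two computations coincide for every $H$.
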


The compressed oracle analysis shows that, after
\(P\) queries, the oracle state lies in the span of Fourier basis
states supported on at most \(P\) inputs. We formalize database hierarchies below, building on these approaches.
We follow the notation of \cite[Section~2.3]{CGLQ20}. 
We identify the set $\mathcal{F}:=\{H:[N]\rightarrow [M]\}$ with $\mathbb{Z}_{M}^{N}$
by the map $D \mapsto (D(1),...,D(N))$. 
For $D\in\mathbb{Z}_{M}^{N}$, define 
\[
    \langle D,H\rangle
    :=\sum_{x\in[N]}D(x)H(x)\pmod M,
    \qquad
    |\phi_D\rangle
    :=\frac{1}{M^{N/2}}\sum_{H\in\mathcal F}\omega_M^{\langle D,H\rangle}|H\rangle
\]
where the state $|\phi_D\rangle$ is called the Fourier (basis) state.
The set of Fourier states $\ket {\phi_D}$ forms an orthonormal basis for all $D \in \mathbb{Z}^N_M$.

Let $|D|$ denote the number of nonzero coordinates of $D$, which is often referred to as ``database size'' in the original~\cite{Zha19}.
For an oracle register $O$ and an integer $P\ge0$, define the space spanned by database states of size at most $P$ by
\[
    \mathcal{L}_P
    :=\operatorname{span}\{|\phi_D\rangle_O:|D|\le P\}
\]
Let $\Pi_{\le j}$ be the orthogonal projection onto $\mathcal L_{j}$. 
The function database hierarchy refers to the nested family
\[
\mathcal{L}_{0}\subseteq \mathcal{L}_{1}\subseteq \mathcal{L}_{2} \subseteq \cdots \subseteq \mathcal{L}_N
\]
where $\mathcal{L}_N$ is the full ambient space of the database.
A state has database size at most $P$ if it lies in $\mathcal L_P$.

The following fact shows that the database size of the states is consistent with the number of queries made by the algorithm. 
We include the proof in \cref{sec:Appendix A} for completeness.
\begin{fact}[Database Size Increase for Random Functions~\cite{Zha19}]
\label{fact:function-query-level}
Let $R$ contain the query registers and any auxiliary registers. For
every integer $P\ge0$, a phase-oracle query $\mathcal O$ on the register $OR$ satisfies
\[
    \mathcal O\bigl(\mathcal L_P\otimes\mathcal H_R\bigr)
    \subseteq\mathcal L_{P+1}\otimes\mathcal H_R.
\]
Thus one query increases the database size by at most one, even allowing
arbitrary entanglement between the oracle and the registers in $R$.
\end{fact}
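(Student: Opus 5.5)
We prove the inclusion by a direct computation in the Fourier basis. Then we extend by linearity.

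Since $\mathcal L_P\otimes\mathcal H_R$ is spanned by product vectors $|\phi_D\rangle_O|x,u,w\rangle_R$ with $|D|\le P$, where $x$ is the query input, $u$ the response register and $w$ the remaining workspace, and since $\mathcal O$ is linear, it suffices to show that each such vector is mapped into $\mathcal L_{P+1}\otimes\mathcal H_R$.

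First we compute the phase-oracle action on a Fourier state. By definition,
\[
\mathcal O\,|\phi_D\rangle|x,u,w\rangle=\frac{1}{M^{N/2}}\sum_{H\in\mathcal F}\omega_M^{\langle D,H\rangle}\omega_M^{uH(x)}|H\rangle|x,u,w\rangle .
\]
Let $e_x\in\mathbb Z_M^N$ be the indicator vector of coordinate $x$. Then $uH(x)=\langle u e_x,H\rangle$, so
\[
\langle D,H\rangle+uH(x)=\langle D+u e_x,H\rangle \pmod M .
\]
It follows that
\[
\mathcal O\,|\phi_D\rangle|x,u,w\rangle=|\phi_{D+u e_x}\rangle|x,u,w\rangle .
\]

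Next we check the database size. The vector $D+ue_x$ differs from $D$ only in coordinate $x$. Hence its number of nonzero coordinates is at most $|D|+1\le P+1$, which gives $|\phi_{D+ue_x}\rangle\in\mathcal L_{P+1}$.

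Finally, for a general vector we expand it as
\[
\sum_{|D|\le P}\sum_{x,u,w}\alpha_{D,x,u,w}|\phi_D\rangle|x,u,w\rangle .
\]
Applying $\mathcal O$ term by term and using linearity, the image lies in $\mathcal L_{P+1}\otimes\mathcal H_R$. This holds for arbitrary entangled inputs, because the argument never used a product structure beyond the choice of basis.

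No step is a real obstacle. The only point needing care is the convention that the phase oracle multiplies by $\omega_M^{uH(x)}$ with $u$ read in the computational basis of the response register. We also note that a query can even decrease the size, for instance when $D(x)+u\equiv0$, and this is consistent with the stated inclusion.
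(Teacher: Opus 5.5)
Your proof is correct and is essentially the paper's argument (in the appendix on database size increment per query). There the Fourier-basis identity $\mathcal O|\phi_D\rangle|x,u\rangle=|\phi_{D+ue_x}\rangle|x,u\rangle$ is computed, $|D+ue_x|\le|D|+1$ is noted, and the result is extended by linearity; the paper does this for the salted tuple $(\phi_{D_1},\ldots,\phi_{D_K})$, of which your statement is the case $K=1$.
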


We will use the subspaces of exact database size, which are defined by
$\mathcal D_0:=\mathcal{L}_0$ and, for $j\ge1$, $\mathcal D_j:=\mathcal{L}_j\cap\mathcal{L}_{j-1}^{\bot}$. Orthogonality of the Fourier states gives
\begin{align}
    \label{eqn: function exact size database}
    \mathcal D_j=\operatorname{span}\{|\phi_D\rangle_O:|D|=j\}.
\end{align}
We define the orthogonal projection $\Pi_{= j}$ onto $\mathcal{D}_j$ for $j\ge 0$. For $j\ge 1$, $\Pi_{= j}=\Pi_{\le j}-\Pi_{\le j-1}$ and for $j=0$, $\Pi_{=0}=\Pi_{\le 0}$. 
We define the associated database-size observable by
\[
    \Lambda=\sum_{j=1}^{N}j\Pi_{=j}.
\]
By the orthogonality of the Fourier states $|\phi_{D}\rangle$, $\Lambda$ can be rewritten as
\begin{align}
    \label{eqn: observable decomposition function}
    \Lambda=\sum_{D\in\Z_M^N}|D|\,|\phi_D\rangle\langle\phi_D|.
\end{align}
It shows that 
for any unit vector $|\psi\rangle\in \mathcal{L}_P$,  $\langle\psi|\Lambda|\psi\rangle\le P$.

\subsection{Permutation Analogue and Permutation Database Hierarchy}
\label{sec:Random Permutations: The Partial-Assignment Filtration}

In the quantum random permutation model (QRPM), the oracle is a uniformly
random permutation $H\in\mathcal F^{\perm}:=S_N$, where $S_N$ is the set of
permutations of $[N]$. 
We only consider the forward evaluation access: for a fixed $H$,
a query acts as
\[
    |x\rangle|u\rangle
    \longmapsto
    |x\rangle|u+H(x)\rangle
\]
where the addition in the response register
is modulo $N$. Thus, the query interface is the same as for random
functions, with the oracle distribution restricted to permutations.

We again use the purified oracle representation. Initialize the oracle
register $O$ to the uniform superposition of permutations
\[
    |\Omega_{\mathcal F^{\perm}}\rangle_O
    :=\frac{1}{\sqrt{N!}}\sum_{H\in S_N}|H\rangle_O,
\]
and implement each query by the controlled unitary
\[
    |H\rangle|x\rangle|u\rangle
    \longmapsto
    |H\rangle|x\rangle|u+H(x)\rangle.
\]

As in \cref{sec:Random Functions: The Fourier-Degree Filtration}, we use
the equivalent phase-oracle representation, obtained by changing the
response register to the Fourier basis:
\[
    \mathcal O:
    |H\rangle|x\rangle|u\rangle
    \longmapsto
    \omega_N^{uH(x)}|H\rangle|x\rangle|u\rangle.
\]

To track the oracle states generated by these queries, we use the database
subspaces of~\cite[Section~2.3]{ABC+26} which stem from \cite{Ros22}.
While their formalization requires the representation theory of symmetric groups, 
we only need a hierarchy structure of the permutation database. We introduce the relevant definitions and results in this section.

A permutation database in the QRPM is represented by 
a \emph{partial permutation assignment},
which is an injective function $\alpha:D\to [N]$ for a subset $D\subseteq [N]$.
The size of a partial permutation assignment (henceforth, a database) is $|\alpha|:=|D|$. 
For $H\in\mathcal F^{\perm}$, write $\alpha\subseteq H$ if $H(x)=\alpha(x)$ for every $x\in D$. 
Let $\operatorname{im}(\alpha)$ denote the image of $\alpha$. For any $x\notin D$ and $y\notin\operatorname{im}(\alpha)$, let $\alpha\cup \{x\mapsto y\}$ denote 
a larger database with domain $D\cup\{x\}$ that agrees with $\alpha$ on $D$ and maps $x$ to $y$.

Define the partial assignment state
\[
    |v_{\alpha}\rangle := \frac{1}{\sqrt{(N-|\alpha|)!}}\sum_{\alpha\subseteq H\in \mathcal{F}^{\perm}} |H\rangle.
\]
This is the uniform superposition over all permutations consistent with
the database $\alpha$. There are $(N-|\alpha|)!$ such permutations, which
explains the normalization. In particular, the empty database gives
$|v_{\varnothing}\rangle=|\Omega_{\mathcal F^{\perm}}\rangle$.

For an oracle register $O$ and an integer $P\ge0$, we define the space spanned by (permutation) database states of size at most $P$ by
\[
    \mathcal{L}_{P}^{\perm}:=\operatorname{span}\{|v_{\alpha}\rangle_O: |\alpha|\le P\}.
\]
Let $\Pi_{\le j}^{\perm}$ be the orthogonal projection onto $\mathcal{L}_j^{\perm}$. 
This gives the permutation database hierarchy
\[
\mathcal{L}_{0}^{\perm}\subseteq \mathcal{L}_{1}^{\perm}\subseteq \mathcal{L}_{2}^{\perm} \subseteq \cdots \subseteq \mathcal L_{N-1}^{\perm}
\]
where the last term is $\mathcal L_{N-1}^{\perm}$ because it is already the full ambient space.
Analogously, we say a state has database size at most $P$ if it lies in $\mathcal L_{P}^{\perm}$.

\begin{remark}
    In \cite{ABC+26}, they define the space spanned by (permutation) database states by $\mathcal{L}_{P}^{\perm}=\operatorname{span}\{|v_{\alpha}\rangle_O : |\alpha|= P\}$, i.e., the span of the database states of size exactly $P$. The two definitions are identical for $0\le P\le N$ because of the identity
    \[
        |v_{\alpha}\rangle=\frac{1}{\sqrt{N-j}}\sum_{y\not\in \operatorname{im}(\alpha)}|v_{\alpha \cup \{x\rightarrow y\}}\rangle,
    \]
    for any $|\alpha|=j<N$ and $x\not\in \operatorname{dom}(\alpha)$. If $P>N$, $\mathcal L_{P}^{\perm}$ is the full ambient space. 
\end{remark}

Similarly to \cref{fact:function-query-level}, the database size of the states is consistent with the number of queries made by the algorithm. 
See \cref{sec:Appendix A} for the proof.

\begin{fact}[Database Size Increase for Random Permutations~\cite{ABC+26}]
\label{fact:permutation-query-level}
Let $R$ contain the query registers and any auxiliary registers. For
every integer $P\ge0$, a phase-oracle query $\mathcal O$ on $OR$ satisfies
\[
    \mathcal O\bigl(\mathcal L_P^{\perm}\otimes\mathcal H_R\bigr)
    \subseteq\mathcal L_{P+1}^{\perm}\otimes\mathcal H_R.
\]
Thus one query increases the database size by at most one, allowing
arbitrary entanglement between the oracle and the registers in $R$.
\end{fact}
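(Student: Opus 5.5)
The statement is \cref{fact:permutation-query-level}: a phase-oracle query maps $\mathcal L_P^{\perm}\otimes\mathcal H_R$ into $\mathcal L_{P+1}^{\perm}\otimes\mathcal H_R$.

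\textbf{Reduction to basis vectors.} Since $\mathcal O$ is linear and $\mathcal L_P^{\perm}\otimes\mathcal H_R$ is spanned by product vectors, it suffices to treat $|v_\alpha\rangle_O|x\rangle|u\rangle|w\rangle$ with $|\alpha|\le P$. Here $x,u$ are the query input and response registers and $w$ is the rest of $R$. The query acts only on $O$ and the query registers, so $|w\rangle$ can be dropped. The case $P\ge N-1$ is trivial, because $\mathcal L_{N-1}^{\perm}$ is already the full space. So assume $|\alpha|\le P<N-1$.

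\textbf{Case 1: $x\in\operatorname{dom}(\alpha)$.} Every $H\supseteq\alpha$ has $H(x)=\alpha(x)$. Hence the query multiplies the whole state by the global phase $\omega_N^{u\alpha(x)}$, and the result stays in $\mathcal L_{|\alpha|}^{\perm}\subseteq\mathcal L_{P+1}^{\perm}$.

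\textbf{Case 2: $x\notin\operatorname{dom}(\alpha)$.} Use the refinement identity from the Remark:
\[
|v_\alpha\rangle=\frac{1}{\sqrt{N-|\alpha|}}\sum_{y\notin\operatorname{im}(\alpha)}|v_{\alpha\cup\{x\mapsto y\}}\rangle .
\]
To confirm it, count: each $H\supseteq\alpha$ appears in exactly one term, namely $y=H(x)$. Normalization checks out because
\[
\frac{1}{\sqrt{N-j}}\cdot\frac{1}{\sqrt{(N-j-1)!}}=\frac{1}{\sqrt{(N-j)!}} .
\]
On each summand, $H(x)=y$ is constant, so the query multiplies it by $\omega_N^{uy}$. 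Therefore
\[
\mathcal O\,|v_\alpha\rangle|x,u\rangle=\frac{1}{\sqrt{N-|\alpha|}}\sum_{y\notin\operatorname{im}(\alpha)}\omega_N^{uy}\,|v_{\alpha\cup\{x\mapsto y\}}\rangle|x,u\rangle .
\]
Each $|v_{\alpha\cup\{x\mapsto y\}}\rangle$ has database size $|\alpha|+1\le P+1$, so the image lies in $\mathcal L_{P+1}^{\perm}\otimes\mathcal H_R$.

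\textbf{Main point.} The only nontrivial step is the refinement identity, and it is a direct counting argument. Unlike the function case, no Fourier analysis is needed, because the basis $\{|v_\alpha\rangle\}$ is not orthogonal and only membership in a span is being claimed. Linearity then extends the conclusion to arbitrary superpositions, including those entangled with $R$.
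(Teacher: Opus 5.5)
Your proof is correct and follows the same route as the paper's argument in the appendix on database size increment, which is written for the salted $K$-fold product and specializes to $K=1$. Both reduce by linearity to $|v_\alpha\rangle|x,u\rangle$, note a global phase $\omega_N^{u\alpha(x)}$ when $x\in\operatorname{dom}(\alpha)$, and otherwise refine $|v_\alpha\rangle$ over $y\notin\operatorname{im}(\alpha)$, so that each piece $|v_{\alpha\cup\{x\mapsto y\}}\rangle$ picks up the phase $\omega_N^{uy}$ and has size $|\alpha|+1$. (Your separate treatment of $P\ge N-1$ is harmless but unnecessary, since Case 2 already forces $|\alpha|\le N-1$ and the definition of $\mathcal L_{P+1}$ covers sizes up to $N$.)
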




The subspaces of exact database size for permutations are defined analogously to the function case.
Define $\mathcal D_0^{\perm}:=\mathcal{L}_0^{\perm}$ and, for $j\ge1$, define $\mathcal D_j^{\perm}:=\mathcal{L}_j^{\perm}\cap(\mathcal{L}_{j-1}^{\perm})^\bot$. 
We define the orthogonal projection $\Pi_{= j}^{\perm}$ onto $\mathcal{D}_j^{\perm}$ for $j\ge 0$. For $j\ge 1$, $\Pi_{= j}^{\perm}:=\Pi_{\le j}^{\perm}-\Pi_{\le j-1}^{\perm}$ and for $j=0$, $\Pi_{=0}^{\perm}:=\Pi_{\le 0}^{\perm}$.
Note that
\begin{align}
    \label{eqn: permutation orthogonal decomposition}
    \sum_{j=0}^{N-1} \Pi_{=j}^{\perm} = \Pi_{\le N-1}^{\perm} = I, \qquad \Pi_{=i}^{\perm}\Pi_{=j}^{\perm} = \delta_{i,j} \Pi_{=i}^{\perm}
\end{align}
where $\delta_{i,j}=1$ if and only if $i=j$. This gives an orthogonal decomposition of the space of permutations.

The associated database-size observable is defined by
\[
    \Lambda^{\perm}:=\sum_{j=1}^{N-1}j \Pi_{=j}^{\perm}.
\]
We remark that the vectors $|v_\alpha\rangle$ are not mutually orthogonal in the permutation case.
Consequently,
the subspaces of exact database size and the database-size observable do not admit the simple basis description as in the function case (see \cref{eqn: function exact size database,eqn: observable decomposition function}).

For permutations, database size is measured by the orthogonal hierarchy sectors using \cref{eqn: permutation orthogonal decomposition} instead; $|v_\alpha\rangle$ may have components at several sizes, all at most $|\alpha|$.
For any unit vector $|\psi\rangle\in\mathcal{L}_P^{\perm}$ with $P\le N-1$, decompose $|\psi\rangle$ orthogonally as $|\psi\rangle=\sum_{j=0}^P|\psi_j\rangle$, where $|\psi_j\rangle:=\Pi_{= j}^{\perm}|\psi\rangle\in\mathcal D_j^{\perm}$. Then
\[
    \langle\psi|\Lambda^{\perm}|\psi\rangle
    =\sum_{j=1}^P j\cdot \lVert\psi_j\rVert_2^2\le P.
\]

\subsection{Uniform One-Point Reprogramming}
We introduce one-point reprogramming. Let $H\in\mathcal F$ and $h\in[M]$. For each $x\in[N]$, define the
one-point reprogrammed function
\[
    H_{x\to h}(w)
    :=
    \begin{cases}
        h, & w=x,\\
        H(w), & w\ne x.
    \end{cases}
    \label{eq:one-point-reprogrammed-function}
\]

The following lemma bounds the change in acceptance probability when
the reprogrammed point is uniformly random and independent of the
circuit. 
We present a direct proof
of the stated version using the polynomial method \cite{FOCS:BBCMW98} in \Cref{sec:random_reprogramming}. 

\begin{lemma}[Uniform one-point reprogramming]
    \label{lem:uniform-one-point-reprogramming}
    Fix an arbitrary oracle $H:[N]\to\mathbb{Z}_M$,
    $h\in\mathbb{Z}_M$, and a quantum circuit making at most $T$
    oracle queries. Suppose that its initial state, inter-query
    operations, and final measurement are independent of
    $x\sample[N]$.
    Write $a(x)$ for the circuit's acceptance
    probability with oracle $H_{x\to h}$ and $a(\varnothing)$ for
    its acceptance probability with oracle $H$. Then
    \[
        \left|
            \E_{x\sample[N]}[a(x)]-a(\varnothing)
        \right|
        \le
        \frac{4T^2}{N}.
    \]
\end{lemma}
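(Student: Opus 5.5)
We prove the bound with the polynomial method of \cite{FOCS:BBCMW98}, applied to one-hot indicator variables. Encode the oracle by variables $z_{w,v}:=\mathbf 1[G(w)=v]$ for $w\in[N]$, $v\in\mathbb Z_M$. A $T$-query circuit with fixed initial state, fixed inter-query unitaries and a fixed final measurement has acceptance probability $p(z)$ that is a real polynomial of degree at most $2T$ in these variables. The circuit is the same for every $x$, since none of its components depend on $x$, so a single polynomial $p$ gives both $a(x)=p(z^{(x)})$ and $a(\varnothing)=p(z^{(\varnothing)})$. Here $z^{(x)}$ is the encoding of $H_{x\to h}$. If $H(x)=h$, then $a(x)=a(\varnothing)$ and that term contributes nothing. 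So we may restrict attention to the set $X$ of points $x$ with $H(x)\neq h$.

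Next we introduce a hybrid parameter. For $t\in[0,1]$, define the fractional input $z^{(t)}$ that agrees with $H$ except at coordinates $x\in X$, where
\[
z^{(t)}_{x,H(x)}=1-\tfrac{t}{N},\qquad z^{(t)}_{x,h}=\tfrac{t}{N}.
\]
Write $q(t):=p(z^{(t)})$. Because $p$ is multilinear of degree at most $2T$, $q$ is a univariate polynomial of degree at most $2T$, with $q(0)=a(\varnothing)$. Expanding $p$ around $z^{(0)}$ by multilinearity, the coefficient of $t$ in $q$ equals $\frac1N\sum_{x\in X}\bigl(a(x)-a(\varnothing)\bigr)$, which is exactly $\E_x[a(x)]-a(\varnothing)$. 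It therefore suffices to bound $|q'(0)|\le 4T^2/N$.

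The main step is to realize $q(t)$ as a genuine acceptance probability and then apply Markov's inequality. We would like to extend $t$ up to $N$. For an integer $k$ with $t=k$, the point $z^{(k)}$ is the expectation over the following random oracle: each $x\in X$ is independently reprogrammed to $h$ with probability $k/N$. This works because $p$ is multilinear and the coordinates are independent; more precisely, the expectation of $p$ under independent product distributions equals $p$ evaluated at the marginals. So $q(t)\in[0,1]$ for every real $t\in[0,N]$: the reprogramming probability $t/N$ lies in $[0,1]$, and $q(t)$ is the acceptance probability against a randomized classical oracle.

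Markov's inequality for a degree-$d$ polynomial bounded by $1$ in absolute value on an interval of length $N$ gives $|q'|\le 2d^2/N$ on that interval. With $d=2T$ this yields $8T^2/N$. To reach $4T^2/N$, we use that $q$ maps into $[0,1]$, so $q-\tfrac12$ is bounded by $\tfrac12$ in absolute value; this gives $|q'(0)|\le (2T)^2/N=4T^2/N$.

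The only delicate points are two verifications. First, the multilinear-expectation identity must hold for all real $t$, so that boundedness covers the whole interval and not just integer points. Second, the degree bound $2T$ must be checked for the one-hot encoding under phase queries; this follows because each query's action is linear in the $z_{w,v}$ once one notes that $\sum_v z_{w,v}=1$.
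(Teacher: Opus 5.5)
Your argument follows the paper's proof, only in a different encoding. Both use an acceptance polynomial of degree at most $2T$, a hybrid in which each point is independently reprogrammed to $h$ with probability $t/N$ for $t\in[0,N]$, and read $\E_{x}[a(x)]-a(\varnothing)$ off the linear coefficient of $q$. Both finish with Markov's inequality applied to $2q-1$, giving $(2T)^2/N$.

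One step fails as you wrote it, though. You justify $q(t)=\E[\text{acceptance}]$ by saying ``$p$ is multilinear and the coordinates are independent.'' In the one-hot encoding the coordinates are \emph{not} independent under your random oracle: within the block of $x$ you always have $z_{x,H(x)}=1-z_{x,h}$. So multilinearity in the usual sense ($z^2\mapsto z$) is not enough. The natural BBCMW polynomial contains monomials such as $z_{x,H(x)}z_{x,h}$, for example from paths that query $x$ twice. Such a monomial vanishes on every genuine oracle but equals $(1-t/N)(t/N)$ at $z^{(t)}$. It would break $q(t)\in[0,1]$ on $[0,N]$, which Markov's inequality needs. It would also break your identification of the linear coefficient, since it contributes $1/N$ to $q'(0)$ but nothing to $a(x)-a(\varnothing)$. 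Equivalently, without further reduction, $p(z^{(t)})$ is the acceptance probability of a circuit run with the non-unitary ``query'' $(1-t/N)S_{H(w)}+(t/N)S_h$, which is not bounded a priori.

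The repair is to also reduce $p$ modulo $z_{w,v}z_{w,v'}=\delta_{v,v'}z_{w,v}$. These relations hold on all one-hot inputs and do not raise the degree. Afterwards each monomial contains at most one variable per point, so independence \emph{across} points is all that the product-expectation identity and the coefficient computation require.

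Equivalently, substitute $z_{x,H(x)}=1-y_x$ and $z_{x,h}=y_x$, then multilinearize in $y$. This is exactly the paper's choice of one Boolean variable per point, with the query $U_z=\sum_w|w\rangle\langle w|\otimes\bigl((1-z_w)S_{H(w)}+z_wS_h\bigr)$ affine in $z$. That encoding avoids the issue from the start and handles the case $H(x)=h$ automatically. Once this is fixed, the extension to real $t$, which you flagged as the delicate point, is immediate.
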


\section{Connecting Non-uniform Security with Bit-Fixing Security}
\label{sec:The degree-Restricted Moment Method for Oracle-Dependent Advice}

This section gives a new reduction from the non-uniform security to the bit-fixing security. 
Our reduction gives a better bound for the decision games, improving upon \cite{CGLQ20,Liu23}.
To this end, we provide an operational view of the security games. 

\subsection{Game, QROM and QRPM}

A game $\mathcal G=(\mathcal C)$, where $\mathcal C=(\mathsf{Sample},\mathsf{Query},\mathsf{Verify})$, is an interactive protocol between a challenger and an adversary $\mathcal A$. 
Before beginning the game, we sample a uniformly random function $H:[N]\to[M]$ in the QROM or a uniformly random permutation $H\in S_N$ in the QRPM. Throughout this section, $\mathcal F$ denotes the corresponding oracle space, and $\{\mathcal L_P\}_{P\ge0}$ denotes its database hierarchy from \cref{sec: preliminaries}.

The game $\mathcal G$ proceeds as follows.
The challenger runs $\mathsf{Sample}^H$ and gives the resulting challenge $y$ to $\mathcal A$. The adversary accesses $H$ through $\mathsf{Query}^H$, which is $H$ itself in most games except for Yao's box problem.
Each query increases the database size by at most one.
The challenger then runs $\mathsf{Verify}^H$ and stores the resulting bit in the verification register $\mathsf V$.

For any joint register $R$ of the algorithm and the challenger (which is disjoint from the oracle register $O$), we extend the definition of the bounded-size database spaces, associated projections, and the database-size observable by
\[
\mathcal L_P(R):=\mathcal L_P\otimes\mathcal H_R,\qquad
\Pi_{\le P}(R):=\Pi_{\le P}\otimes I_R,\qquad
\Lambda(R):=\Lambda\otimes I_R.
\]
We use the same convention for exact size spaces and their projectors and implicitly extend operators by appending the identity on untouched registers. We omit $R$ when it is clear from the context.

\subsection{Bit-Fixing, Quantum Auxiliary-Input, and Advantage}

\paragraph{Bit-Fixing Model.}
We use the quantum bit-fixing framework of~\cite{TCC:GLLZ21}.
\begin{definition}[Bit-Fixing Model]
    A $(P,T)$-bit-fixing adversary $\mathcal A$ for a game $\mathcal G=(\mathcal C)$, where $\mathcal C=(\mathsf{Sample},\mathsf{Query},\mathsf{Verify})$, consists of the following two stages.
    \begin{itemize}[label=--, leftmargin=*]
        \item \textbf{Offline phase.}
    \begin{enumerate}[leftmargin=*, topsep=0pt, itemsep=0pt]
        \item A uniformly random oracle $H\in\mathcal F$ is sampled;
        
        \item The offline algorithm makes at most $P$ quantum queries to $H$,
        and outputs a bit $b$ such that $\Pr[b=0]>0$;
        
        \item If $b \neq 0$, the process restarts from Step~1.
    \end{enumerate}
    
        \item \textbf{Online phase.}
    \begin{enumerate}[leftmargin=*, topsep=0pt, itemsep=0pt]
        \item The challenger runs \(\mathsf{Sample}^H\), gives the resulting
        challenge \(y\) to the online algorithm, and retains
        the state required for the verification;
        
        \item On input \(y\), the online algorithm makes at most
        \(T\) additional quantum queries through \(\mathsf{Query}^H\) and outputs
        \(\mathsf{ans}\);

        \item The challenger runs \(\mathsf{Verify}^H\).
    \end{enumerate}
    \end{itemize}
\end{definition}

The purified oracle version of the bit-fixing model is naturally defined by starting with the initial state
\[
|\phi_0\rangle:=|\Omega_{\mathcal F}\rangle_O|0\rangle_{AZB},
\]
where $O$ denotes the oracle register,
$A,Z$ denote the adversary's workspace and auxiliary register, and $B$ denotes the flag qubit register, respectively.
The final step of the offline algorithm can be understood as a postselection. 
The offline algorithm outputs the postselected state $|\psi_{0}\rangle_{OAZB}$, and it becomes the input of the online algorithm. The online algorithm can access register $A$ but not registers $ZB$. Let $W$ denote the joint register of all other online workspace registers. 
We refer the reader to \cite[Section 2.2]{ABC+26} for a more detailed discussion.

%

\paragraph{Quantum Auxiliary-Input Model.}

\begin{definition}[Quantum Auxiliary-Input Model]
    An $(S,T)$-quantum-auxiliary-input adversary $\mathcal A$ for a game $\mathcal G=(\mathcal C)$, where $\mathcal C=(\mathsf{Sample},\mathsf{Query},\mathsf{Verify})$, consists of the following single stage.
    \begin{enumerate}[leftmargin=*, topsep=0pt, itemsep=0pt]
        \item  A uniformly random oracle $H\in\mathcal F$ is sampled;
        
        \item $\mathcal A$ receives an $S$-qubit auxiliary-input state $\rho_H$ in register $A$, which may depend on $H$;

        \item The challenger runs \(\mathsf{Sample}^H\), gives the resulting
          challenge \(y\) to \(\mathcal A\), and retains the state
          required for the verification;

        \item $\mathcal A$ makes at most $T$
        quantum queries to $\Query^H$, and outputs an answer $\mathsf{ans}$;

        \item The challenger runs \(\mathsf{Verify}^H\).
    \end{enumerate}
\end{definition}

With respect to the purified oracles, we can view the input to the online algorithm as the mixed state, up to normalization,
\[
\sum_{H\in \mathcal F} \ketbra{H} \otimes \rho_H.
\]


\paragraph{Adversary's Advantage.}

We define the adversary's advantage according to whether the game is a search or decision game and whether the adversary is in the bit-fixing or quantum auxiliary-input model.
For a search game $\mathcal G$, the adversary's advantage is defined to be the supremum of the winning probability
\[
    \delta^{\mathsf{QAI}}_{\mathcal G}(S,T):=\sup_{\mathcal A}\Pr[\mathcal A\text{ wins}], \qquad \delta^{\mathsf{BF}}_{\mathcal G}(P,T):=\sup_{\mathcal A}\Pr[\mathcal A\text{ wins}]
\]
where the first supremum is over the $(S,T)$-auxiliary-input adversaries, and the second is over the $(P,T)$-bit-fixing adversaries.

For a decision game $\mathcal G$, the adversary's advantage is the supremum of the absolute bias of its winning probability from $1/2$:
\[
    \epsilon_{\mathcal G}^{\mathsf{QAI}}(S,T):=\sup_{\mathcal A}\left|\Pr[\mathcal A\text{ wins}]-\frac{1}{2}\right|, \qquad \epsilon^{\mathsf{BF}}_{\mathcal G}(P,T):=\sup_{\mathcal A}\left|\Pr[\mathcal A\text{ wins}]-\frac{1}{2}\right|
\]
where the ranges of the supremum are the same as above.

When a statement applies to either type of game, we simply write
\[
    \mathsf{Adv}^{\mathsf{QAI}}_{\mathcal G}(S,T),\qquad \mathsf{Adv}_{\mathcal G}^{\mathsf{BF}}(P,T)
\]

\subsection{Game Operators and Their Purified Representation}


Recall that $A$ denotes the workspace of the algorithm. In the bit-fixing model, this register may contain some state after the postselection. In the quantum auxiliary input model, this register includes an $S$-qubit advice state depending on the oracle.

\begin{theorem}
Fix a game $\mathcal G=(\mathcal C)$ and the algorithm $\mathcal A$ used after the challenge is generated. There exists a game-operator family $\mathcal E:=\{E^H\}_{H\in\mathcal F}$ on register $A$, induced by $(\mathcal G,\mathcal A)$, such that every $E^H$ is Hermitian and $\|E^H\|_{\mathrm{op}}\le1$; moreover, $0\preceq E^H\preceq I$ if $\mathcal G$ is a search game. Define the corresponding purified game operator on $OA$ by
\[
M_{\mathcal E}:=\sum_{H\in\mathcal F}|H\rangle\langle H|_O\otimes E^H.
\]
For a quantum auxiliary-input adversary using $\mathcal A$ with advice family $\{\rho_H\}_{H\in\mathcal F}$, the advantage of this adversary is
\[
|\mathbb E_H[\operatorname{Tr}(\rho_H E^H)]|.
\]
For a bit-fixing adversary whose online algorithm is $\mathcal A$, let $|\psi_0\rangle_{OAZB}$ be the joint state produced by its offline phase, including the algorithm's register as well as oracle registers. The advantage of this adversary is
\[
|\langle\psi_0|(M_{\mathcal E}\otimes I_{ZB})|\psi_0\rangle|.
\]
\end{theorem}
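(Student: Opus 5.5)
The plan is to construct $E^H$ explicitly from a purification of the whole online phase, and then read off both advantage formulas by direct computation.

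\textbf{Step 1: purify the online phase.} Fix $H$ and the online algorithm $\mathcal A$. The online phase consists of $\mathsf{Sample}^H$, then $\mathcal A$'s interaction through $\mathsf{Query}^H$, then $\mathsf{Verify}^H$. Purifying all intermediate measurements and randomness in the standard way, it becomes a single unitary $V^H$ acting on $AW$. Here $W$ collects the challenger's registers, the online workspace, the query registers and the verification register $\mathsf V$, and $W$ is initialized to $|0\rangle_W$. Let $\Pi_{\mathsf T}:=I_A\otimes|1\rangle\langle1|_{\mathsf V}$ (with identity on the rest of $W$) and $\Pi_{\mathsf F}:=I-\Pi_{\mathsf T}$. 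Writing $J:=I_A\otimes|0\rangle_W$, which is an isometry, define
\[
E^H_{\mathsf T}:=J^\dagger (V^H)^\dagger\Pi_{\mathsf T}V^H J,\qquad E^H_{\mathsf F}:=J^\dagger (V^H)^\dagger\Pi_{\mathsf F}V^H J .
\]
Set $E^H:=E^H_{\mathsf T}$ for a search game and $E^H:=(E^H_{\mathsf T}-E^H_{\mathsf F})/2$ for a decision game.

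\textbf{Step 2: the operator properties.} Each $E^H_{\mathsf T}$ is a compression of a projection by an isometry, so $0\preceq E^H_{\mathsf T}\preceq I$, and likewise $0\preceq E^H_{\mathsf F}\preceq I$. Moreover $E^H_{\mathsf T}+E^H_{\mathsf F}=J^\dagger J=I_A$. Hence in the decision case $E^H=E^H_{\mathsf T}-I/2$, so $-I/2\preceq E^H\preceq I/2$. In both cases $E^H$ is Hermitian with $\|E^H\|_{\mathrm{op}}\le1$, and it is positive semidefinite in the search case.

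\textbf{Step 3: the QAI formula.} For a fixed $H$, the winning probability is $\Tr(\Pi_{\mathsf T}V^H(\rho_H\otimes|0\rangle\langle0|_W)(V^H)^\dagger)$, which equals $\Tr(\rho_H E^H_{\mathsf T})$ by cyclicity of the trace. Averaging over $H$ gives the search case. In the decision case,
\[
\Pr[\text{win}]-\tfrac12=\E_H\bigl[\Tr\bigl(\rho_H(E^H_{\mathsf T}-I/2)\bigr)\bigr]=\E_H[\Tr(\rho_HE^H)],
\]
using $\Tr\rho_H=1$. Taking absolute values gives the stated expression; in the search case the absolute value is harmless since the quantity is nonnegative.

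\textbf{Step 4: the bit-fixing formula.} The main obstacle is the bit-fixing case, because the online computation is controlled by the oracle register $O$ and $|\psi_0\rangle$ may be entangled across $OAZB$. Since every oracle query is diagonal in the $|H\rangle_O$ basis, the purified online unitary is $\mathbf V:=\sum_H|H\rangle\langle H|_O\otimes V^H$, tensored with $I_{ZB}$. The final acceptance probability is then
\[
\|\Pi_{\mathsf T}\mathbf V(|\psi_0\rangle|0\rangle_W)\|^2=\sum_H\langle\psi_0|\bigl(|H\rangle\langle H|\otimes E^H_{\mathsf T}\otimes I_{ZB}\bigr)|\psi_0\rangle ,
\]
where the cross terms between different $H$ vanish because $\mathbf V$ is block-diagonal in $O$. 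This expression is $\langle\psi_0|(M_{\mathcal E_{\mathsf T}}\otimes I_{ZB})|\psi_0\rangle$. The decision case again follows by subtracting $\tfrac12\langle\psi_0|\psi_0\rangle=\tfrac12$ for the normalized postselected state. The one point that needs care here is that the postselection is already absorbed into $|\psi_0\rangle$, and the verification and sampling queries are also controlled by $O$. Both of these hold by the purified formulation of the bit-fixing model.
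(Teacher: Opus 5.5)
Your proposal is correct and follows essentially the same route as the paper: you purify the sampler, online algorithm, and verifier into $V^H$, compress the accept and reject projectors by $I_A\otimes|0\rangle_W$, and obtain the QAI formula by cyclicity of the trace. Your added details, namely $E^H_{\mathsf T}+E^H_{\mathsf F}=I_A$ (so $E^H=E^H_{\mathsf T}-I/2$) and the explicit block-diagonal computation in $O$ for the bit-fixing case, spell out what the paper states in one line: the $O$-controlled experiment pulls back to $M_{\mathcal E}\otimes I_{ZB}$.
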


\begin{proof}
Let $W$ collect all registers used in the game $\mathcal G$ by the sampler, the adversary $\mathcal A$, and the verifier other than $A$; these registers are initialized to $|0\rangle_W$.
By purifying the computation and deferring measurements, we may assume that, for each fixed $H\in\mathcal F$, the sampler, $\mathcal A$, and the verifier are jointly implemented by a unitary $V^H$.

\paragraph{Search Game.}
Let $\Pi:=|1\rangle \langle 1|_{\mathsf V}\otimes I$ be the projection onto the accepting subspace acting on the verify register $\mathsf {V}$. 
Conditioned on $H$, the winning probability of $\mathcal A$ is
\[
    \Pr[\mathsf{win}\mid H]
    =\Tr\!\left(\Pi V^H(\rho_H\otimes|0\rangle\langle0|_W)(V^H)^\dagger\right).
\]
Let $E^H$ be the operator defined by
\[
    E^H
    :=(I_A\otimes\langle0|_W)(V^H)^\dagger\Pi V^H(I_A\otimes|0\rangle_W).
\]
This is the POVM element on the initial advice register $A$ corresponding
to the winning outcome, so that
$\Pr[\mathsf{win}\mid H]=\operatorname{Tr}(\rho_H E^H)$.

\paragraph{Decision Game.}

Let $\Pi_{\mathsf{T}}:=|1\rangle \langle 1|_{\mathsf V}\otimes I$ and $\Pi_{\mathsf{F}}:=|0\rangle \langle 0|_{\mathsf V}\otimes I$ be the final projectors
corresponding to a true and a false decision, respectively.
Define the POVM elements on $A$ corresponding to these two outcomes:
\[
    E_{\mathsf{T}}^H
    :=
    (I_A\otimes\langle0|_W)
    (V^H)^\dagger
    \Pi_{\mathsf{T}}
    V^H
    (I_A\otimes|0\rangle_W),
\]
\[
    E_{\mathsf{F}}^H
    :=
    (I_A\otimes\langle0|_W)
    (V^H)^\dagger
    \Pi_{\mathsf{F}}
    V^H
    (I_A\otimes|0\rangle_W),
\]
and define the decision-game operator as half their difference
\[
    E^H
    :=
    \frac{1}{2}(E_{\mathsf{T}}^H-E_{\mathsf{F}}^H).
\]

We use the same notation $E^H$ in both cases, although the construction differs. In the search case, $0\preceq E^H\preceq I$; in either case, $E^H$ is Hermitian and $\|E^H\|_{\mathrm{op}}\le1$.
The fixed-oracle identities above hold for any input state on $A$.
In the quantum auxiliary-input model, this state is $\rho_H$,
and averaging over uniformly random $H$ gives the claimed formula.
In the bit-fixing model, the online phase starts from the normalized
postselected state $|\psi_0\rangle_{OAZB}$ produced by the offline phase.
The online algorithm receives register $A$ from the offline phase,
while $ZB$ remain untouched.
The same operators $E^H$ apply, since their construction depends only
on the online experiment, not on how its input state was prepared.
Since the joint experiment is controlled by $O$ and acts trivially on $ZB$, its pulled-back operator on $OAZB$ is $M_{\mathcal E}\otimes I_{ZB}$, which gives the bit-fixing formula.
\end{proof}

\subsection{QAI Security from Database-Size-Restricted Operator Bounds}

The following theorem bounds quantum auxiliary-input advantage by
bit-fixing advantage. We first characterize bit-fixing security by database-size-restricted norms of the purified game operators, and then prove the reduction using a moment argument.

\begin{theorem} \label{thm:QAI bounded by BF}
    Let $\mathcal G=(\mathcal C)$ be a game with $\mathcal C=(\Samp,\Query,\Ver)$. Let $T_{\Samp},T,T_{\Ver}$ be the numbers of queries that the sampler, adversary, and verifier make, respectively. Then,
    \[
        \mathsf{Adv}^{\mathsf{QAI}}_{\mathcal G}(S,T) \le \sqrt{2}\mathsf{Adv}^{\mathsf{BF}}_{\mathcal G}(2S(T_{\Samp}+T+T_{\Ver}),T).
    \]
\end{theorem}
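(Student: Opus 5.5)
The proof will follow the route of the technical overview, in two parts. First, the bit-fixing advantage is the database-size-restricted operator norm of $M_{\mathcal E}$. Second, the QAI advantage is bounded by that norm through a $2k$-th moment argument.

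\emph{Part 1 (characterization of bit-fixing).} By the previous theorem, a $(P,T)$-bit-fixing adversary with online algorithm $\mathcal A$ has advantage $|\langle\psi_0|(M_{\mathcal E}\otimes I_{ZB})|\psi_0\rangle|$.

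For the upper direction, the offline state $|\psi_0\rangle$ lies in $\mathcal L_P(AZB)$. Indeed, \cref{fact:function-query-level} (resp.\ \cref{fact:permutation-query-level}) shows that $P$ queries keep the state in $\mathcal L_P$. Non-query operations and postselection act only on $AZB$, so after normalization the state stays in $\mathcal L_P$. Hence the advantage is at most $\|\Pi_{\le P}M_{\mathcal E}\Pi_{\le P}\|_{op}$, with registers $ZB$ absorbed.

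For the converse, take any unit $|\psi\rangle=\sum_{|D|\le P}|\phi_D\rangle|z_D\rangle\in\mathcal L_P$, written in the Fourier basis (for permutations, use a spanning family and a Gram-matrix correction). Prepare $\sum_D|D\rangle_R$ and, controlled on $D$, make $|D|\le P$ phase queries. These map $|\Omega\rangle$ to $|\phi_D\rangle$, since a phase query on input $(x,u)$ shifts $D(x)$ by $u$. Then apply a rescaled contraction $cF$, with $F=\sum_D|0\rangle\langle D|_R\otimes|z_D\rangle\langle0|_A$. This contraction is realized by a unitary dilation followed by postselection on a flag. The result is $|\psi\rangle|0\rangle_R$ up to normalization.

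Since $M_{\mathcal E}$ is Hermitian, $\|\Pi M\Pi\|_{op}=\sup_{\psi}|\langle\psi|M|\psi\rangle|$. Combining both directions gives
\[
\sup_{\mathcal A}\|\Pi_{\le P}M_{\mathcal E}\Pi_{\le P}\|_{op}=\mathsf{Adv}^{\mathsf{BF}}_{\mathcal G}(P,T).
\]
In fact only the upper bound $\|\Pi_{\le P}M_{\mathcal E}\Pi_{\le P}\|_{op}\le\mathsf{Adv}^{\mathsf{BF}}_{\mathcal G}(P,T)$ is needed for this theorem, and it comes from the converse construction.

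\emph{Part 2 (moment argument).} Fix the QAI adversary. Write $d=\dim\mathcal H_A\le 2^S$. Then
\[
|\E_H\Tr(\rho_HE^H)|\le \E_H\|E^H\|_{op}\le(\E_H\Tr (E^H)^{2k})^{1/2k}.
\]
Let $|v_0\rangle=|\Omega_{\mathcal F}\rangle|\Phi\rangle_{AR}$ and $|v_j\rangle=M_{\mathcal E}^j|v_0\rangle$. A direct computation gives $\||v_k\rangle\|^2=\E_H\frac1d\Tr(E^H)^{2k}$, because $M_{\mathcal E}$ is block-diagonal in $H$.

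Next, $M_{\mathcal E}=\sum_H|H\rangle\langle H|\otimes(\cdots)$ is the pullback of $(\langle0|_W)V^\dagger\Pi V(|0\rangle_W)$ with a purified oracle. So, by the query facts, $M_{\mathcal E}$ maps $\mathcal L_P(AR)$ into $\mathcal L_{P+r}(AR)$, where $r=2(T_{\Samp}+T+T_{\Ver})$: $V$ makes $r/2$ queries, $V^\dagger$ makes another $r/2$, and projections onto $|0\rangle_W$ and $\Pi$ do not touch $O$. Thus $|v_j\rangle\in\mathcal L_{jr}$ and $|v_{j+1}\rangle=\Pi_{\le kr}M_{\mathcal E}\Pi_{\le kr}|v_j\rangle$ for $j<k$. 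Iterating,
\[
\||v_k\rangle\|\le\|\Pi_{\le kr}M_{\mathcal E}\Pi_{\le kr}\|_{op}^k.
\]
This yields
\[
\mathsf{Adv}\le d^{1/2k}\|\Pi_{\le kr}M\Pi_{\le kr}\|_{op}\le 2^{S/2k}\mathsf{Adv}^{\mathsf{BF}}(kr,T).
\]
Choosing $k=S$ gives the factor $\sqrt2$ and $P=Sr$, as claimed. Here $M$ acts on $OA$ while $R$ is untouched, which is consistent with Part 1 taking $R$ as part of $ZB$.

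\emph{Main obstacle.} The delicate step is the converse construction in Part 1, which shows that every unit vector of $\mathcal L_P$ is reachable by $P$ queries plus postselection.
\begin{itemize}
\item In the function case, preparing $|\phi_D\rangle$ coherently needs queries only on the support of $D$, with phases $D(x)$. The query count is padded to exactly $P$ with dummy queries.
\item In the permutation case, the $|v_\alpha\rangle$ are non-orthogonal. Preparing $|v_\alpha\rangle$-type states by queries is less direct.

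I would write $|\psi\rangle=\sum_\alpha |v_\alpha\rangle|z_\alpha\rangle$, which is possible for some choice of $z_\alpha$ since the $|v_\alpha\rangle$ span $\mathcal L_P$. I would then show that $\sum_\alpha|v_\alpha\rangle|\alpha\rangle$ is preparable: query $H$ on the points of $\operatorname{dom}\alpha$ and postselect the outputs to equal $\alpha$.
\end{itemize}
Since postselection only requires nonzero success probability, the nonunitary map $F$ is harmless in both cases.
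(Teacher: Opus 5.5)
Your proposal is correct and follows essentially the paper's route: the needed direction $\|\Pi_{\le P}M_{\mathcal E}\Pi_{\le P}\|_{\mathrm{op}}\le\mathsf{Adv}^{\mathsf{BF}}_{\mathcal G}(P,T)$ comes from the same prepare-then-postselect construction as in \cref{thm:bf-operator-characterization}, followed by the identical maximally-entangled-state moment argument. The differences are inessential: the paper records $H|_X$ rather than postselecting in the permutation case, and it uses graded cutoffs $\Pi_{\le (j+1)r}$ with monotonicity of the bit-fixing advantage where you use a single cutoff $\Pi_{\le kr}$. The one thing to add is the case $S=0$, where $k=S$ is not allowed; there $d=1$, and the bound follows directly from $|\langle v_0|M_{\mathcal E}|v_0\rangle|\le\|\Pi_{\le 0}M_{\mathcal E}\Pi_{\le 0}\|_{\mathrm{op}}$, which the paper handles by noting this case is trivial.
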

\Cref{thm:QAI bounded by BF} strengthens the results of~\cite{Liu23} for decision games; it establishes a previously unknown bound, which eventually enables improvement for many decision games, including PRG and Yao's box.
We defer the proof to the end of the section, after establishing the following theorem.

\begin{theorem}[{Size-Restricted Operator Norm} and Bit-Fixing Model]
\label{thm:degree-restricted-moment}
\label{thm:bf-operator-characterization}
Fix a game $\mathcal G=(\mathcal C)$ and integers $P,T\ge0$.
For each online algorithm $\mathcal A$ making at most $T$ queries through
$\Query^H$, let $\mathcal E_{\mathcal A}$ be its induced game-operator
family on its input register $A$. Then
\begin{equation}
    \mathsf{Adv}^{\mathsf{BF}}_{\mathcal G}(P,T)
    =\sup_{\mathcal A}
    \bigl\|\Pi_{\le P}M_{\mathcal E_{\mathcal A}}\Pi_{\le P}\bigr\|_{\mathrm{op}},
    \label{eq:bf-operator-characterization}
\end{equation}
where the supremum ranges over all such online algorithms and their
input-register choices. For each fixed online algorithm, its database-size-restricted
operator norm equals the optimal bit-fixing advantage over offline
phases making at most $P$ queries.
\end{theorem}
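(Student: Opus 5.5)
We will prove the equality in \eqref{eq:bf-operator-characterization} for each fixed online algorithm $\mathcal A$; taking suprema then gives the statement. Throughout, $\mathcal A$ is fixed and we write $M:=M_{\mathcal E_{\mathcal A}}$, which is Hermitian with $\|M\|_{\mathrm{op}}\le 1$. Because $\Pi_{\le P}M\Pi_{\le P}$ is Hermitian, its operator norm equals $\sup_{|\psi\rangle}|\langle\psi|M\otimes I|\psi\rangle|$ over unit vectors $|\psi\rangle\in\mathcal L_P(AZB)$. The auxiliary registers $ZB$ do not change this supremum, since $M\otimes I_{ZB}$ restricted to $\mathcal L_P\otimes\mathcal H_{ZB}$ has the same norm as its restriction to $\mathcal L_P(A)$. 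By the theorem of the previous subsection, the advantage of a bit-fixing adversary with online part $\mathcal A$ is exactly $|\langle\psi_0|M\otimes I_{ZB}|\psi_0\rangle|$. So it suffices to show that the set of normalized postselected offline states equals the set of unit vectors of $\mathcal L_P(AZB)$, or is at least dense in it.

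\emph{Upper bound (offline states lie in $\mathcal L_P$).} Before postselection, the offline state is $U_P\mathcal O U_{P-1}\cdots\mathcal O U_0|\Omega_{\mathcal F}\rangle|0\rangle$, where each $U_i$ acts only on the non-oracle registers. The initial state lies in $\mathcal L_0$. Operators acting only on $AZB$ preserve each subspace $\mathcal L_j\otimes\mathcal H$, and each query increases the database size by at most one by \cref{fact:function-query-level} (or \cref{fact:permutation-query-level} for permutations). Hence the state lies in $\mathcal L_P$. The postselection on $b=0$ is the projector $I_O\otimes|0\rangle\langle0|_B$, which commutes with $\Pi_{\le P}$, and renormalizing also keeps the state in $\mathcal L_P$. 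This gives "$\le$".

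\emph{Lower bound (every unit vector in $\mathcal L_P$ is reachable).} This is the main step. Fix a unit vector $|\psi\rangle\in\mathcal L_P(A)$ (registers $Z,B$ can then be set to $|0\rangle$). In the function case, expand $|\psi\rangle=\sum_{|D|\le P}|\phi_D\rangle_O|z_D\rangle_A$ with unnormalized vectors $z_D$. The offline algorithm proceeds as follows.
\begin{enumerate}
\item Prepare $\frac{1}{\sqrt{L}}\sum_{|D|\le P}|D\rangle_R$ in a helper register, where $L$ is the number of such $D$.
\item Controlled on $D$, prepare $|\phi_D\rangle_O$ from $|\Omega_{\mathcal F}\rangle$ using $P$ phase-oracle queries. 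For each $x$ in the support of $D$, query on input $x$ with response value $D(x)$; this multiplies by $\omega_M^{D(x)H(x)}$. Padding with dummy queries ensures exactly $P$ queries in every branch. This yields $|\Psi_P\rangle\propto\sum_D|\phi_D\rangle_O|0\rangle_A|D\rangle_R$.
\item Apply the contraction $K=cF$, where $F=\sum_D|0\rangle\langle D|_R\otimes|z_D\rangle\langle0|_A$ and $c>0$ is chosen so that $\|K\|_{\mathrm{op}}\le1$. Implement $K$ by the dilation $\begin{pmatrix}K & \sqrt{I-K^\dagger K}\\ \cdot&\cdot\end{pmatrix}$, which is a unitary on the system plus flag $B$, and postselect on $B=0$.
\end{enumerate}
The resulting state is proportional to $|\psi\rangle_{OA}|0\rangle_R$. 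It is nonzero because the $|\phi_D\rangle$ are orthonormal, so its norm is proportional to $\|\psi\|$. Hence $\Pr[b=0]>0$, and the normalized output is exactly $|\psi\rangle$, with $R$ absorbed into $Z$.

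In the permutation case, the same steps apply with $|v_\alpha\rangle$ in place of $|\phi_D\rangle$. Write $|\psi\rangle=\sum_{|\alpha|\le P}|v_\alpha\rangle|z_\alpha\rangle$. To prepare $|v_\alpha\rangle$, apply the phase query with response vectors in $\mathbb Z_N$ and then project onto consistency. Specifically, compute $H(x)$ for $x\in\operatorname{dom}\alpha$ by a query, compare the result with $\alpha(x)$, and fold the check into the final postselection. This uses $|\alpha|\le P$ queries plus uncomputation; the uncomputation can be avoided by postselecting on the computed values. The main obstacle is that the $|v_\alpha\rangle$ are not orthogonal. Then $F$ maps $|\Psi_P\rangle$ to $\sum_\alpha|v_\alpha\rangle|z_\alpha\rangle=|\psi\rangle$ only up to normalization, which is fine. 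However, nonvanishing is not automatic, because the branches could interfere destructively. This causes no problem here: the output is proportional to $|\psi\rangle$ itself, and $|\psi\rangle\neq0$. Each branch should be written as $\frac{1}{\sqrt{L}}|v_\alpha\rangle|\alpha\rangle$ (up to a known $\alpha$-dependent constant factor that is absorbed into $z_\alpha$), and then the same $F$ argument goes through. This gives "$\ge$" and completes the proof.
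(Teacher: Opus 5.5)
Your proposal is correct and is essentially the paper's proof: the same containment argument for ``$\le$'', and for ``$\ge$'' the same superposition over database labels, coherent preparation of $|\phi_D\rangle$ (resp.\ $|v_\alpha\rangle$) with at most $P$ queries, and postselected implementation of the contraction $cF$. There is one slip: as written, $(K,\sqrt{I-K^\dagger K})$ cannot be the first block row of a unitary unless $K$ is normal, so these entries should form the first column, which is the part that acts on $|0\rangle_B$.

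The only real deviation is in the permutation case. You query $\operatorname{dom}\alpha$ and postselect on agreement with $\alpha$ for all $|\alpha|\le P$. The paper instead queries all $P$-subsets $X$ in uniform superposition and lets the outcome $H|_X$ itself be the label $\alpha$; this avoids the consistency check but relies on $\{|v_\alpha\rangle\}_{|\alpha|=P}$ spanning $\mathcal L_P^{\perm}$ and requires treating $P>N-1$ separately.
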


The key idea is that every joint state of the oracle and algorithm registers produced by a $(P,T)$-bit-fixing offline phase is supported on $\operatorname{im}(\Pi_{\le P}\otimes I)$, and conversely every state with this support can be prepared using at most $P$ queries and postselection.
\ifnum\shorter=1
The proof of this theorem is deferred to \cref{subsec: degree-restricted-proof}.
\else

\begin{proof}[Proof of \cref{thm:degree-restricted-moment}]
Fix an online algorithm $\mathcal A$, and write
$\mathcal E:=\mathcal E_{\mathcal A}$. Any offline phase making at most
$P$ queries produces, after postselection, a normalized state
$|\psi_0\rangle_{OAZB}\in\mathcal L_P(AZB)$: the initial oracle state
has size zero; each query increases the size by at most one
by~\cref{fact:function-query-level,fact:permutation-query-level},
and postselection on the algorithm's registers preserves membership in
this subspace. Its reduced state
$\sigma_{OA}:=\operatorname{Tr}_{ZB}|\psi_0\rangle\langle\psi_0|$
therefore satisfies $\sigma_{OA}=\Pi_{\le P}\sigma_{OA}\Pi_{\le P}$.
The bit-fixing advantage is consequently bounded by
\[
    \bigl|\operatorname{Tr}(\sigma_{OA}M_{\mathcal E})\bigr|
    =\bigl|\operatorname{Tr}(\sigma_{OA}\Pi_{\le P}M_{\mathcal E}\Pi_{\le P})\bigr|
    \le\bigl\|\Pi_{\le P}M_{\mathcal E}\Pi_{\le P}\bigr\|_{\mathrm{op}}.
\]

Conversely, we show that every unit vector
$|\psi\rangle_{OA}\in\mathcal L_P(A)$ can be prepared using at most
$P$ queries and postselection with positive probability. We give the
construction for both database hierarchies.
\paragraph{Case 1: Function.}
    Let $C_{\mathrm F}:=\sum_{j=0}^{P}\binom Nj(M-1)^j$ be the number of all possible databases of size at most $P$.
    Adjoining the register $R$ to the purified oracle state $|\Omega_{\mathcal F}\rangle = |\phi_{0}\rangle$, prepare
    \[
    |\Psi^{\mathrm F}_0\rangle_{OR}:=\frac{1}{\sqrt{C_{\mathrm F}}}\sum_{\substack{D\in\mathbb Z_M^N\\|D|\le P}}|\phi_0\rangle_O|D\rangle_R.
    \]
    For each $D$, write $\operatorname{supp}(D)=\{x_1<\cdots<x_k\}$, where $k=|D|$. For $1\le i\le k$, coherently compute $(x_i,D(x_i))$ from $D$, apply the phase oracle, and uncompute the query registers. If $k<P$, each of the remaining queries is made at a fixed point with phase label $0$. Since
    \[
    \mathcal O\bigl(|\phi_{D'}\rangle_O|x,u\rangle\bigr)=|\phi_{D'+u e_x}\rangle_O|x,u\rangle,
    \]
    the resulting state is
    \[
    |\Psi^{\mathrm F}_P\rangle_{OR}=\frac{1}{\sqrt{C_{\mathrm F}}}\sum_{\substack{D\in\mathbb Z_M^N\\|D|\le P}}|\phi_D\rangle_O|D\rangle_R.
    \]

\paragraph{Case 2: Permutation.}
    First suppose that $P\le N-1$. Prepare
    \[
    |\Psi^{\mathrm{perm}}_0\rangle_{OR}:=\frac{1}{\sqrt{\binom NP\,N!}}\sum_{X\in\binom{[N]}P}\sum_{H\in\mathcal F^{\mathrm{perm}}}|H\rangle_O|X,0^P\rangle_R.
    \]
    For $X=\{x_1<\cdots<x_P\}$, query $x_1,\ldots,x_P$ in the standard oracle form, implemented using one phase-oracle query per point by \cref{lem:std oracle and phase oracle}, and store their images in $R$. This gives
    \[
    |\Psi^{\mathrm{perm}}_P\rangle_{OR}=\frac{1}{\sqrt{\binom NP\,N!}}\sum_{X\in\binom{[N]}P}\sum_{H\in\mathcal F^{\mathrm{perm}}}|H\rangle_O|H|_X\rangle_R.
    \]
    Since $\sum_{H\supseteq\alpha}|H\rangle_O=\sqrt{(N-P)!}\,|v_\alpha\rangle_O$, we have
    \[
    |\Psi^{\mathrm{perm}}_P\rangle_{OR}=\sqrt{\frac{(N-P)!}{\binom NP\,N!}}\sum_{\substack{\alpha\text{ database}\\|\alpha|=P}}|v_\alpha\rangle_O|\alpha\rangle_R.
    \]
    By \cref{sec:Random Permutations: The Partial-Assignment Filtration}, the vectors $|v_\alpha\rangle$ with $|\alpha|=P$ span $\mathcal L_P^{\mathrm{perm}}$. 
    For $P>N-1$, use the construction with $P=N-1$, since $\mathcal L_P^{\mathrm{perm}}=\mathcal L_{N-1}^{\mathrm{perm}}$.
    
\paragraph{Constructing the unitary.} 
In either case, the resulting state has the form
    \[
    |\Psi_P\rangle_{OR}=\gamma\sum_a|w_a\rangle_O|a\rangle_R,
    \]
    where $\gamma>0$ is the corresponding coefficient above, and $(a,|w_a\rangle)$ denotes $(D,|\phi_D\rangle)$ in the function case and $(\alpha,|v_\alpha\rangle)$ in the permutation case. Since the vectors $|w_a\rangle$ span $\mathcal L_P$, choose vectors $|z_a\rangle_A$ such that
    \[
    |\psi\rangle_{OA}=\sum_a|w_a\rangle_O|z_a\rangle_A.
    \]
    Define an operator $F$ on $RA$ by
    \[
    F:=\sum_a|0\rangle\langle a|_R\otimes|z_a\rangle\langle0|_A.
    \]
    Then
    \[
    (I_O\otimes F)(|\Psi_P\rangle_{OR}|0\rangle_A)=\gamma|\psi\rangle_{OA}|0\rangle_R.
    \]

    Take $c:=1/(1+||F||_{op})>0$ and set $K:=cF$. Then $||K||_{op}\le 1$. Let $K=U\Sigma V^{\dagger}$ be its singular value decomposition, and choose an orthonormal basis $\{|j\rangle\}$ of $RA$ such that $\Sigma|j\rangle=s_j|j\rangle$, where $0\le s_j\le1$. We implement $K$ by postselection.
    
    Let $B$ be a flag qubit. For an arbitrary state $|r\rangle_{RA}$, start with $|r\rangle_{RA}|0\rangle_B$. Apply $V^{\dagger}$:
    \[
        |r\rangle_{RA}|0\rangle_B\mapsto \sum_{j}r_j |j\rangle|0\rangle_{B},\qquad r_{j}:=\langle j|V^{\dagger}|r\rangle.
    \]
    Apply the rotation:
    \[
        |j\rangle|0\rangle_{B} \mapsto |j\rangle (s_{j}|0\rangle_B+\sqrt{1-s_{j}^2}|1\rangle_B).
    \]
    After applying $U$, the resulting state is
    \[
    \begin{aligned}
    &\sum_j r_jU|j\rangle
      \left(s_j|0\rangle_B+\sqrt{1-s_j^2}|1\rangle_B\right)\\
    &\quad=
    U\Sigma V^\dagger|r\rangle|0\rangle_B
    +
    U\sqrt{I-\Sigma^2}V^\dagger|r\rangle|1\rangle_B\\
    &\quad=
    K|r\rangle|0\rangle_B+L|r\rangle|1\rangle_B
    \end{aligned}
    \]
    where $L:=U\sqrt{I-\Sigma^2}V^\dagger$. Let this unitary be $\mathcal U$.
    It follows that
    \[
    \begin{aligned}
    &(I_O\otimes\mathcal U)
    \bigl(|\Psi_P\rangle_{OR}|0\rangle_A|0\rangle_B\bigr)\\
    &\quad=
    c\gamma
    |\psi\rangle_{OA}|0\rangle_R|0\rangle_B+
    (I_O\otimes L)
    \bigl(|\Psi_P\rangle_{OR}|0\rangle_A\bigr)|1\rangle_B.
    \end{aligned}
    \]
    Thus, postselecting on $B=0$, which occurs with probability $c^2\gamma^2>0$, leaves $R$ in the fixed state $|0\rangle_R$ and the remaining registers in the state $|\psi\rangle_{OA}$.

For each unit vector $|\psi\rangle\in\mathcal L_P(A)$, the construction
gives a valid bit-fixing adversary with online algorithm $\mathcal A$
and advantage $|\langle\psi|M_{\mathcal E}|\psi\rangle|$.
The maximum of this expression is the database-size-restricted operator norm, attained
by an eigenvector whose eigenvalue has the largest absolute value.
Together with the upper bound, this proves the characterization for each
fixed online algorithm. Taking the supremum over online algorithms
proves~\eqref{eq:bf-operator-characterization}.
\end{proof}
\fi

\begin{proof}[Proof of \cref{thm:QAI bounded by BF}]
For $S=0$, the conclusion is obvious as there is no advice and no offline part in both games.

We assume $S\ge1$.
Fix an arbitrary $(S,T)$-quantum auxiliary-input adversary with online
algorithm $\mathcal A$ and advice family $\{\rho_H\}_{H\in\mathcal F}$.
Let $\mathcal E$ be its induced game-operator family, put
$d:=\dim\mathcal H_A\le2^S$, and set
$r:=2(T_{\Samp}+T+T_{\Ver})$.
For the fixed online query budget $T$, set
\[
    \beta(P):=\mathsf{Adv}^{\mathsf{BF}}_{\mathcal G}(P,T)
    \qquad\text{for every integer }P\ge0.
\]
This function is nondecreasing because increasing the offline query
budget enlarges the set of allowed adversaries.
By~\cref{thm:bf-operator-characterization}, the purified game operator of the fixed online algorithm $\mathcal A$ satisfies
\begin{equation}
    \|\Pi_{\le P}M_{\mathcal E}\Pi_{\le P}\|_{\mathrm{op}}
    \le\beta(P)\qquad\text{for every integer }P\ge0.
    \label{eq:restricted-degree-bound}
\end{equation}

We consider the iterative application of $M_{\mathcal E}$ on the state $|v_0\rangle:=|\Omega_{\mathcal F}\rangle_O|\Phi_d\rangle_{AR}$, where $|\Phi_{d}\rangle $ is the maximally entangled state obtained by adjoining a $d$-dimensional reference register $R$:
    \[
        |\Phi_{d}\rangle := \frac{1}{\sqrt{d}}\sum_{a=1}^{d}|a\rangle_{A} |a\rangle_R.
    \]
    For $j\ge1$, set $|v_j\rangle:=M_{\mathcal E}^j|v_0\rangle$.
    Since $|v_0\rangle\in\mathcal{L}_0$ and $\lVert |v_0\rangle\rVert_2=1$, \(|v_j\rangle\in\mathcal L_{jr}\) for every \(j\ge0\). Moreover, we can write
    \[
        |v_k\rangle=\frac{1}{\sqrt{|\mathcal F|\,d}}\sum_{H\in\mathcal F}\sum_{a=1}^d|H\rangle(E^H)^k|a\rangle_A|a\rangle_R.
    \]
    Since every $E^H$ is Hermitian, orthogonality of the oracle and reference-register bases gives
    \begin{equation}
        \lVert |v_k\rangle\rVert_2^2
        =\E_H\left[\frac1d\sum_{a=1}^d\langle a|(E^H)^{2k}|a\rangle\right]
        =\E_H\left[\frac1d\Tr|E^H|^{2k}\right].
        \label{eq:norm-moment}
    \end{equation}
    The right-hand side is the $2k$-th moment of a random eigenvalue, obtained by sampling $H$ uniformly and then sampling one of the $d$ eigenvalues of $E^H$ uniformly, counting multiplicities.

    On the other hand,
    since \(|v_j\rangle\in\mathcal L_{jr}\subseteq\mathcal L_{(j+1)r}\) and \(|v_{j+1}\rangle\in\mathcal L_{(j+1)r}\), for \(0\le j<k\),
    \[
        |v_{j+1}\rangle = \Pi_{\le (j+1)r} |v_{j+1}\rangle 
        =\Pi_{\le (j+1)r}M_{\mathcal E}|v_j\rangle
        =\Pi_{\le (j+1)r}M_{\mathcal E}\Pi_{\le (j+1)r}|v_j\rangle.
    \]
    Hence, by \eqref{eq:restricted-degree-bound},
    \[
        \lVert |v_{j+1}\rangle\rVert_2
        \le
        \lVert \Pi_{\le (j+1)r}M_{\mathcal E}\Pi_{\le (j+1)r}\rVert_{\mathrm{op}}\lVert |v_j\rangle\rVert_2
        \le
        \beta((j+1)r)\lVert |v_j\rangle\rVert_2.
    \]
    Iterating gives, because $\beta$ is nondecreasing,
    \begin{equation}
        \lVert |v_k\rangle\rVert_2
        \le\prod_{j=0}^{k-1}\beta((j+1)r)
        \le\beta(kr)^k.
        \label{eq:norm-beta}
    \end{equation}
    Combining \eqref{eq:norm-moment} and \eqref{eq:norm-beta} proves the moment bound
    \begin{equation}
        \E_H\left[\frac1d\Tr|E^H|^{2k}\right]^{1/(2k)} \le \beta(kr)
        \label{eq:Schatten moment bound}
    \end{equation}
    for any $k\ge 1$.
    The bound $|\Tr(\rho_H E^H)|\le\lVert E^H\rVert_{\mathrm{op}}$, monotonicity of $L_p$ norms, and \eqref{eq:Schatten moment bound} give, for any $k\ge 1$,
    \begin{align*}
        \left|\E_H\left[\Tr(\rho_H E^H)\right]\right|
        &\le\E_H\left[\lVert E^H\rVert_{\mathrm{op}}\right] \notag\\
        &\le\left(\E_H\left[\lVert E^H\rVert_{\mathrm{op}}^{2k}\right]\right)^{1/(2k)} \notag\\
        &\le d^{1/(2k)}
        \left(\E_H\left[\frac1d\Tr|E^H|^{2k}\right]\right)^{1/(2k)} \notag\\
        &\le d^{1/(2k)}\beta(kr).
    \end{align*}

    Taking $k=S$ gives $d^{1/(2S)}\le (2^S)^{1/(2S)}=\sqrt2$ and hence
    \[
        \left|\E_H\left[\Tr(\rho_H E^H)\right]\right|
        \le\sqrt2\,\beta(Sr).
    \]

Substituting $Sr=2S(T_{\Samp}+T+T_{\Ver})$ in the bound above and taking
the supremum over all $(S,T)$-adversaries proves the claimed inequality.
\end{proof}

Combining \cref{thm:QAI bounded by BF,thm:bf-operator-characterization}
gives the following corollary, which converts database-size-restricted operator
bounds into QAI security bounds.
\begin{corollary}[QAI Security from Database-Size-Restricted Operator Bounds]
\label{cor:qai-from-level-bounds}
Let $\mathcal G$ be a game in the QROM or QRPM.
For integers $S,T\ge0$, let
$    P:=2S(T_{\Samp}+T+T_{\Ver}).$
Then
\[
    \mathsf{Adv}^{\mathsf{QAI}}_{\mathcal G}(S,T)
    \le\sqrt2\sup_{\mathcal A}
    \bigl\|\Pi_{\le P}M_{\mathcal E_{\mathcal A}}\Pi_{\le P}\bigr\|_{\mathrm{op}},
\]
where the supremum ranges over all online algorithms making at most
$T$ queries through $\Query^H$ and their input-register choices, and
$\Pi_{\le P}$ is the projector onto the corresponding database
hierarchy from Section~2.
\end{corollary}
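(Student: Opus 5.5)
This corollary is a direct chaining of the two preceding results. First, fix $S,T$ and set $r:=2(T_{\Samp}+T+T_{\Ver})$, so $P=Sr$. By \cref{thm:QAI bounded by BF} we have
\[
    \mathsf{Adv}^{\mathsf{QAI}}_{\mathcal G}(S,T)\le\sqrt2\,\mathsf{Adv}^{\mathsf{BF}}_{\mathcal G}(Sr,T)=\sqrt2\,\mathsf{Adv}^{\mathsf{BF}}_{\mathcal G}(P,T).
\]
Next, apply \cref{thm:bf-operator-characterization} at this $P$ and $T$. It identifies $\mathsf{Adv}^{\mathsf{BF}}_{\mathcal G}(P,T)$ with $\sup_{\mathcal A}\bigl\|\Pi_{\le P}M_{\mathcal E_{\mathcal A}}\Pi_{\le P}\bigr\|_{\mathrm{op}}$, where the supremum ranges over online algorithms making at most $T$ queries through $\Query^H$, together with their input-register choices. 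Substituting this identity into the inequality gives the claim. The projector $\Pi_{\le P}$ is the one from the function or permutation hierarchy of Section~2, according to whether $\mathcal G$ is in the QROM or the QRPM.

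It remains to handle the edge cases. When $S=0$ we get $P=0$. Then the QAI adversary has no advice, and $\Pi_{\le0}$ projects onto $|\Omega_{\mathcal F}\rangle$ tensored with the input register. The quantity on the right is therefore the uniform advantage, and the bound holds trivially; this is already covered by the $S=0$ case in the proof of \cref{thm:QAI bounded by BF}. In the permutation case, if $P>N-1$, then $\mathcal L_P^{\perm}$ is the full space and both theorems still apply verbatim.

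There is no genuine obstacle. The only point to check is that the supremum in \cref{thm:bf-operator-characterization} is taken over the same class of online algorithms, including arbitrary input-register choices, as the QAI adversaries considered in \cref{thm:QAI bounded by BF}. In the QAI model the input register has dimension at most $2^S$, which is a subclass of the allowed register choices, so the inequality goes in the correct direction.
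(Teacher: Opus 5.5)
Your proposal is correct and matches the paper's proof: apply \cref{thm:QAI bounded by BF} with $P=Sr$, then substitute the identity $\mathsf{Adv}^{\mathsf{BF}}_{\mathcal G}(P,T)=\sup_{\mathcal A}\|\Pi_{\le P}M_{\mathcal E_{\mathcal A}}\Pi_{\le P}\|_{\mathrm{op}}$ from \cref{thm:bf-operator-characterization}. The edge-case remarks ($S=0$, and $P>N-1$ in the permutation case) are harmless but unnecessary, since both theorems already hold for all $S,P\ge 0$.
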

\begin{proof}
Apply \cref{thm:QAI bounded by BF} and substitute the
bit-fixing bound from \cref{thm:bf-operator-characterization}.
\end{proof}

\section{Improving Non-uniform Security for Salting}
\label{sec:salting}

This section proves a theorem for non-uniform salting security whose decision-game bound improves on \cite[Theorem 7.5]{Liu23}. 
We use the salted game formalization of \cite[Definition 7.1]{CGLQ20}. Throughout this section, $\mathcal F$ is either a random-function family or a random-permutation family, and each call to $\mathsf{Query}^H$ makes at most one query to the underlying oracle $H$.

\begin{definition}[Salted Security Game]
\label{def:public-salt}
Let $\mathcal G=(\mathcal C)$ be a security game whose base-game oracle
is sampled as $H\sample\mathcal F$, where $\mathcal C=(\Samp,\Query,\Ver).$
Let
\ifnum\shorter=1
$\boldsymbol H=(H_1,\ldots,H_K)\sample\mathcal F^K$
\else
\[
    \boldsymbol H=(H_1,\ldots,H_K)\sample\mathcal F^K
\]
\fi
be $K$ independent salt-indexed oracles. The salted game
$\mathcal G_{S}=(\mathcal C_{\mathsf S})$ is specified by
\begin{enumerate}
    \item
        \(\mathsf{Sample}_{\mathsf S}^{\mathbf H}\) samples \(s\sample [K]\), runs \(\mathsf{Sample}^{H_s}\) with fresh randomness, outputs \((s,y)\), and retains \((s,t)\), where \(y\) and \(t\) are the base sampler’s challenge and retained state, respectively;

    \item
        $\Query_{\mathsf S}^{\boldsymbol H}
        ((s,t),(s',x))
        =
        \begin{cases}
            \Query^{H_s}(t,x), & s'=s,\\
            H_{s'}(x), & s'\neq s;
        \end{cases}$

    \item
        $\Ver_{\mathsf S}^{\boldsymbol H}((s,t),a)
        =
        \Ver^{H_s}(t,a).$
\end{enumerate}
\end{definition}

Fix an adversary for a salted game. We can similarly define the corresponding game operator and purified operator. For each challenge salt $s\in[K]$,
let $E_s^{\boldsymbol H}$ be its fixed-salt game operator on the adversary's input register $A$.
Let 
$\mathcal E_s:=\{E_s^{\boldsymbol H}\}_{\boldsymbol H\in\mathcal F^K}$
and
$\mathcal E^{(K)}
:=
\{E_{(K)}^{\boldsymbol H}\}_{\boldsymbol H\in\mathcal F^K}$. We define the averaged game operator
\ifnum\shorter=1
\[
    E_{(K)}^{\boldsymbol H}
    :=
    \frac{1}{K}\sum_{s\in[K]}E_s^{\boldsymbol H},\qquad
    M_{\mathcal E^{(K)}}
    =
    \frac{1}{K}\sum_{s\in[K]}M_{\mathcal E_s}.
\]
\else
\[
    E_{(K)}^{\boldsymbol H}
    :=
    \frac{1}{K}\sum_{s\in[K]}E_s^{\boldsymbol H}.
\]
By linearity,
\[
    M_{\mathcal E^{(K)}}
    =
    \frac{1}{K}\sum_{s\in[K]}M_{\mathcal E_s}.
\]
\fi

\subsection{Salted Database Hierarchy}
\label{subsec:Salted Filtration}

Purify \(\mathbf H=(H_1,\ldots,H_K)\leftarrow\mathcal F^K\) in \(\mathcal O:=\mathcal O_1\otimes\cdots\otimes\mathcal O_K\), where \(\mathcal O_s\) holds \(H_s\).
For the base game without salting, recall that $\mathcal D_j$ is the space of exact database size on the purified oracle register, and $\Pi_{=j}$ is the orthogonal projection onto $\mathcal D_j$. The database hierarchy and the database-size observable are
\[
    \mathcal L_P=\bigoplus_{j=0}^{P}\mathcal D_j,
    \qquad
    \Lambda=\sum_{j\ge0}j\Pi_{=j}.
\]

We define the salted database hierarchy and its orthogonal projections by
\[
    \mathcal{L}_{P}^{(K)}:=\bigoplus_{j_{1}+\cdots+j_{K}\le P} \mathcal{D}_{j_{1}}\otimes \cdots \otimes \mathcal{D}_{j_{K}}, \qquad \Pi_{\le P}^{(K)}:=\operatorname{Proj}_{\mathcal{L}_{P}^{(K)}}.
\]
In other words, it keeps track of the database sizes for each individual salt-indexed oracle, and the sum of them is the total database size.

\Cref{cor:qai-from-level-bounds} also applies to the salted
database hierarchy. The initial oracle state satisfies
$|\Omega_{\mathcal F}\rangle^{\otimes K}\in\mathcal L_0^{(K)}$,
and each query increases the total database size by at most one
as shown in \cref{sec:Appendix A}. Thus the moment argument in the proof of
\cref{thm:QAI bounded by BF}, using the database-size-restricted operator norms
directly, gives the corollary with projectors $\Pi_{\le P}^{(K)}$.

Analogously to the base-game database-size observable, for each salt $s\in[K]$
we define the local database-size observable
\[
    \Lambda_s
    :=
    I_{O_1}\otimes\cdots\otimes I_{O_{s-1}}
    \otimes\Lambda
    \otimes I_{O_{s+1}}\otimes\cdots\otimes I_{O_K}
\]
where $\Lambda$ is the database-size observable in $O_{s}$.
Note that for any unit vector $|\psi\rangle \in \mathcal{L}^{(K)}_{P}$, 
\begin{equation}
    \langle\psi|\sum_{s\in [K]}\Lambda_{s} |\psi\rangle\le P.
    \label{eq:averaged lambda inequality}
\end{equation}
This is because on each summand $\mathcal D_{j_1}\otimes\cdots\otimes\mathcal D_{j_K}$ of $\mathcal L_P^{(K)}$, the operator $\sum_{s\in[K]}\Lambda_s$ has eigenvalue $j_1+\cdots+j_K\le P$.

For \(s\in[K]\) and \(j\ge0\), let \(\Pi_{s,= j}\) act as \(\Pi_{=j}\) on \(\mathcal O_s\) and as the identity on every other oracle register:
\[
    \Pi_{s,= j}:=
I_{O_1}\otimes\cdots\otimes I_{O_{s-1}}
\otimes \Pi_{=j}
\otimes I_{O_{s+1}}\otimes\cdots\otimes I_{O_K}.
\]
The following inequality will be useful.
\begin{equation}
    I-\Pi_{s,=0}=\sum_{j\ge 1} \Pi_{s,=j} \preceq \sum_{j\ge 1} j\Pi_{s,=j}=\Lambda_{s}
    \label{eq:lambda_s and Pi_s=0}.
\end{equation}

\subsection{Salting Defeats Quantum Advice}

In this section, we improve the bound of~\cite[Theorem~7.5]{Liu23} for decision games with quantum advice.

\begin{theorem}[Generic salting]
    \label{thm:generic-public-salting}
    Put $r:=2(T_{\Samp}+T+T_{\Ver})$.
    For $S\ge0$, the following statements hold.
    \begin{enumerate}[label=(\roman*)]
        \item
        If the base search game has uniform success probability at most
        $\nu_{\mathcal G}(T)$, then
        \[
            \delta_{\mathcal G_{\mathsf S}}^{\mathsf{QAI}}(S,T)
            =
            O\left(
                \nu_{\mathcal G}(T)+\frac{Sr}{K}
            \right).
        \]
        \item
        If the base decision game has uniform advantage at most
        $\epsilon_{\mathcal G}(T)$, then
        \[
            \epsilon_{\mathcal G_{\mathsf S}}^{\mathsf{QAI}}(S,T)
            =
            O\left(
                \epsilon_{\mathcal G}(T)
                +
                \sqrt{\frac{Sr}{K}}
            \right).
        \]
    \end{enumerate}
\end{theorem}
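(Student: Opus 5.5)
We will apply the salted version of \cref{cor:qai-from-level-bounds} with projectors $\Pi^{(K)}_{\le P}$ and $P:=Sr$. This reduces both statements to bounding $|\langle\psi|M_{\mathcal E^{(K)}}|\psi\rangle|$ uniformly over unit vectors $|\psi\rangle\in\mathcal L^{(K)}_P$ (on $OAZ$) and over online algorithms; since $M_{\mathcal E^{(K)}}$ is Hermitian, this quadratic-form bound controls the restricted operator norm. Write $M_{\mathcal E^{(K)}}=\frac1K\sum_s M_{\mathcal E_s}$. For each $s$ we bound $\langle\psi|M_{\mathcal E_s}|\psi\rangle$ in terms of $\lambda_s:=\langle\psi|\Lambda_s|\psi\rangle$ and then average, using $\sum_s\lambda_s\le P$ from \eqref{eq:averaged lambda inequality}.

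\textbf{Base-game step.} Fix $s$ and split $|\psi\rangle=|\psi_0\rangle+|\psi_1\rangle$ with $|\psi_0\rangle:=\Pi_{s,=0}|\psi\rangle$. By \eqref{eq:lambda_s and Pi_s=0}, $\||\psi_1\rangle\|^2\le\lambda_s$. The key claim is that for any state $|\phi\rangle$ with $\Pi_{s,=0}|\phi\rangle=|\phi\rangle$ we have $|\langle\phi|M_{\mathcal E_s}|\phi\rangle|\le\epsilon_{\mathcal G}(T)\||\phi\rangle\|^2$ (resp.\ $\le\nu_{\mathcal G}(T)\||\phi\rangle\|^2$). The reason is that such a state is $|\Omega_{\mathcal F}\rangle_{O_s}\otimes|\chi\rangle_{O_{-s}AZ}$, a product with the uniform superposition on $O_s$. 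The registers $O_{t}$, $t\ne s$, can then be treated as private workspace of a uniform adversary. Queries to $H_{s'}$ with $s'\neq s$ are $H_s$-independent unitaries on that workspace, and the rest is exactly the base game against $H_s$ with an $H_s$-independent initial state. Hence the expected advantage is at most the uniform bound. This step needs care with $\Query_{\mathsf S}$ and with salt-register bookkeeping, but it is standard; I expect it to follow from the definition of $E_s$ once one notes that $M_{\mathcal E_s}$ is the purified operator of this uniform experiment.

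\textbf{Search case.} Here $0\preceq M_{\mathcal E_s}\preceq I$, so $\|M_{\mathcal E_s}^{1/2}\cdot\|$ is a seminorm. The triangle inequality gives
\[
\langle\psi|M_{\mathcal E_s}|\psi\rangle\le\bigl(\sqrt{\nu_{\mathcal G}(T)}+\sqrt{\lambda_s}\bigr)^2\le2\nu_{\mathcal G}(T)+2\lambda_s .
\]
Averaging over $s$ yields $2\nu_{\mathcal G}(T)+2P/K$. Multiplying by $\sqrt2$ from the corollary gives (i).

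\textbf{Decision case (main obstacle).} Now $-\tfrac12 I\preceq M_{\mathcal E_s}\preceq\tfrac12 I$ and no square root exists, so we expand
\[
\langle\psi|M_{\mathcal E_s}|\psi\rangle=\langle\psi_0|M_{\mathcal E_s}|\psi_0\rangle+2\,\mathrm{Re}\langle\psi_0|M_{\mathcal E_s}|\psi_1\rangle+\langle\psi_1|M_{\mathcal E_s}|\psi_1\rangle.
\]
The first term is at most $\epsilon_{\mathcal G}(T)$ by the base-game step, and the last is at most $\tfrac12\lambda_s$. For the cross term we use the crude bound $\|M_{\mathcal E_s}\|_{\mathrm{op}}\|\psi_0\|\|\psi_1\|\le\tfrac12\sqrt{\lambda_s}$. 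Together,
\[
|\langle\psi|M_{\mathcal E_s}|\psi\rangle|\le\epsilon_{\mathcal G}(T)+\sqrt{\lambda_s}+\tfrac12\lambda_s .
\]
Averaging and applying Jensen ($\frac1K\sum_s\sqrt{\lambda_s}\le\sqrt{P/K}$) gives $\epsilon_{\mathcal G}(T)+\sqrt{P/K}+P/(2K)$. When $P/K\ge1$ the claim is trivial because advantages are at most $1/2$; otherwise $P/K\le\sqrt{P/K}$. Either way the total is $O(\epsilon_{\mathcal G}(T)+\sqrt{Sr/K})$, and the factor $\sqrt2$ is absorbed into the constant.

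The delicate point is really the base-game step: the cross terms and states with $|D_s|=0$ must be genuinely realizable as uniform experiments, including any $Z$ registers entangled with other salts. Once that holds, the crude cross-term bound is exactly what costs the square root, consistent with the optimality remark after \cref{thm:salting-decision-informal}.
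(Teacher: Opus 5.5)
Your proposal is correct and follows the paper's proof essentially step for step. It uses the same split $|\psi_0\rangle=\Pi_{s,=0}|\psi\rangle$, the fixed-salt reduction (\cref{lem:fixed-salt-reduction}(i)) for the size-zero part, the seminorm triangle inequality for $M_{\mathcal E_s}^{1/2}$ in the search case, the three-term expansion with a crude cross-term bound plus Jensen in the decision case, and finally the salted version of \cref{cor:qai-from-level-bounds} with $P=Sr$; the only differences (your sharper $\|M_{\mathcal E_s}\|_{\mathrm{op}}\le 1/2$, and using $(\sqrt a+\sqrt b)^2\le 2a+2b$ before averaging) affect constants only.
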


We use the following lemma to relate the experiment with a fixed challenge salt to the base game, which appears in the proof of~\cite[Lemma~7.2]{CGLQ20}; we restate it here.
At a high level, any bound by a function of the expected database size that holds for all base-game states also holds in the experiment with fixed challenge salt $s$, with the database size measured only in $O_s$. 

\begin{lemma}
    \label{lem:fixed-salt-reduction}
    Fix a challenge salt $s\in[K]$.
    \begin{enumerate}[label=(\roman*)]
        \item
        If $|\psi\rangle\in\operatorname{im}\Pi_{s,=0}$, then the
        fixed-salt experiment is a valid base-game experiment for the
        uniform oracle $H_s$, with all remaining registers included in the internal register
        and with at most $T$ queries to
        $\Query^{H_s}$.
        \label{item:image in level 0}

        \item
        Suppose there exists a real-valued function $f$ such that, for every $T$-query base-game adversary with induced game-operator family $\mathcal E$ and every unit vector $|\phi\rangle$ on the purified-oracle and auxiliary registers,
         \begin{equation}
         |\langle\phi|M_{\mathcal E}|\phi\rangle|
         \le f(\langle\phi|\Lambda|\phi\rangle).
         \label{eq:base-game phi}
        \end{equation}
         
         Then every unit vector \(|\psi\rangle\) on the salt-indexed oracle and auxiliary registers satisfies
         \[
         |\langle\psi|M_{\mathcal E_s}|\psi\rangle|
         \le f(\langle\psi|\Lambda_s|\psi\rangle).
        \]
        \label{item:restricted to salt}
    \end{enumerate}
\end{lemma}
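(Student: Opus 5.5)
Both parts rest on one structural fact. Once the challenge salt $s$ is fixed, every oracle register $O_t$ with $t\neq s$ is only accessed through ordinary queries $H_t(x)$; it is never touched by $\Samp$, $\Ver$, or the challenge-dependent part of $\Query$. We therefore regroup all registers other than $O_s$ into one enlarged internal register $A' := A\otimes\bigotimes_{t\neq s}O_t$ (plus workspace). A query to $H_t$ is then a fixed unitary on $A'$ and the query registers, independent of $H_s$, so it can be absorbed into the inter-query unitaries of a base-game algorithm.

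Concretely, write the fixed-salt joint unitary as $V^{\boldsymbol H}$ controlled on $O_1\cdots O_K$. Viewed as an operator controlled only on $O_s$ and acting on $A'W$, it is the joint unitary of $\Samp^{H_s}$, a base-game adversary making at most $T$ calls to $\Query^{H_s}$ (the calls with $s'\neq s$ cost no $H_s$ queries), and $\Ver^{H_s}$. Let $\mathcal E'=\{E'^{H_s}\}$ be the induced base-game operator family on $A'$, obtained from the construction in the game-operator theorem with input register $A'$. The key identity to check is
\[
M_{\mathcal E_s}\;=\;M_{\mathcal E'}
\]
as operators on $O_sA'$, up to reordering tensor factors. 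Both sides equal the pullback of $\Pi_{\mathsf T}$, $\Pi$, or $\tfrac12(\Pi_{\mathsf T}-\Pi_{\mathsf F})$ through the same unitary, compressed by $\langle 0|_W$. The only bookkeeping step is that $\sum_{\boldsymbol H}|\boldsymbol H\rangle\langle\boldsymbol H|\otimes E_s^{\boldsymbol H}$, with the other oracles in the computational basis, coincides with the $O_s$-controlled operator obtained when $O_{t\neq s}$ are treated as internal. This holds because the $O_t$-controlled queries act diagonally in the $|H_t\rangle$ basis, so nothing is lost by moving those controls into the internal unitary. I expect this identification to be the main obstacle: one must verify that it holds at the level of purified operators, not merely probabilities. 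The rest of the argument is immediate.

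For \ref{item:image in level 0}: if $|\psi\rangle\in\operatorname{im}\Pi_{s,=0}$, then $|\psi\rangle=|\Omega_{\mathcal F}\rangle_{O_s}\otimes|\chi\rangle_{A'}$, since $\mathcal D_0=\mathcal L_0$ is spanned by $|\Omega_{\mathcal F}\rangle$ (the empty database / $|\phi_0\rangle$ / $|v_\varnothing\rangle$). The oracle $H_s$ is therefore uniform and unentangled with the rest of the state. Running the regrouped algorithm from $|\chi\rangle$ is thus a uniform base-game experiment with at most $T$ queries to $\Query^{H_s}$. Its value $\langle\psi|M_{\mathcal E_s}|\psi\rangle=\langle\psi|M_{\mathcal E'}|\psi\rangle$ is then exactly that experiment's success probability, or its signed bias in the decision case.

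For \ref{item:restricted to salt}: the hypothesis \eqref{eq:base-game phi} is assumed for every base-game adversary and every auxiliary register. We apply it to the regrouped adversary $\mathcal E'$ with auxiliary register $A'$ and $|\phi\rangle:=|\psi\rangle$, viewed as a vector on $O_s\otimes A'$. By the identity above, $\langle\psi|M_{\mathcal E_s}|\psi\rangle=\langle\phi|M_{\mathcal E'}|\phi\rangle$. Since $\Lambda_s$ is precisely $\Lambda$ on $O_s$ tensored with the identity on $A'$, we also have $\langle\phi|\Lambda|\phi\rangle=\langle\psi|\Lambda_s|\psi\rangle$. Substituting both equalities into \eqref{eq:base-game phi} gives $|\langle\psi|M_{\mathcal E_s}|\psi\rangle|\le f(\langle\psi|\Lambda_s|\psi\rangle)$, which is the claim.
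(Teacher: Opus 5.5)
Your proposal is correct and follows the paper's own route. You move the registers $O_t$ for $t\neq s$ into the internal register and absorb their queries into $H_s$-independent unitaries. For (i) you use that $\operatorname{im}\Pi_{s,=0}$ consists of product states with $|\Omega_{\mathcal F}\rangle$ on $O_s$. For (ii) you identify $\Lambda_s$ with $\Lambda\otimes I$. Your explicit check of $M_{\mathcal E_s}=M_{\mathcal E'}$, via block-diagonality in the $|H_t\rangle$ basis, fills in what the paper leaves implicit.

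One point should be stated more carefully. A salted query in superposition over $s'$ becomes a \emph{controlled} call to $\Query^{H_s}$, as the paper says, not an unconditional one. A base-game adversary can realize it with one ordinary query using the standard uniform-superposition trick, so neither the query count nor the induced operator changes.
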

To see this, treat all other oracle registers as auxiliary workspace and absorb their queries into $H_s$-independent local unitaries; each query to the salted interface then uses at most one controlled query to $\Query^{H_s}$. At database size zero, $O_s$ is in the uniform oracle state and is independent of the auxiliary workspace, which gives part (i). For arbitrary states, the same circuit reduction identifies the base-game database-size observable $\Lambda$ with $\Lambda_s$, giving part (ii).
\ifnum\shorter=1
We prove \cref{thm:generic-public-salting} in \cref{subsec:generic-salting} using this lemma.
\else

\begin{proof}[Proof of \cref{thm:generic-public-salting}]
    Fix a challenge salt $s$ and a unit vector $|\psi\rangle\in \mathcal{L}_{P}^{(K)}$. Define $|\psi_{0}\rangle:=\Pi_{s,=0}|\psi\rangle$ and $|\psi_{1}\rangle:=(I-\Pi_{s,=0})|\psi\rangle$. Let $M_{\mathcal E_s}$ be the fixed-salt purified game operator of the salted game.

    For a search game, $0\preceq M_{\mathcal E_s}\preceq I$, and hence
    $M_{\mathcal E_s}^{1/2}$ is well defined.
    By \hyperref[item:image in level 0]
{\cref*{lem:fixed-salt-reduction}\ref*{item:image in level 0}} and the uniform-security assumption,
    \begin{equation}
        \lVert M_{\mathcal{E}_s}^{1/2}|\psi_{0}\rangle\rVert_{2}^{2}=\langle \psi_{0}|M_{\mathcal{E}_s}|\psi_{0}\rangle \le \lVert |\psi_{0}\rangle\rVert_{2}^{2}\nu_{\mathcal{G}}(T) \le \nu_{\mathcal{G}}(T)
        \label{eq:bound psi_0}
    \end{equation}
    From $M_{\mathcal{E}_s}\preceq I$ and \cref{eq:lambda_s and Pi_s=0}, we have
    \begin{align}
        \lVert M_{\mathcal{E}_s}^{1/2}|\psi_{1}\rangle\rVert_{2}^{2}&=\langle \psi_{1}|M_{\mathcal{E}_s}|\psi_{1}\rangle \notag \\
        &\le \langle \psi_{1}|I|\psi_{1}\rangle = \langle \psi|(I-\Pi_{s,=0})|\psi\rangle \notag \\
        &\le \langle \psi|\Lambda_{s}|\psi\rangle.
        \label{eq:bound psi_1}
    \end{align}
    Combining \eqref{eq:bound psi_0} and \eqref{eq:bound psi_1}, we get
    \begin{align}
        \langle \psi|M_{\mathcal{E}_s}|\psi\rangle &= \lVert M_{\mathcal{E}_s}^{1/2}|\psi\rangle\rVert_{2}^{2} \notag\\
        &\le (\lVert M_{\mathcal{E}_s}^{1/2}|\psi_{0}\rangle\rVert_{2}+\lVert M_{\mathcal{E}_s}^{1/2}|\psi_{1}\rangle\rVert_{2})^{2} \notag\\
        &\le  \left(\sqrt{\nu_{\mathcal{G}}(T)} + \sqrt{\langle \psi|\Lambda_{s}|\psi\rangle}\right)^2
        \label{eq:nu before averaging}
    \end{align}
    Averaging \eqref{eq:nu before averaging} over $s \in [K]$ and using \eqref{eq:averaged lambda inequality} and Jensen's inequality gives
    \[
        \langle \psi|M_{\mathcal E^{(K)}}|\psi\rangle \le \left(\sqrt{\nu_{\mathcal{G}}(T)} + \sqrt{P/K}\right)^2.
    \]
    Taking the supremum over unit vectors \(|\psi\rangle\in\mathcal L_P^{(K)}\) gives, for any $P\ge 0$,
    \[
        \bigl\|\Pi_{\le P}^{(K)}M_{\mathcal E^{(K)}}\Pi_{\le P}^{(K)}\bigr\|_{\mathrm{op}}
        \le \left(\sqrt{\nu_{\mathcal{G}}(T)} + \sqrt{P/K}\right)^2.
    \]
    
    This bound holds for every salted online algorithm, so
    applying \cref{cor:qai-from-level-bounds} with $P=Sr$ gives
    \[
        \delta_{\mathcal G_{\mathsf S}}^{QAI}(S,T)
        =
        O\left(
            \left(
                \sqrt{\nu_{\mathcal G}(T)}
                +
                \sqrt{\frac{Sr}{K}}
            \right)^2
        \right)
        =
        O\left(
            \nu_{\mathcal G}(T)+\frac{Sr}{K}
        \right).
    \]
    This proves $(i)$.

   For a decision game, \(M_{\mathcal E_s}\) need not be positive, so the preceding positive-square-root argument does not apply. \hyperref[item:image in level 0]
{\cref*{lem:fixed-salt-reduction}\ref*{item:image in level 0}}, the hypothesis of (ii), and \eqref{eq:lambda_s and Pi_s=0} give
    \begin{equation}
        |\langle \psi_{0}|M_{\mathcal{E}_s}|\psi_{0}\rangle| \le \epsilon_{\mathcal{G}}(T),\qquad \lVert |\psi_{1}\rangle\rVert_{2}^{2} \le \langle \psi|\Lambda_{s}|\psi\rangle.
        \label{eq:decision bound}
    \end{equation}
    By the Cauchy–Schwarz inequality and $||M_{\mathcal{E}_s}||_{op}\le 1$, 
    \begin{equation}
        |\langle \psi_{0}|M_{\mathcal{E}_s}|\psi_{1}\rangle| \le || M_{\mathcal{E}_s}||_{op}\lVert |\psi_{0}\rangle\rVert_{2} \lVert |\psi_{1}\rangle\rVert_{2} \le \lVert |\psi_{1}\rangle\rVert_{2}.
        \label{eq:cauchy}
    \end{equation}
    Combining \eqref{eq:cauchy} and \eqref{eq:decision bound} yields
    \begin{equation}
        |\langle \psi|M_{\mathcal{E}_s}|\psi\rangle| \le \epsilon_{\mathcal{G}}(T) + 2\sqrt{\langle \psi|\Lambda_{s}|\psi\rangle}+\langle \psi|\Lambda_{s}|\psi\rangle.
        \label{eq:epsilon before averaging}
    \end{equation}
    Averaging \eqref{eq:epsilon before averaging} over $s\in[K]$ and using the triangle inequality, \eqref{eq:averaged lambda inequality}, and Jensen’s inequality gives
    \[
        |\langle \psi|M_{\mathcal{E}^{(K)}}|\psi\rangle| \le \epsilon_{\mathcal{G}}(T) + 2\sqrt{P/K}+P/K. 
    \]
    As $|\psi\rangle$ and $P\ge 0$ are arbitrary, for any $P \ge 0$, 
    \[
        \bigl\|\Pi_{\le P}^{(K)}M_{\mathcal E^{(K)}}\Pi_{\le P}^{(K)}\bigr\|_{\mathrm{op}}
        \le \epsilon_{\mathcal{G}}(T) + 2\sqrt{P/K}+P/K.
    \]
    This bound holds for every salted online algorithm, so
    applying \cref{cor:qai-from-level-bounds} with $P=Sr$ gives
    \[
        \epsilon_{\mathcal G_{\mathsf S}}^{QAI}(S,T)
        =
        O\left(
            \epsilon_{\mathcal G}(T)
            +
            \sqrt{\frac{Sr}{K}}
            +
            \frac{Sr}{K}
        \right).
    \]
    If $Sr/K\le1$, then $Sr/K\le\sqrt{Sr/K}$. If $Sr/K>1$, the claimed
    bound follows from the trivial bound on the decision advantage.
    Therefore,
    \[
        \epsilon_{\mathcal G_{\mathsf S}}^{QAI}(S,T)
        =
        O\left(
            \epsilon_{\mathcal G}(T)
            +
            \sqrt{\frac{Sr}{K}}
        \right).
    \]
    This proves $(ii)$.
\end{proof}
\fi

\subsection{\texorpdfstring{$(\gamma,\Delta)$-Diffuse Games}{(gamma, Delta)-Diffuse Games}}

The generic bound in \cref{thm:generic-public-salting} can be
loose for games with an additional random challenge: for salted
permutation inversion on $[N]$, it gives $O(T^2/N+ST/K)$ for $T\ge1$.
In the classical setting, Dong, Liu, and Wu~\cite[Corollary~4.20]{C:DonLiuWu24}
obtain the bound $O(T/N+ST/(KN))$ by exploiting the randomness of both
the salt and the inversion challenge. We introduce the following notion
of a $(\gamma,\Delta)$-diffuse game to capture such improvements through
an additive bound on the contributions of the existing database and
subsequent queries. This yields $O(T^2/N+ST/(KN))$ for salted permutation
inversion with quantum advice and quantum queries
(\cref{thm:sPERM-main}), and also gives improved salting bounds for
PRGs, Yao's box, and function inversion.

\begin{definition}[$(\gamma,\Delta)$-diffuse game]
    \label{def:D-diffuse}
    Fix a query budget $T\ge0$, an exponent $0<\gamma\le1$,
    and a scale $\Delta>0$. A game $\mathcal G$ is
    $(\gamma,\Delta)$-diffuse at query budget $T$, with parameters
    $\tau_T\ge0$ and $\kappa>0$, if every $T$-query adversary with
    induced game-operator family $\mathcal E$ and every unit vector
    $|\psi\rangle$ on the purified-oracle and auxiliary registers satisfy
    \[
        \left|\langle\psi|M_{\mathcal E}|\psi\rangle\right|
        \le \tau_T+
        \kappa\left(\frac{\langle\psi|\Lambda|\psi\rangle}{\Delta}\right)^\gamma.
    \]
\end{definition}

Write $L:=\langle\psi|\Lambda|\psi\rangle$ for the expected size of the existing database. The bound says that the advantages can come from a query-dependent
baseline $\tau_T$ and a database contribution $\kappa(L/\Delta)^\gamma$, in an additive form.
The name ``diffuse'' reflects the weak contribution of a database with
$L\ll\Delta$: for fixed $\gamma$ and $\kappa$, obtaining a constant
advantage above $\tau_T$ requires an expected database size of at least a constant multiple of $\Delta$. This is an additive upper bound; it does
not require the database and subsequent queries to act on orthogonal
subspaces.

The exponent $\gamma$ controls how the database contribution
grows. Our applications use
$\gamma=1/2$ for PRG and Yao's box, and $\gamma=1$ for OWF and
permutation inversion. For a salted database of total database size at most
$P$, averaging over the challenge salt gives expected database size at most
$P/K$. The salting argument therefore replaces the database contribution
by $\kappa(P/(K\Delta))^\gamma$, while leaving the baseline $\tau_T$
unchanged.

\begin{theorem}[Salting for $(\gamma,\Delta)$-diffuse games]
    \label{thm:D-diffuse-salt}
    Fix $T\ge0$ and suppose that the base game $\mathcal G$ is
    $(\gamma,\Delta)$-diffuse at query budget $T$, with parameters
    $\tau_T$ and $\kappa$. Put $r:=2(T_{\Samp}+T+T_{\Ver})$.
    Then, for every integer $S\ge0$ and salt-space size $K\ge1$,
    \[
        \mathsf{Adv}_{\mathcal G_{\mathsf S}}^{\mathsf{QAI}}(S,T)
        \le \sqrt2\left[
            \tau_T+\kappa\left(\frac{Sr}{K\Delta}\right)^\gamma
        \right].
    \]
    This bound holds for both search and decision games.
\end{theorem}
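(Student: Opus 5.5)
The plan is to bound the database-size-restricted operator norm of the averaged salted game operator, and then feed that bound into \cref{cor:qai-from-level-bounds}, applied with the salted projectors $\Pi^{(K)}_{\le P}$ (valid by the remark in \cref{subsec:Salted Filtration}).

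\textbf{Step 1: transfer diffuseness to a fixed salt.} Fix a salted online algorithm and a challenge salt $s\in[K]$. The diffuseness hypothesis in \cref{def:D-diffuse} is exactly of the form \eqref{eq:base-game phi}, with $f(L):=\tau_T+\kappa(L/\Delta)^\gamma$. So \cref{lem:fixed-salt-reduction}\ref{item:restricted to salt} gives, for every unit vector $|\psi\rangle$ on the salt-indexed oracle and auxiliary registers,
\[
|\langle\psi|M_{\mathcal E_s}|\psi\rangle|\le \tau_T+\kappa\left(\frac{\langle\psi|\Lambda_s|\psi\rangle}{\Delta}\right)^\gamma .
\]

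\textbf{Step 2: average over the salt.} Now let $|\psi\rangle\in\mathcal L^{(K)}_P$ be a unit vector. By the triangle inequality,
\[
|\langle\psi|M_{\mathcal E^{(K)}}|\psi\rangle|\le \frac1K\sum_s|\langle\psi|M_{\mathcal E_s}|\psi\rangle| .
\]
Apply the Step 1 bound to each term. Since $0<\gamma\le1$, the map $x\mapsto x^\gamma$ is concave on $[0,\infty)$, so Jensen's inequality gives
\[
\frac1K\sum_s\left(\frac{\langle\psi|\Lambda_s|\psi\rangle}{\Delta}\right)^\gamma\le\left(\frac{\sum_s\langle\psi|\Lambda_s|\psi\rangle}{K\Delta}\right)^\gamma\le\left(\frac{P}{K\Delta}\right)^\gamma ,
\]
where the last step uses \eqref{eq:averaged lambda inequality}. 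This yields $|\langle\psi|M_{\mathcal E^{(K)}}|\psi\rangle|\le\tau_T+\kappa(P/(K\Delta))^\gamma$.

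\textbf{Step 3: pass to the operator norm.} $M_{\mathcal E^{(K)}}$ is Hermitian, since each $E^{\boldsymbol H}_s$ is. Hence $\Pi^{(K)}_{\le P}M_{\mathcal E^{(K)}}\Pi^{(K)}_{\le P}$ is Hermitian, and its operator norm equals the supremum of $|\langle\psi|M_{\mathcal E^{(K)}}|\psi\rangle|$ over unit vectors $|\psi\rangle$ in its support $\mathcal L^{(K)}_P$. Step 2 therefore bounds this norm by $\tau_T+\kappa(P/(K\Delta))^\gamma$, uniformly over salted online algorithms.

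\textbf{Step 4: conclude.} Set $P=Sr$ and apply the salted version of \cref{cor:qai-from-level-bounds}. This gives the factor $\sqrt2$ and the stated bound. Nothing in Steps 1--3 used positivity of $M_{\mathcal E_s}$, only Hermiticity and the absolute-value form of the diffuseness bound, so the argument covers both search and decision games.

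\textbf{Main obstacle.} The delicate point is justifying that \cref{cor:qai-from-level-bounds} transfers verbatim to the salted hierarchy. This requires the initial state $|\Omega_{\mathcal F}\rangle^{\otimes K}\in\mathcal L^{(K)}_0$ and the fact that each query, including those routed through $\Query_{\mathsf S}$, raises the total database size by at most one, so that one application of $M_{\mathcal E^{(K)}}$ raises it by at most $r$. The paper asserts both facts just before this section, and I would rely on them. Everything else is the routine concavity and averaging computation above.
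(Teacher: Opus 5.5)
Your proposal is correct and follows the paper's proof essentially step for step. The paper proceeds in the same order:
\begin{itemize}
\item it applies part (ii) of the fixed-salt reduction lemma, with $f(L)=\tau_T+\kappa(L/\Delta)^\gamma$;
\item it uses the triangle inequality, Jensen's inequality for the concave nondecreasing map $x\mapsto x^\gamma$, and $\sum_s\langle\psi|\Lambda_s|\psi\rangle\le P$ to bound $|\langle\psi|M_{\mathcal E^{(K)}}|\psi\rangle|$ on $\mathcal L^{(K)}_P$;
\item it uses Hermiticity to pass to the restricted operator norm;
\item it concludes with the salted form of the corollary at $P=Sr$.
\end{itemize}
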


\begin{proof}
    Fix a $T$-query salted online algorithm and a unit vector
    $|\psi\rangle\in\mathcal L_P^{(K)}$.
    By \hyperref[item:restricted to salt]
    {\cref*{lem:fixed-salt-reduction}\ref*{item:restricted to salt}},
    \begin{align*}
        \left|\langle\psi|M_{\mathcal E^{(K)}}|\psi\rangle\right|
        &\le \frac1K\sum_{s\in[K]}
            \left|\langle\psi|M_{\mathcal E_s}|\psi\rangle\right|\\
        &\le \tau_T+\frac\kappa K\sum_{s\in[K]}
            \left(\frac{\langle\psi|\Lambda_s|\psi\rangle}{\Delta}\right)^\gamma\\
        &\le \tau_T+\kappa\left(\frac{P}{K\Delta}\right)^\gamma.
    \end{align*}
    The last inequality follows from
    \eqref{eq:averaged lambda inequality} and Jensen's inequality,
    since $x\mapsto x^\gamma$ is concave and nondecreasing for
    $0<\gamma\le1$.

    Since $M_{\mathcal E^{(K)}}$ is Hermitian, this gives
    \[
        \left\|\Pi_{\le P}^{(K)}M_{\mathcal E^{(K)}}
            \Pi_{\le P}^{(K)}\right\|_{\mathrm{op}}
        \le \tau_T+\kappa\left(\frac{P}{K\Delta}\right)^\gamma.
    \]
    This bound holds for every $T$-query salted online algorithm,
    so applying \cref{cor:qai-from-level-bounds} with $P=Sr$
    proves the claim.
\end{proof}

By \cref{thm:bf-operator-characterization}, bit-fixing
bounds control the size-restricted operator norm at each database size.
For salting, total database size at most $P$ bounds the sum of expected local database sizes by $P$, but each salt's local support can still reach
database size $P$. The following lemma converts a linear size cutoff bound
into a bound in terms of the expected database size.
We use it for OWF and permutation inversion; the decision-game
analogue appears in \cref{subsec:pf of lem:back to psi decision}. 
See \cref{subsec:pf of lem:back to psi search} for a full proof.

\begin{lemma}
    \label{lem:back to psi search}
    Fix an integer $T\ge0$. For a search game $\mathcal G=(\mathcal C)$ with $C=(\Samp,\Query,\Ver)$, put $r:=\max\{1,2(T_{\Samp}+T+T_{\Ver})\}$. Suppose there exist $\tau_T\ge0$, a universal constant $\kappa>0$, and a constant $\Delta>0$ determined by $\mathcal G$ such that, for every $T$-query online algorithm and every integer $P\ge0$,
    \begin{equation}
        ||\Pi_{\le P}M_{\mathcal E}\Pi_{\le P}||_{op}\le \tau_T + \kappa\frac{P}{\Delta}.
        \label{eq:BF bound search}
    \end{equation}
    Then $\mathcal G$ is $(1,\Delta)$-diffuse at query budget $T$ with parameters
    \[
        \tau_T':=2\left(\tau_T+\frac{\kappa(r-1)}{\Delta}\right),\qquad \kappa':=2\kappa.
    \]
\end{lemma}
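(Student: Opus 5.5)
The plan is to exploit two structural facts: the game operator is PSD in the search case, and $M_{\mathcal E}$ is \emph{banded} with respect to the database hierarchy. We group the exact-size components of an arbitrary unit vector $|\psi\rangle$ into consecutive blocks of width $r$. We then apply the hypothesis \eqref{eq:BF bound search} to each block separately. Finally, we use an even/odd splitting so that cross terms between blocks cost only a factor of $2$.

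\emph{Step 1 (bandedness).} Since $M_{\mathcal E}$ is built from $V^H$, $\Pi$ and $(V^H)^\dagger$, and each of $V^H$ and $(V^H)^\dagger$ makes at most $T_{\Samp}+T+T_{\Ver}$ queries, \cref{fact:function-query-level,fact:permutation-query-level} give $M_{\mathcal E}\mathcal L_P\subseteq\mathcal L_{P+r}$ for every $P$. The same holds with auxiliary registers appended. Because $M_{\mathcal E}$ is Hermitian, it follows that $\Pi_{=i}M_{\mathcal E}\Pi_{=j}=0$ whenever $|i-j|>r$. (We take $r\ge 1$, as in the statement.)

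\emph{Step 2 (blocking).} For $m\ge0$, let $Q_m:=\sum_{j=mr}^{mr+r-1}\Pi_{=j}$ and $|\phi_m\rangle:=Q_m|\psi\rangle$, so that $|\psi\rangle=\sum_m|\phi_m\rangle$ orthogonally. Sizes in block $m$ and block $m+2$ differ by at least $r+1$, so Step 1 gives $Q_mM_{\mathcal E}Q_{m'}=0$ whenever $|m-m'|\ge2$. Set $|\psi_{\mathrm e}\rangle:=\sum_{m\text{ even}}|\phi_m\rangle$ and $|\psi_{\mathrm o}\rangle:=\sum_{m\text{ odd}}|\phi_m\rangle$. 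Then $\langle\psi_{\mathrm e}|M_{\mathcal E}|\psi_{\mathrm e}\rangle=\sum_{m\text{ even}}\langle\phi_m|M_{\mathcal E}|\phi_m\rangle$, and similarly for the odd part. Since $0\preceq M_{\mathcal E}$, the square root $M_{\mathcal E}^{1/2}$ exists, and $\|a+b\|^2\le2\|a\|^2+2\|b\|^2$ yields
\[
\langle\psi|M_{\mathcal E}|\psi\rangle\le 2\sum_{m\ge0}\langle\phi_m|M_{\mathcal E}|\phi_m\rangle .
\]

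\emph{Step 3 (apply the bit-fixing bound).} Each $|\phi_m\rangle$ lies in $\mathcal L_{mr+r-1}$. Hence \eqref{eq:BF bound search} with $P=mr+r-1$ gives
\[
\langle\phi_m|M_{\mathcal E}|\phi_m\rangle\le\Bigl(\tau_T+\kappa\tfrac{r-1}{\Delta}+\kappa\tfrac{mr}{\Delta}\Bigr)\|\phi_m\|^2 .
\]
Every size $j$ in block $m$ satisfies $j\ge mr$, so $\sum_m mr\|\phi_m\|^2\le\sum_j j\|\Pi_{=j}\psi\|^2=\langle\psi|\Lambda|\psi\rangle$. Moreover $\sum_m\|\phi_m\|^2=1$. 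Summing the block bounds and multiplying by $2$ gives
\[
|\langle\psi|M_{\mathcal E}|\psi\rangle|\le 2\Bigl(\tau_T+\tfrac{\kappa(r-1)}{\Delta}\Bigr)+2\kappa\,\tfrac{\langle\psi|\Lambda|\psi\rangle}{\Delta},
\]
which is exactly the $(1,\Delta)$-diffuse condition with $\tau_T'$ and $\kappa'$ as stated.

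The main obstacle is keeping the constant at $2$ rather than something like $3$ or $2r+1$. Naive Cauchy--Schwarz over adjacent blocks, or over single sizes $j$ within the band, loses more. The even/odd splitting combined with positivity of $M_{\mathcal E}$ is the key device here. The bandedness claim in Step 1, including its validity for the permutation hierarchy (where we use the non-basis projections $\Pi^{\perm}_{=j}$), should also be checked carefully. This relies only on the inclusion $M_{\mathcal E}\mathcal L_P\subseteq\mathcal L_{P+r}$ together with Hermiticity, so it should go through uniformly.
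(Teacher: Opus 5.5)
Your proposal is correct and follows the paper's proof essentially step for step. It uses the same width-$r$ blocks, the same even/odd split exploiting that $M_{\mathcal E}$ is banded and positive semidefinite, and the same per-block application of the size-restricted bound with $mr\le j$ to recover $\langle\psi|\Lambda|\psi\rangle$. The only cosmetic difference is the cross-term bound: you use $\|a+b\|^2\le 2\|a\|^2+2\|b\|^2$ applied to $M_{\mathcal E}^{1/2}$, whereas the paper uses Cauchy--Schwarz $|\langle\psi_e|M_{\mathcal E}|\psi_o\rangle|\le\sqrt{d_e d_o}$ followed by AM--GM, which is equivalent.
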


\ifnum\shorter=1
\section{Applications to Standard Security Games}
\label{sec:Applications to Standard Security Games}
In this section, we briefly summarize the applications of our results. The full details of the game formalizations and proofs are presented in \cref{app:Applications to Standard Security Games}, and the supporting lemmas are proved in \revised{\cref{sec:Missing Proofs}}.



\begin{definition}[PRG Game, informal]
  Let \(H\sample\{H:[N]\to[M]\}\). The PRG game $\mathcal G_{\mathsf{PRG}}$ asks to distinguish $H(x)$ for random $x\in [N]$ from random $y\in [M]$.
\end{definition}

\begin{lemma}
  \label{lem:prg-degree bound}
  For every unit vector $\lvert\psi\rangle$ on the purified-oracle register and any auxiliary registers, 
  \begin{equation}
    \abs{\langle\psi\rvert
      M_{\mathcal E_{\mathsf{PRG}}}
      \lvert\psi\rangle}
    \leq
    \min\!\left\{
      1,
      \sqrt{\frac{\langle\psi\rvert\Lambda\lvert\psi\rangle}{N}}+\frac{2T^2}{N}
    \right\}.
  \end{equation}
  Consequently, for every integer $P\geq0$,
  \begin{equation}
    ||\Pi_{\le P} M_{\mathcal E_{\mathsf{PRG}}}\Pi_{\le P}||_{op} \le 
    \min\left\{ 1, \sqrt{\frac{P}{N}}+\frac{2T^2}{N} \right\}.
  \end{equation}
\end{lemma}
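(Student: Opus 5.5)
The plan is a two-hybrid argument carried out directly on the purified oracle register. Recall the decision structure of $\mathcal G_{\mathsf{PRG}}$: the sampler draws $b\sample\{0,1\}$ and $x\sample[N]$. It gives the online algorithm $y=H(x)$ if $b=0$, and a fresh $h\sample[M]$ if $b=1$. Hence
\[
\langle\psi|M_{\mathcal E_{\mathsf{PRG}}}|\psi\rangle=\tfrac12\bigl(p_{\mathrm{real}}-p_{\mathrm{rand}}\bigr),
\]
where $p_{\mathrm{real}}$ and $p_{\mathrm{rand}}$ are the probabilities that the online algorithm, started from $|\psi\rangle$, outputs $0$ in the two worlds. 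I will compare both worlds to an intermediate world, in which the algorithm receives a fresh $h$ but queries the reprogrammed oracle $H_{x\to h}$.

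\textbf{Step 1 (real vs.\ intermediate; the database term).} For a fixed $x$, split $|\psi\rangle=|\psi^{x}_0\rangle+|\psi^{x}_1\rangle$ in the Fourier basis, according to whether $D(x)=0$ or $D(x)\neq0$. On $|\psi^x_0\rangle$, the coordinate $O_x$ is in the uniform state $|\phi_0\rangle$ and is unentangled with everything else. Consequently, the real experiment (give $H(x)$ and query $H$) is literally the same linear map as ``draw a fresh uniform $h$, write it into coordinate $x$, give $h$''. In other words, it coincides with the intermediate experiment on this component. I will formalize this as an isometry $R_x$ that replaces register $O_x$ by a fresh uniform register. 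I then check that $R_x$ acts as the real-world preparation on the $D(x)=0$ subspace.

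Both worlds are described by operators of norm at most $1$ (acceptance POVMs pulled back). The difference of the two acceptance probabilities is therefore bounded by cross terms involving $|\psi^x_1\rangle$. This gives at most $2\lVert\psi^x_1\rVert+\lVert\psi^x_1\rVert^2\le 3\lVert\psi^x_1\rVert$ on the $p$ scale, and I will aim to sharpen it to $2\lVert\psi^x_1\rVert$ by writing the difference as $\langle\psi|(A-B)|\psi\rangle$ with $A-B$ vanishing on $\Pi_{D(x)=0}$ on both sides. Halving yields $\lVert\psi^x_1\rVert$. By \eqref{eqn: observable decomposition function},
\[
\E_x\lVert\psi^x_1\rVert^2=\E_x\sum_{D:D(x)\ne0}|\langle\phi_D|\psi\rangle|^2=\langle\psi|\Lambda|\psi\rangle/N,
\]
so Jensen's inequality gives $\E_x\lVert\psi^x_1\rVert\le\sqrt{\langle\psi|\Lambda|\psi\rangle/N}$.

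\textbf{Step 2 (intermediate vs.\ random; the query term).} Both experiments give the algorithm the same fresh $h$. They differ only in whether queries go to $H_{x\to h}$ or to $H$, with $x$ uniform and independent of everything the algorithm holds, including $|\psi\rangle$. The oracle register acts as a control in the computational basis, and the online operator is block-diagonal in $H$. So I can condition on each $H$ and on the corresponding (unnormalized) branch of the input state, and then apply \cref{lem:uniform-one-point-reprogramming} for each fixed $H$ and $h$. This gives an acceptance-probability gap of at most $4T^2/N$ per branch. Averaging preserves this bound, and the factor $\tfrac12$ turns it into $2T^2/N$. The triangle inequality over the two steps gives the first claim, and the $\min\{1,\cdot\}$ is trivial since $\lVert M_{\mathcal E}\rVert_{\mathrm{op}}\le1$. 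For the consequence, take $|\psi\rangle\in\mathcal L_P$, so that $\langle\psi|\Lambda|\psi\rangle\le P$. Since $M_{\mathcal E_{\mathsf{PRG}}}$ is Hermitian, the operator norm of $\Pi_{\le P}M\Pi_{\le P}$ equals the supremum of $|\langle\psi|M|\psi\rangle|$ over such $|\psi\rangle$.

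\textbf{Main obstacle.} The delicate point is Step 1: making precise that on the $D(x)=0$ component, ``answer with $H(x)$'' is exactly ``reprogram $x$ to a fresh $h$ and answer $h$''. This must hold coherently and with the correct normalization, even though $|\psi\rangle$ is entangled with the rest of $O$. It also requires ordering the two steps consistently: Step 1 is done in the Fourier picture, while Step 2 conditions on computational-basis $H$. I will handle this by writing the intermediate world as the real world precomposed with the isometry $R_x$, checking $R_x\Pi_{D(x)=0}=\Pi_{D(x)=0}$ up to this identification, and only afterwards passing to the computational-basis mixture for Step 2.
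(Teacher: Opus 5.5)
Your proposal is correct and essentially matches the paper's proof. Both use the same three-world hybrid through the reprogrammed world, which is the paper's $p_\star$: the challenge is a fresh $h$ and queries go to $H_{x\to h}$. Both also apply \cref{lem:uniform-one-point-reprogramming} to each $(H,h)$ branch for the $2T^2/N$ term.

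The only variation is in the database term. The paper bounds $\lVert\Gamma_{\unif}-\Gamma_{\img}\rVert_2^2\le 4\langle\psi|\Lambda|\psi\rangle/N$ via \cite{CGLQ20} and then uses $|\langle u|E|u\rangle-\langle v|E|v\rangle|\le\lVert u-v\rVert_2$. You instead split at $D(x)=0$ and average with Jensen.

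Your intended sharpening does go through. With $X:=A_x-B_x$, only the compression $\Pi_{D(x)=0}X\Pi_{D(x)=0}$ vanishes, not $X\Pi_{D(x)=0}$. Hence $\langle\psi|X|\psi\rangle=\langle\psi|X|\psi^x_1\rangle+\langle\psi^x_1|X|\psi^x_0\rangle$, whose modulus is at most $\lVert\psi^x_1\rVert(1+\lVert\psi^x_0\rVert)\le 2\lVert\psi^x_1\rVert$. This is exactly what you need to match the stated constant.
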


\begin{theorem}
    For all integers $S,T\ge0$, $    \epsilon_{\mathsf{PRG}}^{\mathsf{QAI}}(S,T)
    =O\left(
      \sqrt{\frac{S(T+1)}{N}}+\frac{T^2}{N}
    \right).$
\end{theorem}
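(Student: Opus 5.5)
The theorem follows directly from \cref{cor:qai-from-level-bounds} combined with the size-restricted bound of \cref{lem:prg-degree bound}. The plan has three steps.

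First, I determine the query counts of the PRG game as formalized. The sampler $\Samp^H$ draws a bit $b$ and either $x\sample[N]$, setting $y=H(x)$ with one query, or $y\sample[M]$ directly. Coherently this costs at most one query, so $T_{\Samp}\le 1$. The verifier only compares the adversary's guess with $b$, so $T_{\Ver}=0$. Each $\Query^H$ call is a single query to $H$. Hence
\[
P:=2S(T_{\Samp}+T+T_{\Ver})\le 2S(T+1).
\]

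Second, I apply \cref{cor:qai-from-level-bounds} with this $P$. For every $T$-query online algorithm, \cref{lem:prg-degree bound} gives
\[
\bigl\|\Pi_{\le P}M_{\mathcal E_{\mathsf{PRG}}}\Pi_{\le P}\bigr\|_{\mathrm{op}}\le \sqrt{P/N}+2T^2/N.
\]
Since this bound is uniform over online algorithms and their input registers, the supremum in the corollary obeys the same bound. Therefore
\[
\epsilon^{\mathsf{QAI}}_{\mathsf{PRG}}(S,T)\le \sqrt2\left(\sqrt{\frac{2S(T+1)}{N}}+\frac{2T^2}{N}\right)=O\!\left(\sqrt{\frac{S(T+1)}{N}}+\frac{T^2}{N}\right).
\]
If $S=0$, the same argument applies with $P=0$: the bound reduces to $O(T^2/N)$, which is the uniform PRG bound from \cref{lem:prg-degree bound} at $P=0$.

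Third, I use the monotonicity of $\beta(P)$ (noted in the proof of \cref{thm:QAI bounded by BF}). Because the right-hand side of the lemma is increasing in $P$, replacing $P$ by the upper bound $2S(T+1)$ is harmless.

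I expect no step to be a real obstacle, since all the analytic work sits in \cref{lem:prg-degree bound}. The only point needing care is the bookkeeping of $T_{\Samp}$. That requires fixing the formal PRG game so the sampler makes exactly one oracle query, even in the branch that outputs a uniform $y$, where one can pad with a dummy query. This ensures $r=2(T+1)$ and that the database-size increase per application of $M_{\mathcal E}$ is correctly accounted for.
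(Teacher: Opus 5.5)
Your proposal is correct and follows essentially the same route as the paper: the paper applies \cref{cor:qai-from-level-bounds} with $P=Sr_{\mathsf{PRG}}=2S(T+1)$ to the uniform size-restricted bound of \cref{lem:prg-degree bound}. It obtains exactly your explicit constant, $2\sqrt{S(T+1)/N}+2\sqrt{2}\,T^2/N$.
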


\begin{proof}
The size-restricted bound in \cref{lem:prg-degree bound} holds uniformly
over all $T$-query online algorithms. Applying
\cref{cor:qai-from-level-bounds} with $P=Sr_{\mathsf{PRG}} = 2S(T+1)$ gives
\ifnum\shorter=1
the desired bound.
\else
\[
    \epsilon_{\mathsf{PRG}}^{\mathsf{QAI}}(S,T)
    \le
    2\sqrt{
        \frac{S(T+1)}{N}
    }
    +
    2\sqrt{2}\frac{T^2}{N}.
\]
\fi
\end{proof}

\begin{theorem}
  Let $\mathsf{PRG}_{\mathsf{S}}$ be the \revised{salted} PRG game with salt space $[K]$. 
  For all $K,N,M\ge 1$ and $S,T\ge 0$, $\epsilon_{\mathsf{PRG}_{\mathsf{S}}}^{\mathsf{QAI}}(S,T)
    =
    O\left(\sqrt{\frac{S(T+1)}{KN}}+\frac{T^2}{N}\right).$
\end{theorem}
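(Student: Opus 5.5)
The salted PRG bound should follow from \cref{thm:D-diffuse-salt}, with \cref{lem:prg-degree bound} supplying the diffuseness hypothesis. The first inequality in \cref{lem:prg-degree bound} says that for every $T$-query PRG adversary and every unit vector $|\psi\rangle$ on the purified oracle and auxiliary registers,
\[
    \bigl|\langle\psi|M_{\mathcal E_{\mathsf{PRG}}}|\psi\rangle\bigr| \le \frac{2T^2}{N}+\left(\frac{\langle\psi|\Lambda|\psi\rangle}{N}\right)^{1/2}.
\]
This is exactly the statement that $\mathcal G_{\mathsf{PRG}}$ is $(1/2,N)$-diffuse at query budget $T$, with parameters $\tau_T=2T^2/N$ and $\kappa=1$. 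The point is that the bound is in terms of the expected database size $\langle\psi|\Lambda|\psi\rangle$, not a hard cutoff. So no conversion lemma is needed, unlike in the search case of \cref{lem:back to psi search}.

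Next I would fix the query parameters. The PRG sampler makes one query, to compute $H(x)$ in the real world (the random world makes none, so one query is an upper bound). The verifier makes none, since it only compares the guess with the hidden bit. Hence $r=2(T+1)$. Here the salted game has $K$ independent copies $H_1,\dots,H_K$, the challenge $(s,H_s(x))$ or $(s,y)$, and the query interface of \cref{def:public-salt}.

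Applying \cref{thm:D-diffuse-salt} then gives
\[
    \epsilon^{\mathsf{QAI}}_{\mathsf{PRG}_{\mathsf S}}(S,T)\le\sqrt2\left[\frac{2T^2}{N}+\left(\frac{2S(T+1)}{KN}\right)^{1/2}\right] = O\!\left(\sqrt{\frac{S(T+1)}{KN}}+\frac{T^2}{N}\right).
\]
This is valid for all $K,N,M\ge1$ and $S,T\ge0$. The case $S=0$ is covered because \cref{cor:qai-from-level-bounds} holds with $P=0$.

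The main obstacle is confirming that \cref{thm:D-diffuse-salt}, through \cref{lem:fixed-salt-reduction}(ii), genuinely applies to the PRG operator. Two things must hold. First, the hypothesis \eqref{eq:base-game phi} must hold for every unit vector on the purified oracle together with arbitrary auxiliary registers, not only for vectors in some $\mathcal L_P$. \Cref{lem:prg-degree bound} is stated in exactly that generality, so this is fine. Second, the off-salt oracles $H_{s'}$ for $s'\ne s$, together with queries to them, must be absorbable into the auxiliary workspace. This is where I would spend the effort. I would check that the salted sampler and verifier touch only $O_s$, and that each salted query to $s'\neq s$ is an $H_s$-independent controlled query on the other registers. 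That would identify $\langle\psi|\Lambda_s|\psi\rangle$ with the base-game $\Lambda$, after which the averaging and the Jensen step (using concavity of $x\mapsto x^{1/2}$) in \cref{thm:D-diffuse-salt} finish the argument.
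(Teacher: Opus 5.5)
Your proposal is correct and matches the paper's proof. You use \cref{lem:prg-degree bound} to get $(1/2,N)$-diffuseness with $\tau_T=2T^2/N$ and $\kappa=1$, and then apply \cref{thm:D-diffuse-salt} with $r=2(T+1)$, which is exactly how the paper argues.
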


\begin{proof}
    \cref{lem:prg-degree bound} shows that the PRG game is \revised{$(1/2,N)$-diffuse at query budget $T$} with
    \ifnum\shorter=1
    $\tau_T={2T^2}/{N},\kappa={1}.$ \cref{thm:D-diffuse-salt} proves the theorem.
    \else
    \[
            \tau_T=\frac{2T^2}{N},
            \qquad
            \kappa=\revised{1}.
    \]
   \cref{thm:D-diffuse-salt} proves the theorem.
   \fi
\end{proof}


\begin{definition}[Yao's Box]
  Let $H\sample\{H:[N]\rightarrow \{0,1\}\}$. The Yao's box game $\mathcal G_{\mathsf{YB}}$ asks to determine $H(x)$ for $x\sample[N]$ given access to $H'$ that is identical to $H$ except for $x$; $H'(x):=\bot$.
\end{definition}

\begin{lemma}
  \label{lem:yb degree bound}
  For every unit vector $\lvert\psi\rangle$ on the purified-oracle register and any auxiliary registers, 
  \[
    |{\langle\psi\rvert
      M_{\mathcal{E}_{\mathsf{YB}}}
      \lvert\psi\rangle}|
    \leq
    \min\!\left\{\frac12,
      \revised{(1+\sqrt{2})}\sqrt{\frac{\langle\psi\rvert\Lambda\lvert\psi\rangle}{N}}
    \right\}.
  \]
  Consequently, for every integer $P\geq0$,
  \[
    ||\Pi_{\le P} M_{\mathcal{E}_{\mathsf{YB}}}\Pi_{\le P}||_{op}
    \leq
    \min\!\left\{\frac12,\revised{(1+\sqrt{2})}\sqrt{\frac PN}\right\}.
  \]
\end{lemma}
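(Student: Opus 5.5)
We prove the pointwise bound $|\langle\psi|M_{\mathcal E_{\mathsf{YB}}}|\psi\rangle|\le(1+\sqrt2)\sqrt{L/N}$, where $L:=\langle\psi|\Lambda|\psi\rangle$. The second statement then follows at once. For a unit vector $|\psi\rangle\in\mathcal L_P$ we have $L\le P$, and $M_{\mathcal E_{\mathsf{YB}}}$ is Hermitian, so the restricted operator norm is the supremum of $|\langle\psi|M|\psi\rangle|$ over such vectors. The cap $1/2$ is automatic: in a decision game $E^H=(E^H_{\mathsf T}-E^H_{\mathsf F})/2$ with $E^H_{\mathsf T}+E^H_{\mathsf F}=I$, so $\|M_{\mathcal E}\|_{\mathrm{op}}\le 1/2$.

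The plan mirrors the decision half of \cref{thm:generic-public-salting}, with the role of $\Pi_{s,=0}$ played by a challenge-dependent projection. Write $M_{\mathcal E_{\mathsf{YB}}}=\E_{x\sample[N]}M_x$, where $M_x$ is the purified operator with the challenge fixed to $x$. (The sampler makes a single classical query at $x$, which I absorb into the verifier.) Let $Q_x$ be the projection onto Fourier states $|\phi_D\rangle$ with $D(x)=0$, i.e.\ databases not containing $x$. The key structural claim is that if $|\psi_0\rangle\in\operatorname{im}Q_x$, then $\langle\psi_0|M_x|\psi_0\rangle=0$.

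To justify the claim, view the oracle as $H=(H|_{[N]\setminus\{x\}},H(x))$. On $\operatorname{im}Q_x$ the register holding $H(x)$ is in the uniform state $|\phi_0\rangle$ and is unentangled with everything else. The adversary's queries go through $H'$, which never touches $H(x)$, so the bit $H(x)$ remains uniform and independent of the adversary's output. The acceptance probability is therefore exactly $1/2$, i.e.\ zero bias. This holds for any state on the remaining registers, and also in superposition, since $M_x$ acts as identity tensor something on that coordinate together with a diagonal check of $H(x)$.

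Now split $|\psi\rangle=|\psi_0^x\rangle+|\psi_1^x\rangle$ with $|\psi_0^x\rangle=Q_x|\psi\rangle$. Expanding the quadratic form gives
\[
|\langle\psi|M_x|\psi\rangle|\le 2|\langle\psi_0^x|M_x|\psi_1^x\rangle|+|\langle\psi_1^x|M_x|\psi_1^x\rangle|\le \|\psi_1^x\|_2+\tfrac12\|\psi_1^x\|_2^2 .
\]
Averaging over $x$ uses $\sum_x(I-Q_x)=\Lambda$, which holds because $\sum_x\mathbf 1[D(x)\neq0]=|D|$. Hence $\E_x\|\psi_1^x\|_2^2=L/N$, and by Jensen $\E_x\|\psi_1^x\|_2\le\sqrt{L/N}$. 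This yields $|\langle\psi|M|\psi\rangle|\le\sqrt{L/N}+\tfrac12 L/N$. When $L/N\le1$ this is at most $\tfrac32\sqrt{L/N}\le(1+\sqrt2)\sqrt{L/N}$. When $L/N>1$ the bound is trivial from the cap $1/2$.

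The main obstacle is making the structural claim rigorous: $M_x$ must genuinely have zero diagonal on $\operatorname{im}Q_x$, even though the verifier reads $H(x)$ and the adversary's queries to $H'$ are controlled on the oracle register. If the cross term $\langle\psi_0^x|M_x|\psi_1^x\rangle$ needs more care — for instance if $M_x$ does not preserve the split — I would instead bound it by $\|M_x\psi_0^x\|\,\|\psi_1^x\|$. I expect the slack between $3/2$ and $1+\sqrt2$ to absorb that loss.
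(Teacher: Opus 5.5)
Your proof is correct. It shares the paper's skeleton: your $Q_x$ is the paper's $F_x$, the $\operatorname{im}Q_x$ component wins with probability exactly $1/2$, and the averaging uses $\sum_x(I-Q_x)=\Lambda$. You differ in how the quadratic form is bounded.

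The paper imports from \cite[Lemma~6.2]{CGLQ20} the inequality $\sqrt{\Pr[\mathsf{win}\mid x]}\le\sqrt{p_x}+\sqrt{1/2}$. This is a triangle inequality for $\|\sqrt{E}\,\cdot\,\|$ on the positive winning operator. The paper squares it and averages to get $\Pr[\mathsf{win}]-\tfrac12\le L/N+\sqrt{2L/N}$, then reruns the argument with the output bit flipped to get the other sign.

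You instead work with the signed bias operator $M_x=\sum_H|H\rangle\langle H|\otimes(E^H_{x,H(x)}-I/2)$. You use the exact vanishing of its diagonal block on $\operatorname{im}Q_x$ and bound the cross and remainder terms by $\|M_x\|_{\mathrm{op}}\le 1/2$. This is the same device as the decision-game part of \cref{thm:generic-public-salting}. Your route handles both signs at once and does not depend on the internals of the cited lemma. It also gives the sharper $\sqrt{L/N}+\tfrac12 L/N\le\tfrac32\sqrt{L/N}$ instead of $(1+\sqrt2)\sqrt{L/N}$: your cross term carries the norm $1/2$ of the centered operator, whereas the square-root route pays $\sqrt{1/2}\cdot\|\sqrt{E}\|\le\sqrt{1/2}$ per factor of $2$.

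The step you flag as the main obstacle holds exactly, with no slack needed. A vector lies in $\operatorname{im}Q_x$ iff it has the form $|\chi\rangle_{O_{-x}Z}\otimes\tfrac{1}{\sqrt2}(|0\rangle+|1\rangle)_{O_x}$. The adversary's view depends only on $G:=H|_{[N]\setminus\{x\}}$, so
\[
M_x=\sum_{G,h}|G\rangle\langle G|\otimes|h\rangle\langle h|\otimes\bigl(E^G_{x,h}-\tfrac12 I\bigr).
\]
Averaged over $h\in\{0,1\}$ this vanishes because $E^G_{x,0}+E^G_{x,1}=I$. Your operator-norm bound on the cross term never needs $M_x$ to preserve the split, so no fallback is required. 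A minor point: in Yao's box the sampler makes no query; only the verifier reads $H(x)$.
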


\begin{theorem}
  For all integers $S,T\ge 0$, $ \epsilon_{\mathsf{YB}}^{\mathrm{QAI}}(S,T)
    =
      O\left(
      \sqrt{\frac{S(T+1)}{N}}\right).$
\end{theorem}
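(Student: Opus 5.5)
The plan mirrors the PRG theorem just above. The only new ingredient is identifying the query parameter $r$ for Yao's box. In $\mathcal G_{\mathsf{YB}}$, the sampler picks $x\sample[N]$ and needs no query to $H$: the box $H'$ agrees with $H$ off $x$ and returns $\bot$ at $x$. The adversary accesses $H'$ through $\Query^H$, and each call makes at most one query to the underlying oracle $H$. The verifier compares the answer with $H(x)$, which costs one query. Hence $T_{\Samp}=0$ and $T_{\Ver}=1$ (or whatever small constant the formal definition fixes), so
\[
    r_{\mathsf{YB}}=2(T_{\Samp}+T+T_{\Ver})=2(T+1).
\]
I will check this against the formal definition in the appendix. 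Any constant change only affects the hidden constant.

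Next, invoke \cref{lem:yb degree bound}. It bounds $\|\Pi_{\le P}M_{\mathcal E_{\mathsf{YB}}}\Pi_{\le P}\|_{\mathrm{op}}\le(1+\sqrt2)\sqrt{P/N}$ for every integer $P\ge0$. The bound is uniform over all $T$-query online algorithms and all choices of input register, since the lemma holds for every unit vector on the purified oracle and auxiliary registers. This is exactly the hypothesis of \cref{cor:qai-from-level-bounds}. Applying that corollary with $P=Sr_{\mathsf{YB}}=2S(T+1)$ gives
\[
    \epsilon^{\mathsf{QAI}}_{\mathsf{YB}}(S,T)\le\sqrt2\,(1+\sqrt2)\sqrt{\frac{2S(T+1)}{N}}=O\!\left(\sqrt{\frac{S(T+1)}{N}}\right).
\]
The case $S=0$ is covered directly: both sides reduce to the uniform game, where the advantage is $0$ because $H(x)$ is uniform and independent of $H'$. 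This is also consistent with the lemma at $P=0$.

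The one step needing care, and the main potential obstacle, is the query accounting. Specifically, $\Query^H$ for the punctured oracle must be implementable with one query to $H$ per call, controlled on the retained challenge $x$, so that \cref{fact:function-query-level} applies and $M_{\mathcal E}$ raises the database size by at most $r$. I would implement it as follows: on input $y$, if $y=x$ write $\bot$ (encoded as an extra symbol in the response alphabet, with no oracle call); otherwise query $H(y)$. To keep the circuit uniform, a dummy query is made in the $y=x$ branch. In every branch exactly one oracle query occurs, so the accounting holds.
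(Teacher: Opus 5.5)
Your proposal is correct and is essentially the paper's proof. With $T_{\Samp}=0$ and $T_{\Ver}=1$ you get $r_{\mathsf{YB}}=2(T+1)$, and applying \cref{cor:qai-from-level-bounds} to \cref{lem:yb degree bound} at $P=2S(T+1)$ gives exactly the paper's bound $(2\sqrt2+2)\sqrt{S(T+1)/N}$. The dummy query is unnecessary, since the database-size accounting only needs at most one query per call; the paper's interface simply returns the constant $1$ at the hidden point without any oracle call.
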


\begin{proof}
The size-restricted bound in \cref{lem:yb degree bound} holds uniformly
over all $T$-query online algorithms. Applying
\cref{cor:qai-from-level-bounds} with $P=Sr_{\mathsf{YB}}=2S(T+1)$ gives
\ifnum\shorter=1
the desired upper bound.
\else
\[
    \epsilon_{\mathsf{YB}}^{\mathsf{QAI}}(S,T)
    \le \revised{(2\sqrt{2}+2)}\sqrt{\frac{S(T+1)}{N}}.
\]
\fi
\end{proof}

\begin{theorem}
  Let $\mathsf{YB}_{\mathsf{S}}$ be the \revised{salted} Yao's box with salt space $[K]$. For all $K,N\ge1$ and $S,T\ge0$,
  $\epsilon_{\mathsf{YB}_\mathsf{S}}^{\mathsf{QAI}}(S,T)
        =
      O\left(
        \sqrt{\frac{S(T+1)}{KN}}
      \right).$
\end{theorem}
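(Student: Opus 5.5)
The salted Yao's box bound follows the same route as the salted PRG theorem: use the diffuseness framework of \cref{thm:D-diffuse-salt}, taking the base-game input from \cref{lem:yb degree bound}.

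\textbf{Step 1: diffuseness of the base game.} \Cref{lem:yb degree bound} gives, for every $T$-query adversary and every unit vector $|\psi\rangle$ on the purified-oracle and auxiliary registers,
\[
|\langle\psi|M_{\mathcal E_{\mathsf{YB}}}|\psi\rangle|\le (1+\sqrt2)\sqrt{\langle\psi|\Lambda|\psi\rangle/N}.
\]
In the language of \cref{def:D-diffuse}, Yao's box is therefore $(1/2,N)$-diffuse at query budget $T$, with parameters $\tau_T=0$ and $\kappa=1+\sqrt2$. No uniform baseline appears, because with an empty database the value $H(x)$ is a uniform bit independent of everything else.

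\textbf{Step 2: the diffuseness lemma transfers to each salt.} The point to check is that \cref{lem:yb degree bound} holds for arbitrary unit vectors, not only those in $\mathcal L_P$. This is exactly the hypothesis \eqref{eq:base-game phi} of \cref{lem:fixed-salt-reduction}\ref{item:restricted to salt}.

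The Yao's box interface $\Query^{H_s}$ is not the plain oracle, since it answers $\bot$ at the challenge point. The salted formalization in \cref{def:public-salt} nevertheless routes queries with $s'=s$ to $\Query^{H_s}(t,x)$. Each salted query therefore makes at most one query to $H_s$, plus $H_s$-independent queries to the other oracles, which are absorbed into the workspace. So the lemma applies with $\Lambda$ replaced by $\Lambda_s$. I expect this verification, namely that the reduction respects the modified query interface, to be the only non-mechanical point. It is already handled by the general statement of \cref{lem:fixed-salt-reduction}.

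\textbf{Step 3: apply \cref{thm:D-diffuse-salt}.} Here $T_{\Samp}$ and $T_{\Ver}$ together contribute one query (sampling or verifying $H(x)$), as in the unsalted case, so $r=2(T+1)$. The theorem then gives
\[
\epsilon_{\mathsf{YB}_{\mathsf S}}^{\mathsf{QAI}}(S,T)\le\sqrt2(1+\sqrt2)\sqrt{\frac{2S(T+1)}{KN}}=O\!\left(\sqrt{\frac{S(T+1)}{KN}}\right).
\]
For $S=0$ this reads $0$, which is consistent: Yao's box has zero uniform advantage.
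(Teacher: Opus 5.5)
Your proposal is correct and matches the paper's proof. \cref{lem:yb degree bound} shows that Yao's box is $(1/2,N)$-diffuse at query budget $T$ with $\tau_T=0$ and $\kappa=1+\sqrt2$, and \cref{thm:D-diffuse-salt} with $r=2(T+1)$ then gives the bound. Your Step~2 only spells out the fixed-salt reduction (\cref{lem:fixed-salt-reduction}), which the proof of \cref{thm:D-diffuse-salt} already uses.
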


\begin{proof}
    \cref{lem:yb degree bound} shows that the Yao's-box game is \revised{$(1/2,N)$-diffuse at query budget $T$} with
    \ifnum\shorter=1
    $\tau_{T}=0,\kappa=\revised{1+\sqrt{2}}.$ Applying \cref{thm:D-diffuse-salt} proves the theorem.
    \else
    \[
        \tau_{T}=0,\qquad \kappa=\revised{1+\sqrt{2}}.
    \]
 \cref{thm:D-diffuse-salt} proves the theorem.
 \fi
\end{proof}


\begin{definition}[OWF Game]
  Let $H\sample\mathcal{F}:=\{H:[N]\rightarrow[M]\}$.  The OWF game $\mathcal G_{\mathsf{OWF}}$ asks to find $x'$ given $H(x)$ for $x\sample [N]$ such that $H(x')=H(x)$.
\end{definition}

\begin{lemma}
  \label{lem:owf-degree bound}
  There exists a universal constant $C_{\mathsf {OWF}}>0$ such that for every unit vector $\lvert\psi\rangle$ on the purified-oracle register and any auxiliary registers,
  \[
    \langle\psi\rvert
      M_{\mathcal{E}_{\mathsf{OWF}}}
      \lvert\psi\rangle
    \leq
    \min\!\left\{1,
      C_{\mathsf {OWF}}
      \frac{ \langle\psi\rvert\Lambda\lvert\psi\rangle+(T+1)^2}{\min\{N,M\}}
    \right\}.
  \]
  Consequently, for every integer $P\ge 0$,
  \[
      ||\Pi_{\le P}M_{\mathcal{E}_{\mathsf{OWF}}}\Pi_{\le P}||_{op}\le \min\left\{1,C_{\mathsf {OWF}}\frac{P+(T+1)^{2}}{\min \{N,M\}}\right\}.
  \]
\end{lemma}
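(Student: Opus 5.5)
The plan is to derive \cref{lem:owf-degree bound} from a bit-fixing bound for function inversion, using \cref{thm:bf-operator-characterization} and then \cref{lem:back to psi search}. The trivial part comes first. Since OWF is a search game, $0\preceq M_{\mathcal E_{\mathsf{OWF}}}\preceq I$, so $\langle\psi|M_{\mathcal E_{\mathsf{OWF}}}|\psi\rangle\le 1$ for every unit vector. The ``consequently'' clause follows from the first display: every unit vector in $\mathcal L_P$ satisfies $\langle\psi|\Lambda|\psi\rangle\le P$, and $M_{\mathcal E_{\mathsf{OWF}}}$ is Hermitian and PSD, so its restricted operator norm is the supremum of $\langle\psi|M|\psi\rangle$ over unit $|\psi\rangle\in\mathcal L_P$. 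The real content is therefore the expected-size bound.

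\textbf{Step 1 (size-cutoff bound).} I will establish that, for every $T$-query online algorithm and every $P\ge0$,
\[
\bigl\|\Pi_{\le P}M_{\mathcal E_{\mathsf{OWF}}}\Pi_{\le P}\bigr\|_{\mathrm{op}}\le C\,\frac{P+(T+1)^2}{\min\{N,M\}}.
\]
By \cref{thm:bf-operator-characterization}, this is equivalent to the $(P,T)$-bit-fixing security of inversion. I would take the bound from the quantum bit-fixing analysis of~\cite{CGLQ20,TCC:GLLZ21}, and re-derive it with the compressed oracle if the exact form with $\min\{N,M\}$ is not stated there.

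The derivation works on a state in $\mathcal L_P$, where the sampler makes one query at a uniformly random $x$ and outputs $y=H(x)$. I will split the success event into three cases.
\begin{itemize}
\item[(a)] $x$ already lies in the support of the offline database. This costs $O(P/N)$, because $x$ is uniform and independent of the offline state.
\item[(b)] $x$ is fresh but $y$ collides with a value recorded in the database. This costs $O(P/M)$.
\item[(c)] The remaining case, where $y$ is a fresh uniform value. The standard compressed-oracle progress-measure argument then bounds the probability that $T$ further queries (plus the one verifier query) produce a preimage by $O((T+1)^2/M)$.
\end{itemize}
Each case is governed by a term controlled by $\min\{N,M\}$. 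I will combine them through the usual square-root triangle inequality on amplitudes, which loses only a constant factor.

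\textbf{Step 2 (from cutoff to expected size).} Apply \cref{lem:back to psi search} with $\tau_T=C(T+1)^2/\min\{N,M\}$, $\kappa=C$, and $\Delta=\min\{N,M\}$. Here $r=2(T_{\Samp}+T+T_{\Ver})=2(T+2)$, since the sampler and the verifier each make one query. The lemma yields, for every unit $|\psi\rangle$,
\[
\langle\psi|M|\psi\rangle\le 2\Bigl(\tfrac{C(T+1)^2+C(2T+3)}{\min\{N,M\}}\Bigr)+2C\tfrac{\langle\psi|\Lambda|\psi\rangle}{\min\{N,M\}}.
\]
Since $2T+3\le 3(T+1)^2$, this is at most $C_{\mathsf{OWF}}(\langle\psi|\Lambda|\psi\rangle+(T+1)^2)/\min\{N,M\}$ with $C_{\mathsf{OWF}}=8C$. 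Taking the minimum with the trivial bound $1$ finishes the proof.

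The main obstacle is Step 1: obtaining a clean bit-fixing bound that is linear in $P$ with denominator $\min\{N,M\}$, rather than $\sqrt{P}$ or $PT$. The delicate point is case (c), where the challenge $y$ is correlated with the oracle register through the sampler's query. I would first try to absorb the sampler's query into the database, raising the size to $P+1$. I would then run Zhandry's analysis of ``database contains a preimage of $y$'', which increases by amplitude $O(\sqrt{(P+t)/M})$ per query. If squaring this gives a $PT^2$ cross-term, I would instead separate the $P$-dependent part by the triangle inequality before running the progress argument. The aim is for the $P$ terms to enter only through cases (a) and (b), which are independent of the online queries.
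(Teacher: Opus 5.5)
Your two-step plan is the paper's proof. The paper takes the bit-fixing bound $\delta^{\mathsf{BF}}_{\mathsf{OWF}}(P,T)\le\min\{1,C_{\mathsf{BF}}(P+(T+1)^2)/\min\{N,M\}\}$ and converts it to the size-restricted norm bound via \cref{thm:bf-operator-characterization}. It then applies \cref{lem:back to psi search} with the same $\tau_T,\kappa,\Delta$ and $r=2(T+2)$, obtaining $C_{\mathsf{OWF}}=8C_{\mathsf{BF}}$.

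The paper cites this bit-fixing bound from \cite[Lemma~4.10]{Liu23}, which already has the $\min\{N,M\}$ denominator, so no re-derivation is needed. This matters because your fallback sketch fails in case (c).

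\begin{itemize}
\item Since $y=H(x)$, $y$ always has the preimage $x$. Grover search over $[N]$ recovers $x$ with probability $\Theta(T^2/N)$ even when $P=0$, so a bound $O((T+1)^2/M)$ is false for $M\gg N$.
\item Once the sampler's query is absorbed into the database, the event ``the database contains a preimage of $y$'' holds trivially. A correct progress measure must therefore track recovery of the hidden point $x$, contributing $O(T^2/N)$, separately from fresh collisions with $y$, contributing $O(T^2/M)$.
\end{itemize}
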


\begin{theorem}
  For all integers $S,T\ge 0$,
  $\delta_{\mathsf{OWF}}^{\mathrm{QAI}}(S,T)
    =
      O\left(\frac{S(T+1) + (T+1)^{2}}{\min \{N,M\}}\right).$
\end{theorem}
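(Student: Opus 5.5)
The proof follows the PRG and Yao's-box pattern: \cref{lem:owf-degree bound} supplies a size-restricted bound that holds uniformly over online algorithms, and \cref{cor:qai-from-level-bounds} turns it into a QAI bound. The first step is to fix the query counts of the OWF game. The sampler draws $x\sample[N]$ and computes $H(x)$ with one query, so $T_{\Samp}=1$. The verifier checks $H(x')=H(x)$, which needs one query to evaluate $H(x')$; the value $H(x)$ can be retained from sampling, so $T_{\Ver}=1$. If it is more convenient to recompute $H(x)$, then $T_{\Ver}=2$, which changes only constants. Hence $r_{\mathsf{OWF}}=2(T_{\Samp}+T+T_{\Ver})=O(T+1)$; with the first convention $r_{\mathsf{OWF}}=2(T+2)\le 4(T+1)$.

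The next step is to apply \cref{cor:qai-from-level-bounds} with $P:=Sr_{\mathsf{OWF}}$. The second statement of \cref{lem:owf-degree bound} bounds $\|\Pi_{\le P}M_{\mathcal E_{\mathsf{OWF}}}\Pi_{\le P}\|_{\mathrm{op}}$ by $\min\{1,C_{\mathsf{OWF}}(P+(T+1)^2)/\min\{N,M\}\}$ for every $T$-query online algorithm and every choice of input register. Taking the supremum over online algorithms and multiplying by $\sqrt2$ gives
\[
\delta^{\mathsf{QAI}}_{\mathsf{OWF}}(S,T)\le \sqrt2\, C_{\mathsf{OWF}}\frac{Sr_{\mathsf{OWF}}+(T+1)^2}{\min\{N,M\}}\le 4\sqrt2\, C_{\mathsf{OWF}}\frac{S(T+1)+(T+1)^2}{\min\{N,M\}},
\]
which is the claimed $O(\cdot)$ bound. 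The case $S=0$ is covered directly, since then $P=0$ and the bound is the uniform one.

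The only step needing care is the query accounting, specifically whether the verifier's queries and the sampler's query are charged so that $r=O(T+1)$. I expect no real obstacle here, because any constant number of extra sampler or verifier queries is absorbed into the constant. All the substantive work lies in \cref{lem:owf-degree bound}, which I take as given.
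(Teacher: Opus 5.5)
Your proposal is correct and follows the paper's proof: you apply \cref{cor:qai-from-level-bounds} with $P=Sr_{\mathsf{OWF}}=2S(T+2)$ to the uniform size-restricted bound of \cref{lem:owf-degree bound}, using the same query counts $T_{\Samp}=T_{\Ver}=1$. Your extra bookkeeping step, $2S(T+2)\le 4S(T+1)$, only makes explicit the constant that the paper leaves inside the $O(\cdot)$.
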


\begin{proof}
The size-restricted bound in \cref{lem:owf-degree bound} holds uniformly
over all $T$-query online algorithms. Applying
\cref{cor:qai-from-level-bounds} with $P=Sr_{\mathsf{OWF}}=2S(T+2)$ gives
\ifnum\shorter=1
the above bound.
\else
\[
    \delta_{\mathsf{OWF}}^{\mathsf{QAI}}(S,T)
    \le C_{\mathsf{OWF}}\sqrt{2}
    \frac{2S(T+2)+(T+1)^2}{\min\{N,M\}}.
\]
\fi
\end{proof}

\begin{theorem}
  Let $\mathsf{OWF}_{\mathsf{S}}$ be the \revised{salted} OWF game with salt space $[K]$. For all $K,N,M\ge1$ and $S,T\ge0$,
  $\delta_{\mathsf{OWF}_\mathsf{S}}^{\mathsf{QAI}}(S,T)
        =
      O\left(\frac{S(T+1)}{K\min\{N,M\}}+\frac{(T+1)^{2}}{\min\{N,M\}}\right).$
\end{theorem}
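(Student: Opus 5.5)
The plan is to show that the OWF game is $(1,\min\{N,M\})$-diffuse at query budget $T$, and then invoke \cref{thm:D-diffuse-salt}. This parallels the salted PRG and Yao's box proofs above. The natural input is the size-restricted bound of \cref{lem:owf-degree bound}:
\[
\|\Pi_{\le P}M_{\mathcal E_{\mathsf{OWF}}}\Pi_{\le P}\|_{\mathrm{op}}\le C_{\mathsf{OWF}}\frac{P+(T+1)^2}{\min\{N,M\}}.
\]
This has exactly the linear form \eqref{eq:BF bound search} required by \cref{lem:back to psi search}, with
\[
\tau_T=\frac{C_{\mathsf{OWF}}(T+1)^2}{\min\{N,M\}},\qquad \kappa=C_{\mathsf{OWF}},\qquad \Delta=\min\{N,M\}.
\]
In fact, \cref{lem:owf-degree bound} already states the pointwise inequality directly in terms of $\langle\psi|\Lambda|\psi\rangle$ for every unit vector $|\psi\rangle$. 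That inequality is itself the diffuseness condition of \cref{def:D-diffuse} with $\gamma=1$, since $M_{\mathcal E_{\mathsf{OWF}}}\succeq0$ makes the absolute value harmless. Either route gives the same conclusion. The direct route avoids the constant-factor loss in \cref{lem:back to psi search}. The lemma route is a fallback if one prefers to rely only on the cutoff bound; it costs only an extra additive $2\kappa(r-1)/\Delta=O(T/\min\{N,M\})$, which is absorbed into $(T+1)^2/\min\{N,M\}$.

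Next, I apply \cref{thm:D-diffuse-salt} with $r=r_{\mathsf{OWF}}=2(T_{\Samp}+T+T_{\Ver})$. The OWF sampler makes one query, computing $H(x)$, and the verifier makes one query, checking $H(x')$. Hence $r=2(T+2)=O(T+1)$. The theorem then yields
\[
\delta^{\mathsf{QAI}}_{\mathsf{OWF}_{\mathsf S}}(S,T)\le\sqrt2\left[\frac{C_{\mathsf{OWF}}(T+1)^2}{\min\{N,M\}}+C_{\mathsf{OWF}}\frac{2S(T+2)}{K\min\{N,M\}}\right],
\]
which is the claimed $O\bigl(\frac{S(T+1)}{K\min\{N,M\}}+\frac{(T+1)^2}{\min\{N,M\}}\bigr)$. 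The case $T=0$ is consistent because $T+2\le 2(T+1)$.

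The one point requiring care, which I expect to be the only real obstacle, is the applicability of \cref{lem:fixed-salt-reduction}\ref{item:restricted to salt} inside \cref{thm:D-diffuse-salt}. The diffuseness inequality must hold for \emph{every} unit vector on the purified-oracle and auxiliary registers, with arbitrary auxiliary workspace. This is needed so that the other salts' oracle registers $O_{t}$, $t\ne s$, can be absorbed as workspace. I would check that \cref{lem:owf-degree bound} is indeed stated over arbitrary auxiliary registers, which it is, and that the OWF query interface in the salted game is just $H_{s'}$ for every $s'$, so that each salted query uses at most one query to $H_s$. With these checks, the proof is a two-line combination like the PRG and Yao's box cases.
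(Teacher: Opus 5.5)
Your proposal is correct and is essentially the paper's proof. The paper reads off $(1,\min\{N,M\})$-diffuseness with $\tau_T=C_{\mathsf{OWF}}(T+1)^2/\min\{N,M\}$ and $\kappa=C_{\mathsf{OWF}}$ directly from the pointwise form of \cref{lem:owf-degree bound} (your ``direct route''), then applies \cref{thm:D-diffuse-salt} with $r=2(T+2)$; your fallback through \cref{lem:back to psi search} is unnecessary, since the paper already uses that lemma inside the proof of \cref{lem:owf-degree bound} to pass from the bit-fixing bound of \cite{Liu23} to the pointwise bound.
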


\begin{proof}
    \cref{lem:owf-degree bound} shows that the OWF game is $(1,\min\{N,M\})$-diffuse at query budget $T$ with
    \ifnum\shorter=1
    $\tau_{T}=\frac{C_{\mathsf {OWF}}(T+1)^2}{\min\{N,M\}}, \kappa=C_{\mathsf {OWF}}.$
    \cref{thm:D-diffuse-salt} proves the theorem.
    \else
    \[
        \tau_{T}=C_{\mathsf {OWF}}\frac{(T+1)^2}{\min\{N,M\}}, \qquad \kappa=C_{\mathsf {OWF}}.
    \]
    \cref{thm:D-diffuse-salt} proves the theorem.
    \fi
\end{proof}


\begin{definition}[Permutation Inversion Game]
  Let $H\sample\mathcal{F}^{\perm}:=S_{N}$.  The permutation inversion game $\mathcal G_{\PermInv}$ asks to find $x$ given $H(x)$ for $x\sample [N]$.
\end{definition}

\begin{lemma}
  \label{lem:perm-degree bound}
  There exists a universal constant $C_{\perm}>0$ such that for every unit vector $\lvert\psi\rangle$ on the purified-oracle and auxiliary registers,
  \[
    \langle\psi\rvert
      M_{\mathcal{{E}_{\PermInv}}}
      \lvert\psi\rangle
    \leq
    \min\!\left\{1,
      C_{\perm}
        \frac{\langle\psi\rvert\Lambda^{\perm}\lvert\psi\rangle+(T+1)^2}{N}
    \right\}
  \]
  Consequently, for every integer $P\ge 0$,
  \[
      ||\Pi_{\le P}^{\perm}M_{\mathcal{E}_{\PermInv}}\Pi_{\le P}^{\perm}||_{op}\le \min\left\{1,C_{\perm}
        \frac{P+(T+1)^2}{N}
      \right\}.
  \]
\end{lemma}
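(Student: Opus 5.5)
The plan is to obtain the linear-in-expected-size bound from a size-restricted bound of the form $\|\Pi^{\perm}_{\le P}M_{\mathcal E}\Pi^{\perm}_{\le P}\|_{\mathrm{op}}\le C(P+(T+1)^2)/N$, which is the permutation bit-fixing security of inversion. Then \cref{lem:back to psi search} converts this into the first displayed inequality, and the second follows directly.

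\textbf{Step 1: bit-fixing bound.} By \cref{thm:bf-operator-characterization}, the size-restricted norm equals the $(P,T)$-bit-fixing success probability for permutation inversion. For this we invoke the known quantum bit-fixing bound for permutation inversion in the QRPM with the permutation database hierarchy of \cite{ABC+26}, namely $O((P+T^2)/N)$; this is the bound underlying their tight $O((ST+T^2)/N)$ QAI result. Concretely, the compressed-permutation analysis tracks the online phase as follows.
\begin{itemize}[label=--, leftmargin=*]
\item Starting from a state in $\mathcal L_P^{\perm}$, the sampler's query to $H(x)$ for uniform $x$ lands in the existing database with probability $O(P/N)$.
\item Otherwise the challenge image is fresh, and each of the $T$ online queries (plus the verifier's query) finds its preimage with amplitude $O(1/\sqrt N)$. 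This gives the $O((T+1)^2/N)$ term via the standard ``hybrid/amplitude grows by $O(\sqrt{1/N})$ per query'' argument.
\end{itemize}
The $(T+1)$ accounts for $T_{\Samp}+T_{\Ver}=2$. Capping at $1$ is trivial since $0\preceq M_{\mathcal E}\preceq I$.

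\textbf{Step 2: from cutoff to expectation.} Apply \cref{lem:back to psi search} with $\Delta=N$, $\kappa=C$, and $\tau_T=C(T+1)^2/N$. Since $r=2(T+2)$, the resulting baseline is
\[
\tau_T'=2\bigl(C(T+1)^2/N+C(r-1)/N\bigr)=O\bigl((T+1)^2/N\bigr),
\]
and $\kappa'=2C$. This yields $\langle\psi|M|\psi\rangle\le C_{\perm}(\langle\psi|\Lambda^{\perm}|\psi\rangle+(T+1)^2)/N$ for a suitable universal constant $C_{\perm}$. The lemma's proof presumably uses the orthogonal sector decomposition \eqref{eqn: permutation orthogonal decomposition}, which is valid for permutations. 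The ``consequently'' part follows because $\langle\psi|\Lambda^{\perm}|\psi\rangle\le P$ on $\mathcal L_P^{\perm}$ and $M$ is PSD, so the restricted norm is the supremum of $\langle\psi|M|\psi\rangle$ there.

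\textbf{Main obstacle.} The hardest step is Step 1 in the permutation setting. The states $|v_\alpha\rangle$ are not orthogonal, so the ``challenge hits the database'' event cannot be isolated by a simple basis projection as in the function case. I would first try to cite the bit-fixing lemma of \cite{ABC+26} directly. Failing that, I would rerun their argument: decompose the post-sampling state into the component where $x\in\operatorname{dom}(\alpha)$, which has squared weight $\le P/N$ by symmetry over $x$, and the remainder. I would then bound the remainder's inversion probability by a per-query progress measure on the sectors $\mathcal D_j^{\perm}$.
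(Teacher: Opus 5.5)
Your proposal matches the paper's proof: cite the permutation bit-fixing bound of \cite{ABC+26}, transfer it to the size-restricted operator norm via \cref{thm:bf-operator-characterization}, and apply \cref{lem:back to psi search} with $\Delta=N$, $\kappa=C$, $\tau_T=C(T+1)^2/N$, absorbing the $(r-1)/N$ term into $O((T+1)^2/N)$ (the paper gets $C_{\perm}=8C_{\mathsf{BF}}$). One minor correction is that in the paper's formalization of $\mathcal G_{\PermInv}$ the sampler outputs a uniform $y$ without querying, so $T_{\Samp}=0$, $T_{\Ver}=1$ and $r=2(T+1)$ rather than $2(T+2)$; this does not affect the bound.
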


\begin{theorem}
  For all integers $S,T\ge 0$,
  $
    \delta_{\PermInv}^{\mathrm{QAI}}(S,T)
    =
      O\left(
      \frac{S(T+1) + (T+1)^{2}}{N}\right).$
\end{theorem}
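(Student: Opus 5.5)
The statement follows the same template as the OWF theorem just above, with the random-function hierarchy replaced by the permutation hierarchy $\{\mathcal L_P^{\perm}\}$. The plan is to feed the size-restricted bound of \cref{lem:perm-degree bound} into \cref{cor:qai-from-level-bounds}. The corollary is stated for games in the QRPM, and its ingredients hold for permutations: \cref{fact:permutation-query-level} gives the database-size increase, and the permutation case is covered in the proof of \cref{thm:bf-operator-characterization}.

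\textbf{Step 1: query count.} For permutation inversion, the sampler draws $x\sample[N]$ and queries $H(x)$ once, so $T_{\Samp}=1$. The verifier checks the answer $x'$ by one query, testing $H(x')=y$, so $T_{\Ver}=1$; comparing $x'=x$ against the retained $x$ would even allow $T_{\Ver}=0$. Hence $r_{\PermInv}=2(T+2)$, and we set $P:=Sr_{\PermInv}=2S(T+2)$.

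\textbf{Step 2: invoke the lemma.} \cref{lem:perm-degree bound} bounds, uniformly over all $T$-query online algorithms and their input-register choices,
\[
\bigl\|\Pi_{\le P}^{\perm}M_{\mathcal E_{\PermInv}}\Pi_{\le P}^{\perm}\bigr\|_{\mathrm{op}}\le C_{\perm}\frac{P+(T+1)^2}{N}.
\]
Taking the supremum over algorithms and applying \cref{cor:qai-from-level-bounds} gives
\[
\delta_{\PermInv}^{\mathsf{QAI}}(S,T)\le \sqrt2\, C_{\perm}\frac{2S(T+2)+(T+1)^2}{N}.
\]
For $S=0$ this reduces to the uniform term. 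Since $T+2\le 2(T+1)$, the right-hand side is $O\bigl((S(T+1)+(T+1)^2)/N\bigr)$, as claimed.

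\textbf{Main obstacle.} There is essentially no obstacle at this level: all the difficulty sits in \cref{lem:perm-degree bound}, which we may assume. The only point needing care is that the permutation projectors $\Pi_{\le P}^{\perm}$ are the ones used in the corollary. This holds because $|\Omega_{\mathcal F^{\perm}}\rangle\in\mathcal L_0^{\perm}$, each query raises the database size by at most one by \cref{fact:permutation-query-level}, and for $P>N-1$ the space $\mathcal L_P^{\perm}$ is the whole space. In that last case the trivial bound $1$ inside the $\min$ already suffices.
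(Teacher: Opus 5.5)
Your proof is correct and is the paper's argument: feed the uniform size-restricted bound of \cref{lem:perm-degree bound} into \cref{cor:qai-from-level-bounds}. The only difference is bookkeeping. In the paper's formulation the sampler draws $y\sample[N]$ directly without querying, and the verifier makes one query, so $r_{\PermInv}=2(T+1)$ and $P=2S(T+1)$; your count $P=2S(T+2)$ is a harmless overcount that changes only the constant.
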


\begin{proof}
The size-restricted bound in \cref{lem:perm-degree bound} holds uniformly
over all $T$-query online algorithms. Applying
\cref{cor:qai-from-level-bounds} with $P=Sr_{\PermInv}=2S(T+1)$ 
\ifnum\shorter=1
proves the theorem.
\else
gives
\[
    \delta_{\PermInv}^{\mathsf{QAI}}(S,T)
    \le C_{\perm}\sqrt{2}\frac{2S(T+1)+(T+1)^2}{N}.
\]
\fi
\end{proof}

\begin{theorem}
  Let $\PermInv_{\mathsf{S}}$ be the \revised{salted} permutation inversion game with salt space $[K]$. For $S,T\ge 0$ and $K,N\ge 1$, $\delta_{\PermInv_\mathsf{S}}^{\mathsf{QAI}}(S,T)
        =
      O\left(\frac{S(T+1)}{KN}+\frac{(T+1)^{2}}{N}\right).$
\end{theorem}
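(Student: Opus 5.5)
The plan is to mirror the salted OWF argument: show that the permutation-inversion game is $(1,N)$-diffuse at query budget $T$, and then invoke \cref{thm:D-diffuse-salt}. The relevant scale is $N$, the domain size of the permutation.

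Start from \cref{lem:perm-degree bound}. For every unit vector $|\psi\rangle$ on the purified-oracle and auxiliary registers,
\[
\langle\psi|M_{\mathcal E_{\PermInv}}|\psi\rangle\le C_{\perm}\frac{\langle\psi|\Lambda^{\perm}|\psi\rangle+(T+1)^2}{N}.
\]
Since this is a search game, $M_{\mathcal E_{\PermInv}}\succeq 0$, so the absolute value in \cref{def:D-diffuse} is harmless. Reading the bound off directly shows that the game is $(1,N)$-diffuse with
\[
\tau_T=C_{\perm}\frac{(T+1)^2}{N},\qquad \kappa=C_{\perm}.
\]
The database-size observable appearing in \cref{def:D-diffuse} is $\Lambda^{\perm}$ for the permutation hierarchy. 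The pointwise form of the lemma, which holds for arbitrary $|\psi\rangle$ and not only under a size cutoff, is exactly what the definition requires. If only the cutoff version were available, I would instead use \cref{lem:back to psi search}; this would only change the constants.

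Next, apply \cref{thm:D-diffuse-salt} to the salted game $\PermInv_{\mathsf S}$ with $r=2(T_{\Samp}+T+T_{\Ver})$. Here the sampler makes one forward query to produce $H_s(x)$, and the verifier makes one query to check the answer. So $r=2(T+2)=O(T+1)$, and the theorem gives
\[
\sqrt2\left[C_{\perm}\frac{(T+1)^2}{N}+C_{\perm}\frac{Sr}{KN}\right],
\]
which is $O\bigl(S(T+1)/(KN)+(T+1)^2/N\bigr)$.

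The one point needing care is that \cref{thm:D-diffuse-salt} relies on \cref{lem:fixed-salt-reduction}\ref{item:restricted to salt} and on the salted database hierarchy for permutations. Concretely, I need each query to the salted interface to raise the total size $\sum_s\Lambda_s$ by at most one, and I need $\Lambda_s$ to act as $\Lambda^{\perm}$ on $O_s$. Both follow from \cref{fact:permutation-query-level} applied to the tensor factor $O_s$, with all other oracle registers treated as auxiliary, as stated in \cref{subsec:Salted Filtration}. So the main obstacle is already absorbed by \cref{lem:perm-degree bound}, and the remaining proof is bookkeeping.
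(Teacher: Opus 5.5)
Your proposal is correct and matches the paper's proof: it reads off $(1,N)$-diffuseness with $\tau_T=C_{\perm}(T+1)^2/N$ and $\kappa=C_{\perm}$ directly from the pointwise bound in \cref{lem:perm-degree bound}, then applies \cref{thm:D-diffuse-salt}. One bookkeeping nit: in the paper's formulation (\cref{def:permutation-game}) the sampler outputs a uniform $y$ without querying, so $r_{\PermInv}=2(T+1)$ rather than $2(T+2)$, which does not change the asymptotic bound.
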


\begin{proof}
    \cref{lem:perm-degree bound} shows that the permutation-inversion game is \revised{$(1,N)$-diffuse at query budget $T$} with
    \ifnum\shorter=1
    $\tau_{T}=\frac{C_{\perm}(T+1)^2}{N},\kappa=C_{\perm}.$
    Applying
    \cref{thm:D-diffuse-salt} proves the theorem.
    \else
    \[
        \tau_{T}=C_{\perm}\frac{(T+1)^2}{N}, \qquad \kappa=C_{\perm}.
    \]
    \cref{thm:D-diffuse-salt} proves the theorem.
    \fi
\end{proof}

\else
\section{Details of Applications}
\label{app:Applications to Standard Security Games}

\ifnum\shorter=1
In this section, we explain the full details of \cref{sec:Applications to Standard Security Games}.
For convenience, we reiterate all the final results and the supporting lemmas. The lemmas are proved in \cref{sec:Missing Proofs}.

We apply our new reductions to the PRG, Yao's box, OWF, and permutation-inversion games, and to their salted variants. For PRG and Yao's box, we prove the required diffuse bounds directly; for OWF and permutation inversion, we derive them from known bit-fixing bounds using \cref{lem:back to psi search}. We use the formulations of the first three games from~\cite{CGLQ20} and that of permutation inversion from~\cite{ABC+26}. 
\else
We apply the reductions established in \cref{sec:The degree-Restricted Moment Method for Oracle-Dependent Advice} and \cref{sec:salting} to the PRG, Yao’s box, OWF, and permutation-inversion games, and to their salted variants. For PRG and Yao's box, we prove the required diffuse bounds directly; for OWF and permutation inversion, we derive them from known bit-fixing bounds using \cref{lem:back to psi search}. We use the formulations of the first three games from~\cite{CGLQ20} and that of permutation inversion from~\cite{ABC+26}. The supporting lemmas are proved in \cref{sec:Missing Proofs}.
\fi

\subsection{PRG Game}

\begin{definition}[PRG Game]
  \label{def:prg-game}
  Let \(H\sample \mathcal F:=\{H:[N]\to[M]\}\). The decision game $\mathcal G_{\mathsf{PRG}}=(\mathcal C_{\mathsf{PRG}})$ is specified by
  \begin{enumerate}[label=\arabic*.]
    \item $\Samp^H(b,x,y)$ outputs $H(x)$ if $b=0$ and outputs $y$ if $b=1$, where $b\sample\{0,1\}$, $x\sample[N]$, and $y\sample[M]$;
    \item $\Query^H((b,x,y),x')=H(x')$;
    \item $\Ver^H((b,x,y),b')=1$ if and only if $b=b'$.
  \end{enumerate}
\end{definition}

Throughout this subsection, the adversary outputs $0$ when it decides
that the challenge is a random image.
Fix a \(T\)-query adversary \(\mathcal A\).
For every oracle $H$ and challenge $y\in[M]$, let $E_y^H$ be the
POVM element on the advice register $A$ corresponding to output $0$.
Define
\[
    B_{H,\unif}^{\mathsf{PRG}}:= \frac{1}{M}\sum_{y\in [M]}E_{y}^{H}, \qquad B_{H,\img}^{\mathsf{PRG}}:=\frac{1}{N}\sum_{x \in [N]}E_{H(x)}^{H}.
\]
Let $p_{\unif}$ and $p_{\img}$ denote the probabilities that the adversary outputs $0$ in the uniform and random-image branches,
respectively. Then
\[
    \Pr[\mathsf{win}]
    =
    \frac{1}{2}(1-p_{\unif})
    +
    \frac{1}{2}p_{\img}
    =
    \frac{1}{2}
    +
    \frac{1}{2}(p_{\img}-p_{\unif}).
\]

Define the PRG game operator by
\[
    E_{\mathsf{PRG}}^H
    :=
    \frac{1}{2}(
    B_{H,\img}^{\mathsf{PRG}}
    -
    B_{H,\unif}^{\mathsf{PRG}}).
\]
We have $0\preceq B_{H,\img}^{\mathsf{PRG}}\preceq I$ and $0\preceq B_{H,\unif}^{\mathsf{PRG}}\preceq I$.
Hence, $E_{\mathsf{PRG}}^H$ is Hermitian and
$\lVert E_{\mathsf{PRG}}^H\rVert_{\mathsf{op}}\le1$.

For an advice family $\{\rho_H\}_H$,
\[
    \frac{1}{2}(p_{\img}-p_{\unif})
    =
    \E_H\!\left[
        \Tr(\rho_HE_{\mathsf{PRG}}^H)
    \right].
\]
Consequently,
\[
    \epsilon_{\mathsf{PRG}}^{\mathsf{QAI}}(S,T)
    =
    \sup_{\mathcal A}
    \left|
        \E_H\!\left[
            \Tr(\rho_HE_{\mathsf{PRG}}^H)
        \right]
    \right|.
    \label{eq:prg-advantage-operator}
\]

Let
$\mathcal E_{\mathsf{PRG}}
:=
\{E_{\mathsf{PRG}}^H\}_{H\in\mathcal F}$
and let $M_{\mathcal E_{\mathsf{PRG}}}$ be its purified game operator.
$M_{\mathcal{{E}_{\mathsf{PRG}}}}$ and $M_{{\mathcal E}_{\mathsf{PRG}}^{(K)}}$ increase the database size by at most $r_{\mathsf {PRG}}=2(T+1)$.

\begin{lemma}
  \label{applem:prg-degree bound}
    \ifnum\shorter=0
  \label{lem:prg-degree bound}
    \fi
  For every unit vector $\lvert\psi\rangle$ on the purified-oracle register and any auxiliary registers, 
  \begin{equation}
    \abs{\langle\psi\rvert
      M_{\mathcal E_{\mathsf{PRG}}}
      \lvert\psi\rangle}
    \leq
    \min\!\left\{
      1,
      \sqrt{\frac{\langle\psi\rvert\Lambda\lvert\psi\rangle}{N}}+\frac{2T^2}{N}
    \right\}.
    \label{eq:prg-complete-degree-sensitive}
  \end{equation}
  Consequently, for every integer $P\geq0$,
  \begin{equation}
    ||\Pi_{\le P} M_{\mathcal E_{\mathsf{PRG}}}\Pi_{\le P}||_{op} \le 
    \min\left\{ 1, \sqrt{\frac{P}{N}}+\frac{2T^2}{N} \right\}.
    \label{eq:prg-degree-restricted}
  \end{equation}
\end{lemma}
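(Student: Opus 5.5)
The plan is to split the purified PRG operator into a ``uniform-challenge'' part and an ``image-challenge'' part. We then compare them through an intermediate hybrid in which the image challenge $H(x)$ is replaced by a fresh uniform value, with the oracle reprogrammed at $x$. Write
\[
M_{\mathcal E_{\mathsf{PRG}}}=\tfrac12\bigl(M_{\img}-M_{\unif}\bigr),\qquad M_{\img}:=\sum_H|H\rangle\langle H|\otimes B^{\mathsf{PRG}}_{H,\img},\quad M_{\unif}:=\sum_H|H\rangle\langle H|\otimes B^{\mathsf{PRG}}_{H,\unif}.
\]
The trivial bound $1$ follows from $\|E^H_{\mathsf{PRG}}\|_{\mathrm{op}}\le 1$. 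The second displayed bound follows from the first, since $M_{\mathcal E_{\mathsf{PRG}}}$ is Hermitian and $\langle\psi|\Lambda|\psi\rangle\le P$ on $\mathcal L_P$. So the work is bounding $|\langle\psi|M_{\img}-M_{\unif}|\psi\rangle|$ by $2\sqrt{L/N}+4T^2/N$, where $L:=\langle\psi|\Lambda|\psi\rangle$.

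\textbf{Step 1 (hybrid).} Introduce the hybrid operator
\[
M_{\mathrm{hyb}}:=\sum_H|H\rangle\langle H|\otimes \tfrac{1}{NM}\sum_{x,y}E^{H_{x\to y}}_y.
\]
The adversary is given challenge $y$ but queries $H_{x\to y}$ for a uniformly random point $x$. Conditioned on $y$, the circuit producing $E_y^{(\cdot)}$ is independent of $x$. Hence \cref{lem:uniform-one-point-reprogramming}, applied pointwise in $H$ and for the state on $A$ (and auxiliary registers) attached to each $|H\rangle$, gives $\bigl\|\tfrac1N\sum_x E_y^{H_{x\to y}}-E_y^H\bigr\|_{\mathrm{op}}\le 4T^2/N$. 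This yields $|\langle\psi|M_{\mathrm{hyb}}-M_{\unif}|\psi\rangle|\le 4T^2/N$ for every $|\psi\rangle$, independently of the database. The lemma bounds acceptance probabilities for arbitrary input states, so it bounds the operator difference of Hermitian operators, and the block-diagonal structure in $H$ preserves this.

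\textbf{Step 2 (database term).} Compare $M_{\mathrm{hyb}}$ with $M_{\img}$. In $M_{\img}$, the pair (oracle, challenge) is $(H,H(x))$ with $H$ uniform. In $M_{\mathrm{hyb}}$, it is $(H_{x\to y},y)$ with $y$ uniform. For each fixed $x$, the map $(H,y)\mapsto(H_{x\to y},H(x))$ is a bijection that preserves the uniform measure. So relabelling turns $M_{\mathrm{hyb}}$ into $M_{\img}$ conjugated by the unitary $R_x$ on the oracle register that resamples $H(x)$, i.e.\ swaps the value $H(x)$ with a fresh uniform value held in an ancilla. Concretely, $M_{\mathrm{hyb}}=\tfrac1N\sum_x \langle +_y|R_x^\dagger (M_{\img,x}\otimes I) R_x|+_y\rangle$ in a suitable extended space, where $M_{\img,x}$ fixes the challenge point to $x$. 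The key property is that in the Fourier basis, $R_x$ acts trivially on $|\phi_D\rangle$ whenever $D(x)=0$: the oracle register is then uniform in coordinate $x$. Hence $\|(R_x-I)|\psi\rangle\|\le 2\|\Pi^{(x)}_{\neq0}|\psi\rangle\|$, where $\Pi^{(x)}_{\neq0}$ projects onto $D(x)\ne0$. The resulting estimate is
\[
|\langle\psi|M_{\img}-M_{\mathrm{hyb}}|\psi\rangle|\le \tfrac1N\sum_x 2\|\Pi^{(x)}_{\neq0}|\psi\rangle\|\le 2\Bigl(\tfrac1N\sum_x\|\Pi^{(x)}_{\neq0}|\psi\rangle\|^2\Bigr)^{1/2}=2\sqrt{L/N},
\]
using Cauchy--Schwarz (in the form of the decision-game splitting from the proof of \cref{thm:generic-public-salting}) and $\sum_x\Pi^{(x)}_{\neq0}=\Lambda$ by \eqref{eqn: observable decomposition function}. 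Combining the two steps and halving gives $\sqrt{L/N}+2T^2/N$.

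The main obstacle is Step 2: making the relabelling rigorous at the level of purified operators. The issue is that $M_{\img}$ is diagonal in $|H\rangle$ while $R_x$ mixes oracle values, so the comparison must be done on an extended register carrying the resampled value. Getting the constant $1$ (rather than $2$) in front of $\sqrt{L/N}$ requires care in how the cross terms $\langle\psi_0|\cdot|\psi_1\rangle$ are bounded. The first thing to try is to expand $|\psi\rangle=|\psi_0^{(x)}\rangle+|\psi_1^{(x)}\rangle$ by $D(x)=0$ versus $D(x)\neq0$, and to show that $M_{\img,x}$ and the resampled operator agree exactly on the $|\psi_0^{(x)}\rangle\langle\psi_0^{(x)}|$ block. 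Only the cross terms and the $|\psi_1^{(x)}\rangle$ block then survive, each controlled by $\|\psi^{(x)}_1\|$ with the $\tfrac12$ normalization absorbing the factor $2$.
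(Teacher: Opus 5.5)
Your proposal is correct and follows the paper's proof. It uses the same three experiments, since your $M_{\mathrm{hyb}}$ is exactly the paper's $p_\star$ hybrid, and the same \cref{lem:uniform-one-point-reprogramming} for the $4T^2/N$ step. For the database step, your pair $|\psi\rangle|+\rangle$, $R_x(|\psi\rangle|+\rangle)$ is the $x$-slice of the paper's $|\Gamma_{\unif}\rangle,|\Gamma_{\img}\rangle$, so your $D(x)=0$ invariance argument re-derives the bound $\|\Gamma_{\unif}-\Gamma_{\img}\|_2^2\le 4\langle\psi|\Lambda|\psi\rangle/N$ that the paper imports from \cite[Lemma~5.10]{CGLQ20}.

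Your displayed identity for $M_{\mathrm{hyb}}$ already makes the relabelling rigorous. The constant also needs no block splitting, which would in fact lose it if bounded only through $\|M_{\img,x}\|_{\mathrm{op}}\le 1$. As in the paper, Cauchy--Schwarz applied to $E-\tfrac12 I$ gives $|\langle u|E|u\rangle-\langle v|E|v\rangle|\le\|u-v\|_2$ for unit $u,v$ and $0\preceq E\preceq I$. Taking $E=M_{\img,x}\otimes I$, $v=|\psi\rangle|+\rangle$ and $u=R_xv$ yields your per-$x$ bound $2\|\Pi^{(x)}_{\neq0}|\psi\rangle\|$ directly.
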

\begin{proof}
    See \cref{subsec:pf of lem:prg-degree bound}.
\end{proof}

\begin{theorem}[QAI bound for the PRG game]
    \label{thm:prg-main}
    For all integers $S,T\ge0$,
  \[
    \epsilon_{\mathsf{PRG}}^{\mathsf{QAI}}(S,T)
    =O\left(
      \sqrt{\frac{S(T+1)}{N}}+\frac{T^2}{N}
    \right).
  \]
\end{theorem}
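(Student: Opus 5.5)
The theorem follows by combining \cref{lem:prg-degree bound} with \cref{cor:qai-from-level-bounds}; all the game-specific work is already in the lemma.

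\emph{Step 1: the offline budget.} In $\mathcal G_{\mathsf{PRG}}$ the sampler makes one query, namely evaluating $H(x)$ in the $b=0$ branch. The verifier only compares $b$ with $b'$ and makes no query. Hence $T_{\Samp}=1$ and $T_{\Ver}=0$, so
\[
r_{\mathsf{PRG}}=2(T+1), \qquad P:=S\,r_{\mathsf{PRG}}=2S(T+1).
\]

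\emph{Step 2: plug the lemma into the corollary.} The size-restricted bound \eqref{eq:prg-degree-restricted} holds uniformly over every $T$-query online algorithm and every choice of input register. The right-hand side depends only on $P$, $T$ and $N$, so the supremum over $\mathcal A$ in \cref{cor:qai-from-level-bounds} is bounded by the same quantity. This gives
\[
\epsilon_{\mathsf{PRG}}^{\mathsf{QAI}}(S,T)\le \sqrt2\left(\sqrt{\frac{2S(T+1)}{N}}+\frac{2T^2}{N}\right)=2\sqrt{\frac{S(T+1)}{N}}+2\sqrt2\,\frac{T^2}{N},
\]
which is the claimed $O(\cdot)$ bound. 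The edge case $S=0$ is already covered inside \cref{thm:QAI bounded by BF}, since then $P=0$ and the corollary gives $\sqrt2\cdot 2T^2/N$.

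\emph{Main obstacle.} Once \cref{lem:prg-degree bound} is granted, the only things to check are the query count of the sampler and verifier and that the lemma's bound is uniform in the algorithm. The real difficulty lies in the lemma itself, which is proved separately. Its plan would be to split $|\psi\rangle$ into its database-size-zero component and the remainder. On the size-zero component, the one-point reprogramming bound (\cref{lem:uniform-one-point-reprogramming}) controls the image-versus-uniform gap by $2T^2/N$. The remainder contributes through the probability that the random $x$ hits the support of the database, which gives the $\sqrt{\langle\psi|\Lambda|\psi\rangle/N}$ term. For the present theorem, that lemma is taken as given.
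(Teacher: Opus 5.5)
Your proof is correct and is exactly the paper's argument: with $T_{\Samp}=1$ and $T_{\Ver}=0$, you plug the algorithm-uniform size-restricted bound of \cref{lem:prg-degree bound} into \cref{cor:qai-from-level-bounds} at $P=2S(T+1)$, which gives $2\sqrt{S(T+1)/N}+2\sqrt2\,T^2/N$. One caution, which concerns only your aside and not this theorem: the paper does not prove the lemma by a naive split into the size-zero component and a remainder (that split only bounds the remainder by $\sqrt{\langle\psi|\Lambda|\psi\rangle}$, with no $1/N$); instead it compares the hybrids $|\Gamma_{\unif}\rangle,|\Gamma_{\img}\rangle$, whose distance is at most $2\sqrt{\langle\psi|\Lambda|\psi\rangle/N}$, and then applies one-point reprogramming to the remaining $p_\star$ versus $p_{\unif}$ gap.
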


\begin{proof}
The size-restricted bound in \cref{applem:prg-degree bound} holds uniformly
over all $T$-query online algorithms. Applying
\cref{cor:qai-from-level-bounds} with $P=Sr_{\mathsf{PRG}} = 2S(T+1)$ gives
\ifnum\shorter=1
the desired bound.
\else
\[
    \epsilon_{\mathsf{PRG}}^{\mathsf{QAI}}(S,T)
    \le
    2\sqrt{
        \frac{S(T+1)}{N}
    }
    +
    2\sqrt{2}\frac{T^2}{N}.
\]
\fi
\end{proof}

\begin{theorem}[QAI bound for salted PRG game]
  \label{thm:sprg-main}
  Let $\mathsf{PRG}_{\mathsf{S}}$ be the salted PRG game with salt space $[K]$. For all $K,N,M\ge 1$ and $S,T\ge 0$,
  \[
    \epsilon_{\mathsf{PRG}_{\mathsf{S}}}^{\mathsf{QAI}}(S,T)
    =
    O\left(\sqrt{\frac{S(T+1)}{KN}}+\frac{T^2}{N}\right).
    \]
\end{theorem}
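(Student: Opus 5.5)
The plan is to follow the short-form proof pattern used for the salted theorems in the applications. By \cref{lem:prg-degree bound}, specifically \eqref{eq:prg-complete-degree-sensitive}, every $T$-query PRG adversary with induced game-operator family $\mathcal E_{\mathsf{PRG}}$ and every unit vector $|\psi\rangle$ on the purified-oracle and auxiliary registers satisfy
\[
    |\langle\psi|M_{\mathcal E_{\mathsf{PRG}}}|\psi\rangle|
    \le \frac{2T^2}{N}+\left(\frac{\langle\psi|\Lambda|\psi\rangle}{N}\right)^{1/2}.
\]
This is exactly \cref{def:D-diffuse} with $\gamma=1/2$, $\Delta=N$, $\tau_T=2T^2/N$ and $\kappa=1$. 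So the base PRG game is $(1/2,N)$-diffuse at query budget $T$.

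Next I would invoke \cref{thm:D-diffuse-salt}. In the salted game of \cref{def:public-salt}, the sampler makes one query and the verifier makes none, so $r=2(T+1)$, matching $r_{\mathsf{PRG}}$. The theorem then gives
\[
    \epsilon_{\mathsf{PRG}_{\mathsf S}}^{\mathsf{QAI}}(S,T)
    \le \sqrt2\left[\frac{2T^2}{N}+\sqrt{\frac{2S(T+1)}{KN}}\right]
    =O\left(\sqrt{\frac{S(T+1)}{KN}}+\frac{T^2}{N}\right),
\]
which is the claim.

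The only point needing care is that \cref{thm:D-diffuse-salt} goes through \cref{lem:fixed-salt-reduction}\ref{item:restricted to salt}. That lemma requires the diffuse inequality for \emph{every} unit vector, not only vectors in $\mathcal L_P$, and \cref{lem:prg-degree bound} provides exactly this. We must also check that each salted query uses at most one controlled query to $H_s$. This holds because $\Query^H$ for the PRG game is $H$ itself, and queries to other salts are absorbed as $H_s$-independent operations.

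I expect no real obstacle here, since the substantive work sits in \cref{lem:prg-degree bound}. The edge cases $S=0$ and $T=0$ are covered because the bounds hold for all integers $S,T\ge0$.
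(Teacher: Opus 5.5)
Your proposal is correct and is essentially the paper's proof: the paper also reads off from \cref{lem:prg-degree bound} that the PRG game is $(1/2,N)$-diffuse at query budget $T$ with $\tau_T=2T^2/N$ and $\kappa=1$, then applies \cref{thm:D-diffuse-salt} with $r=2(T+1)$. Your explicit bound $\sqrt2\bigl[2T^2/N+\sqrt{2S(T+1)/(KN)}\bigr]$ is exactly what that theorem yields.
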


\begin{proof}
    \cref{applem:prg-degree bound} shows that the PRG game is $(1/2,N)$-diffuse at query budget $T$ with
    \ifnum\shorter=1
    $\tau_T={2T^2}/{N},\kappa={1}.$ \cref{thm:D-diffuse-salt} proves the theorem.
    \else
    \[
            \tau_T=\frac{2T^2}{N},
            \qquad
            \kappa=1.
    \]
   \cref{thm:D-diffuse-salt} proves the theorem.
   \fi
\end{proof}

\subsection{Yao's Box}

\begin{definition}[Yao's Box]
  \label{def:yb-game}
  Let $H\sample\mathcal{F}:=\{H:[N]\rightarrow \{0,1\}\}$. The decision game $\mathcal G_{\mathsf{YB}}=(\mathcal C_{\mathsf{YB}})$ is specified by
  \begin{enumerate}[label=\arabic*.]
    \item $\Samp^H(x)=x$ for $x\sample[N]$;
    \item $\Query^H(x,x')=H(x')$ when $x'\neq x$, and $\Query^H(x,x)=1$;
    \item $\Ver^H(x,b)=1$ if and only if $b=H(x)$.
  \end{enumerate}
\end{definition}

The query interface is independent of the hidden bit $H(x)$. Fix a \(T\)-query adversary \(\mathcal A\).
For every oracle $H$ and challenge $x\in[N]$, let $E_{x,b}^H$ be the
POVM element on the advice register $A$ corresponding to output $b\in\{0,1\}$.
Then, $E_{x,0}^{H}+E_{x,1}^{H}=I$. Defining
\[
    A_{x}^H:=E_{x,0}^H-E_{x,1}^H,
\]
we have 
\[
    E_{x,H(x)}^H=\frac{1}{2}(I+(-1)^{H(x)}A_{x}^H).
\]
Letting $\chi_x(H):=(-1)^{H(x)}$, define
\[
    E^H_{\mathsf{YB}}:=\frac{1}{2N}\sum_{x\in [N]}\chi_{x}(H)A_{x}^H.
\]
For an advice family $\{\rho_H\}_{H}$, the winning probability is
\begin{equation}
    \Pr[\mathsf{win}]=\E_{H,x}[\Tr(\rho_HE_{x,H(x)}^H)]=\frac{1}{2}+\E_{H}[\Tr(\rho_HE_{\mathsf{YB}}^H)].
    \label{eq:PRG winning pb}
\end{equation}
Therefore,
\[
    \epsilon_{\mathsf{YB}}^{\mathsf{QAI}}(S,T)=\sup_{\mathcal{A}}|\E_{H}[\Tr(\rho_HE_{\mathsf{YB}}^H)]|.
\]
Let $\mathcal{E}_{\mathsf{YB}}:=\{E^{H}_{\mathsf{YB}}\}_{H\in\mathcal{F}}$ and let $M_{\mathcal{E}_{\mathsf{YB}}}$ be its purified game operator.
$M_{\mathcal{{E}_{\mathsf{YB}}}}$ and $M_{{\mathcal E}_{\mathsf{YB}}^{(K)}}$ increase the database size by at most $r_{\mathsf {YB}}=2(T+1)$.

\begin{lemma}
  \label{applem:yb degree bound}
    \ifnum\shorter=0
  \label{lem:yb degree bound}
    \fi
  For every unit vector $\lvert\psi\rangle$ on the purified-oracle register and any auxiliary registers, 
  \[
    |{\langle\psi\rvert
      M_{\mathcal{E}_{\mathsf{YB}}}
      \lvert\psi\rangle}|
    \leq
    \min\!\left\{\frac12,
      (1+\sqrt{2})\sqrt{\frac{\langle\psi\rvert\Lambda\lvert\psi\rangle}{N}}
    \right\}.
  \]
  Consequently, for every integer $P\geq0$,
  \[
    ||\Pi_{\le P} M_{\mathcal{E}_{\mathsf{YB}}}\Pi_{\le P}||_{op}
    \leq
    \min\!\left\{\frac12,(1+\sqrt{2})\sqrt{\frac PN}\right\}.
  \]
\end{lemma}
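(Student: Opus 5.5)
The plan is to expand the Yao's-box game operator in the Fourier basis of the oracle register. The key structural fact is that the hidden bit $H(x)$ enters $E^H_{\mathsf{YB}}$ only through the character $\chi_x(H)$, while the rest of the operator does not depend on $H(x)$ at all.

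\textbf{Step 1: purified form of the operator.} Write
\[
M_{\mathcal E_{\mathsf{YB}}}=\frac{1}{2N}\sum_{x\in[N]} Z_x\,\widetilde A_x,\qquad Z_x:=\sum_H\chi_x(H)|H\rangle\langle H|_O,\qquad \widetilde A_x:=\sum_H |H\rangle\langle H|_O\otimes A_x^H .
\]
By \cref{def:yb-game}, the interface $\Query^H(x,\cdot)$ answers the point $x$ with the constant $1$. Hence $A_x^H=E^H_{x,0}-E^H_{x,1}$ depends only on $H|_{[N]\setminus\{x\}}$, and $\|A^H_x\|_{\mathrm{op}}\le 1$.

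\textbf{Step 2: Fourier behaviour of the two factors.} I use the Fourier states $|\phi_D\rangle$ with $M=2$, so $D\in\{0,1\}^N$.
\begin{itemize}
\item Multiplication by $\chi_x(H)=(-1)^{H(x)}$ shifts $D\mapsto D+e_x$, that is, $Z_x|\phi_D\rangle=|\phi_{D+e_x}\rangle$.
\item An operator that is diagonal in $H$ and independent of $H(x)$ acts only on the coordinates $D|_{[N]\setminus\{x\}}$ and leaves the label $D(x)$ untouched. So $\widetilde A_x$ preserves the decomposition $|\psi\rangle=|\psi_{x,0}\rangle+|\psi_{x,1}\rangle$, where $|\psi_{x,b}\rangle$ is the component with $D(x)=b$.
\end{itemize}
Consequently $Z_x\widetilde A_x$ maps the $D(x)=0$ sector into the $D(x)=1$ sector and vice versa. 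The diagonal blocks therefore vanish, and Cauchy--Schwarz gives
\[
|\langle\psi|Z_x\widetilde A_x|\psi\rangle| \le 2\,\|\psi_{x,0}\|\,\|\psi_{x,1}\|\le 2\|\psi_{x,1}\|.
\]

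\textbf{Step 3: summing over $x$.} By \cref{eqn: observable decomposition function}, $\sum_x\|\psi_{x,1}\|^2=\sum_D|D|\,|\langle\phi_D|\psi\rangle|^2=\langle\psi|\Lambda|\psi\rangle$, where the inner product is taken as a partial inner product on the auxiliary registers. Summing Step 2 over $x$ and applying Cauchy--Schwarz over $x$ yields
\[
|\langle\psi|M_{\mathcal E_{\mathsf{YB}}}|\psi\rangle|\le \frac1N\sum_x\|\psi_{x,1}\|\le\sqrt{\frac{\langle\psi|\Lambda|\psi\rangle}{N}} .
\]
This is already at most $(1+\sqrt2)\sqrt{\langle\psi|\Lambda|\psi\rangle/N}$; if the direct argument runs into trouble, the extra slack in the constant leaves room for a cruder splitting.

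\textbf{Step 4: the remaining bounds.} The bound $\tfrac12$ follows from $\|E^H_{\mathsf{YB}}\|_{\mathrm{op}}\le\frac{1}{2N}\sum_x\|A_x^H\|\le\frac12$. The restricted operator-norm bound follows because $M_{\mathcal E_{\mathsf{YB}}}$ is Hermitian: its restricted norm is the supremum of $|\langle\psi|M|\psi\rangle|$ over unit $|\psi\rangle\in\mathcal L_P$, and every such vector has $\langle\psi|\Lambda|\psi\rangle\le P$.

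\textbf{Main obstacle.} The delicate step is Step 2's claim that $\widetilde A_x$ is block-diagonal in the label $D(x)$. This requires checking that the full purified experiment (sampler fixing $x$, the adversary's $T$ queries through the punctured interface, and the undoing by $(V^H)^\dagger$) never touches $H(x)$. The only dependence on $H(x)$ must enter through $\Ver$, which is already extracted as $\chi_x$. I would verify this by writing $V^H$ for fixed challenge $x$ as a product of local unitaries and controlled queries to $H$ at points $x'\neq x$. Each such controlled query is diagonal in $H$ and, in the Fourier picture, shifts only the coordinates $D(x')$ with $x'\neq x$.
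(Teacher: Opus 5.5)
Your proof is correct, and it takes a genuinely different route from the paper's.

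\textbf{The paper's route.} The paper never analyzes the signed operator directly. It sets $F_x=\sum_{D:D(x)=0}|\phi_D\rangle\langle\phi_D|$, so that $p_x=\|(I-F_x)\psi\|^2$ is exactly your $\|\psi_{x,1}\|^2$. It then imports from the proof of \cite[Lemma~6.2]{CGLQ20} the one-sided amplitude bound $\sqrt{\Pr[\mathsf{win}\mid x]}\le\sqrt{p_x}+\sqrt{1/2}$. That bound comes from a triangle inequality after splitting $\psi$ into $F_x\psi$, where $H(x)$ is uniform and unentangled so the winning probability is $1/2$, and $(I-F_x)\psi$, which is bounded trivially by $\sqrt{p_x}$. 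Squaring and averaging gives $\Pr[\mathsf{win}]-\frac{1}{2}\le\langle\psi|\Lambda|\psi\rangle/N+\sqrt{2\langle\psi|\Lambda|\psi\rangle/N}$. The linear term is absorbed using $\langle\psi|\Lambda|\psi\rangle\le N$. Finally, $\frac{1}{2}-\Pr[\mathsf{win}]$ is handled by rerunning the argument with the output bit flipped.

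\textbf{Your route.} You use the fact that $\widetilde A_x$ acts as the identity on the $x$-th oracle qubit, whose Fourier basis is $|\pm\rangle$, while $Z_x$ swaps $|+\rangle$ and $|-\rangle$. Hence $Z_x\widetilde A_x$ is purely off-diagonal between the sectors $D(x)=0$ and $D(x)=1$. This kills both diagonal blocks, including the $D(x)=1$ block that the paper bounds crudely. You obtain $|\langle\psi|M_{\mathcal E_{\mathsf{YB}}}|\psi\rangle|\le\frac{1}{N}\sum_x\|\psi_{x,0}\|\,\|\psi_{x,1}\|\le\sqrt{\langle\psi|\Lambda|\psi\rangle/N}$ in a single two-sided step.

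\textbf{Comparison.} Your argument gives constant $1$ instead of $1+\sqrt{2}$. It needs neither the bit-flip symmetrization nor the bound $\langle\psi|\Lambda|\psi\rangle\le N$. It is self-contained and closer to the paper's operator-norm viewpoint. The paper's route is shorter only because it reuses an existing analysis.

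The step you flag as the main obstacle is immediate. $A_x^H$ is built from the adversary's unitary for challenge $x$ alone, which by \cref{def:yb-game} accesses $H$ only through $H|_{[N]\setminus\{x\}}$. The sampler and verifier were already absorbed when $E^H_{\mathsf{YB}}$ was written as $\frac{1}{2N}\sum_x\chi_x(H)A_x^H$.
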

\begin{proof}
    See \cref{subsec:pf of lem:yb degree bound}.
\end{proof}

\begin{theorem}[QAI bound for Yao's box]
  \label{thm:yb-main}
  For all integers $S,T\ge 0$,
  \[
    \epsilon_{\mathsf{YB}}^{\mathrm{QAI}}(S,T)
    =
      O\left(
      \sqrt{\frac{S(T+1)}{N}}\right).
  \]
\end{theorem}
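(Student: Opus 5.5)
The plan is to mirror the proof of \cref{thm:prg-main}: combine the database-size-restricted bound of \cref{lem:yb degree bound} with the reduction in \cref{cor:qai-from-level-bounds}.

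First, fix the query accounting for Yao's box. The sampler $\Samp^H(x)=x$ makes no oracle query, so $T_{\Samp}=0$. Each adversary query goes through $\Query^H$, which makes at most one query to $H$: it returns either $H(x')$ or the constant $1$ when $x'=x$. Coherently, this is a single controlled query to $H$. The verifier checks $b=H(x)$ with one query, so $T_{\Ver}=1$. Hence $r_{\mathsf{YB}}=2(T+1)$, as already recorded before \cref{lem:yb degree bound}, and we take $P:=Sr_{\mathsf{YB}}=2S(T+1)$.

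Next, \cref{lem:yb degree bound} gives, for every $T$-query online algorithm (uniformly over the algorithm and its input-register choice),
\[
\|\Pi_{\le P}M_{\mathcal E_{\mathsf{YB}}}\Pi_{\le P}\|_{\mathrm{op}}\le (1+\sqrt2)\sqrt{P/N}.
\]
Taking the supremum over online algorithms and applying \cref{cor:qai-from-level-bounds} then yields
\[
\epsilon^{\mathsf{QAI}}_{\mathsf{YB}}(S,T)\le \sqrt2(1+\sqrt2)\sqrt{2S(T+1)/N}=(2\sqrt2+2)\sqrt{S(T+1)/N},
\]
which is the claimed $O(\sqrt{S(T+1)/N})$. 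The case $S=0$ is covered: then $P=0$ and the bound is $0$, consistent with the fact that without advice the hidden bit $H(x)$ is uniform and independent of everything the adversary can see.

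The real work lies in \cref{lem:yb degree bound} itself, which is deferred to the appendix and which we take as given here. The only point needing care in this step is confirming that the modified interface $\Query^H$ increases the database size by at most one per call, so that $M_{\mathcal E_{\mathsf{YB}}}$ increases the database size by at most $r_{\mathsf{YB}}$, as required for the moment argument behind \cref{thm:QAI bounded by BF}. This holds because the branch $x'=x$ can be implemented as a query to $H$ controlled on $x'\neq x$, followed by an $H$-independent correction. \cref{fact:function-query-level} then applies.
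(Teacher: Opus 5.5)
Your proposal is correct and is essentially the paper's own proof. You plug the uniform size-restricted bound $(1+\sqrt2)\sqrt{P/N}$ from the Yao's-box lemma into the corollary on QAI security from database-size-restricted operator bounds, with $P=Sr_{\mathsf{YB}}=2S(T+1)$, and obtain the same explicit constant $(2\sqrt2+2)\sqrt{S(T+1)/N}$. Your remarks on the query counts ($T_{\Samp}=0$, $T_{\Ver}=1$) and on implementing the modified interface with a single controlled query only spell out what the paper asserts when it records $r_{\mathsf{YB}}=2(T+1)$.
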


\begin{proof}
The size-restricted bound in \cref{applem:yb degree bound} holds uniformly
over all $T$-query online algorithms. Applying
\cref{cor:qai-from-level-bounds} with $P=Sr_{\mathsf{YB}}=2S(T+1)$ gives
\ifnum\shorter=1
the desired upper bound.
\else
\[
    \epsilon_{\mathsf{YB}}^{\mathsf{QAI}}(S,T)
    \le (2\sqrt{2}+2)\sqrt{\frac{S(T+1)}{N}}.
\]
\fi
\end{proof}

\begin{theorem}[QAI bound for salted Yao's box]
  \label{thm:syb-main}
  Let $\mathsf{YB}_{\mathsf{S}}$ be the salted Yao's box with salt space $[K]$. For all $K,N\ge1$ and $S,T\ge0$,
  \[
      \epsilon_{\mathsf{YB}_\mathsf{S}}^{\mathsf{QAI}}(S,T)
        =
      O\left(
        \sqrt{\frac{S(T+1)}{KN}}
      \right).
  \]
\end{theorem}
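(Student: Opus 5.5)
The plan is to mirror the salted PRG argument: obtain diffuseness of the base Yao's box game from \cref{lem:yb degree bound}, then invoke \cref{thm:D-diffuse-salt}.

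\textbf{Step 1: diffuseness of the base game.} \cref{lem:yb degree bound} states that for every $T$-query adversary and every unit vector $|\psi\rangle$ on the purified-oracle and auxiliary registers,
\[
|\langle\psi|M_{\mathcal E_{\mathsf{YB}}}|\psi\rangle|\le (1+\sqrt2)\sqrt{\langle\psi|\Lambda|\psi\rangle/N}.
\]
This is exactly the condition of \cref{def:D-diffuse} with $\gamma=1/2$, $\Delta=N$, $\tau_T=0$ and $\kappa=1+\sqrt2$. So Yao's box is $(1/2,N)$-diffuse at every query budget $T$.

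\textbf{Step 2: check the hypotheses of \cref{thm:D-diffuse-salt}.}
\begin{itemize}
\item The salted game must match \cref{def:public-salt}. The salted query interface answers $\Query^{H_s}(x,\cdot)$ on the challenge salt, so it still hides $H_s(x)$. On every other salt $s'\neq s$ it answers $H_{s'}$ directly.
\item Each salted query makes at most one query to the underlying oracles. This is what \cref{lem:fixed-salt-reduction}(ii) needs in order to transfer the diffuse bound to each fixed-salt operator $M_{\mathcal E_s}$ with $\Lambda$ replaced by $\Lambda_s$.
\item The sampler makes no queries and the verifier makes one. Hence $r=2(T+1)$, matching $r_{\mathsf{YB}}$.
\end{itemize}

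\textbf{Step 3: conclude.} Applying \cref{thm:D-diffuse-salt} gives
\[
\epsilon^{\mathsf{QAI}}_{\mathsf{YB}_{\mathsf S}}(S,T)\le \sqrt2\,(1+\sqrt2)\sqrt{\frac{2S(T+1)}{KN}}=(2+2\sqrt2)\sqrt{\frac{S(T+1)}{KN}},
\]
which is $O\bigl(\sqrt{S(T+1)/(KN)}\bigr)$.

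\textbf{Main obstacle.} The only delicate point is the application of \cref{lem:fixed-salt-reduction}(ii) in Step 2. For non-challenge salts, the salted interface queries $H_{s'}$ without the $\bot$-masking, and these queries act on oracle registers other than $O_s$. We would absorb them into the auxiliary workspace as $H_s$-independent operations, exactly as in that lemma. This shows that the masked challenge-salt interface really is a base-game $\Query^{H_s}$ interface, and the diffuse bound then transfers. Everything else is a direct substitution.
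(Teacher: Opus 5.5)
Your proposal is correct and follows the paper's proof. The paper likewise reads off $(1/2,N)$-diffuseness with $\tau_T=0$ and $\kappa=1+\sqrt2$ from \cref{lem:yb degree bound}, then applies \cref{thm:D-diffuse-salt} with $r=2(T+1)$; this gives the constant $2+2\sqrt2$ that you computed.
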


\begin{proof}
    \cref{applem:yb degree bound} shows that the Yao's-box game is $(1/2,N)$-diffuse at query budget $T$ with
    \ifnum\shorter=1
    $\tau_{T}=0,\kappa=1+\sqrt{2}.$ Applying \cref{thm:D-diffuse-salt} proves the theorem.
    \else
    \[
        \tau_{T}=0,\qquad \kappa=1+\sqrt{2}.
    \]
 \cref{thm:D-diffuse-salt} proves the theorem.
 \fi
\end{proof}

\subsection{OWF Game}

\begin{definition}[OWF Game]
  \label{def:owf-game}
  Let $H\sample\mathcal{F}:=\{H:[N]\rightarrow[M]\}$.  The security game $\mathcal G_{\mathsf{OWF}}=(\mathcal C_{\mathsf{OWF}})$ is specified by
  \begin{enumerate}[label=\arabic*.]
    \item $\Samp^H(x)$ takes $x\sample[N]$, outputs $y=H(x)$, and retains $(x,y)$;
    \item $\Query^H((x,y),x')=H(x')$;
    \item $\Ver^H((x,y),x')=1$ if and only if $x'\in [N]$ and $y=H(x')$.
  \end{enumerate}
\end{definition}

Fix a \(T\)-query adversary \(\mathcal A\). For every oracle $H$ and challenge $y\in [M]$, let $E_{y,z}^{H}$ be the POVM element on the advice register $A$ corresponding to output $z\in [N]\cup \{\bot\}$. Define
\[
    W_{y}^H:=\sum_{z:H(z)=y}E_{y,z}^{H}, \qquad E^H_{\mathsf{OWF}}:=\frac{1}{N}\sum_{x\in [N]}W^H_{H(x)}.
\]
Then, $E^H_{\mathsf{OWF}}$ is the game operator corresponding to the winning outcome and satisfies $0\preceq E^H_{\mathsf{OWF}}\preceq I$.
For an advice family $\{\rho_H\}_{H}$,
\[
    \Pr[\mathsf{win}]=\E_H[\Tr(\rho_H E^H_{\mathsf{OWF}})].
\]
Consequently,
\[
    \delta^{\mathsf{QAI}}_{\mathsf{OWF}}(S,T)=\sup_{\mathcal{A}}\E_H[\Tr(\rho_H E^H_{\mathsf{OWF}})].
\]
Let $\mathcal{E}_{\mathsf{OWF}}:=\{E^H_{\mathsf{OWF}}\}_{H\in\mathcal{F}}$ and let $M_{\mathcal{{E}_{\mathsf{OWF}}}}$ be its purified game operator.
$M_{\mathcal{{E}_{\mathsf{OWF}}}}$ and $M_{{\mathcal E}_{\mathsf{OWF}}^{(K)}}$ increase the database size by at most $r_{\mathsf {OWF}}=2(T+2)$.

\begin{lemma}
    \ifnum\shorter=0
  \label{lem:owf-degree bound}
    \fi
  \label{applem:owf-degree bound}
  There exists a universal constant $C_{\mathsf {OWF}}>0$ such that for every unit vector $\lvert\psi\rangle$ on the purified-oracle register and any auxiliary registers,
  \[
    \langle\psi\rvert
      M_{\mathcal{E}_{\mathsf{OWF}}}
      \lvert\psi\rangle
    \leq
    \min\!\left\{1,
      C_{\mathsf {OWF}}
      \frac{ \langle\psi\rvert\Lambda\lvert\psi\rangle+(T+1)^2}{\min\{N,M\}}
    \right\}.
  \]
  Consequently, for every integer $P\ge 0$,
  \[
      ||\Pi_{\le P}M_{\mathcal{E}_{\mathsf{OWF}}}\Pi_{\le P}||_{op}\le \min\left\{1,C_{\mathsf {OWF}}\frac{P+(T+1)^{2}}{\min \{N,M\}}\right\}.
  \]
\end{lemma}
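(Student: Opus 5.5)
The plan is to derive the first inequality from a known size-restricted (bit-fixing) bound for OWF and then convert it to an expected-size bound with \cref{lem:back to psi search}. The second inequality will follow immediately from the first.

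\textbf{Step 1: the size-cutoff bound.} For each integer $P\ge0$ and every $T$-query online algorithm, we need
\[
\|\Pi_{\le P}M_{\mathcal E_{\mathsf{OWF}}}\Pi_{\le P}\|_{op}\le C\,\frac{P+(T+1)^2}{\min\{N,M\}}.
\]
By \cref{thm:bf-operator-characterization}, the left-hand side equals the $(P,T)$-bit-fixing success probability of that online algorithm. That probability is bounded by the known quantum bit-fixing bound for function inversion, namely $O((P+T^2)/\min\{N,M\})$ from \cite{CGLQ20,TCC:GLLZ21}. We would take that bound as given. If a self-contained argument is preferred, we would reprove it with the compressed oracle. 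The challenge $y=H(x)$ for a fresh random $x$ hits the existing database of size at most $P$ with probability $O(P/N)$. Conditioned on missing it, the image $y$ lies among the database values with probability $O(P/M)$. Otherwise the adversary must find a fresh preimage with $T+1$ queries (counting verification), which succeeds with probability $O((T+1)^2/\min\{N,M\})$. The sampler's query also enters here, which is why the bound carries $(T+1)^2$ and $r_{\mathsf{OWF}}=2(T+2)$.

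\textbf{Step 2: passing to expected size.} Apply \cref{lem:back to psi search} with $\Delta=\min\{N,M\}$, $\kappa=C$ and $\tau_T=C(T+1)^2/\Delta$. It yields
\[
\langle\psi|M_{\mathcal E_{\mathsf{OWF}}}|\psi\rangle\le \tau_T'+\kappa'\frac{\langle\psi|\Lambda|\psi\rangle}{\Delta},
\qquad \tau_T'=2\Bigl(\tau_T+\frac{C(r-1)}{\Delta}\Bigr),\quad \kappa'=2C,\quad r=2(T+2).
\]
Since $r-1=2T+3=O((T+1)^2)$, we get $\tau_T'=O((T+1)^2/\Delta)$. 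Enlarging the constant to $C_{\mathsf{OWF}}$ absorbs both $\tau_T'$ and $\kappa'$, giving the stated bound with $\langle\psi|\Lambda|\psi\rangle$. The minimum with $1$ holds because $0\preceq M_{\mathcal E_{\mathsf{OWF}}}\preceq I$.

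\textbf{Step 3: the restricted-norm consequence.} Because $M_{\mathcal E_{\mathsf{OWF}}}$ is positive semidefinite, the operator norm of $\Pi_{\le P}M_{\mathcal E_{\mathsf{OWF}}}\Pi_{\le P}$ is the supremum of $\langle\psi|M_{\mathcal E_{\mathsf{OWF}}}|\psi\rangle$ over unit vectors $|\psi\rangle\in\mathcal L_P$. Every such vector satisfies $\langle\psi|\Lambda|\psi\rangle\le P$, so the second inequality follows from the first.

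\textbf{Main obstacle.} The hard part is Step 1: justifying the linear-in-$P$ bit-fixing bound for inversion with a general codomain $M$. The difficulty is the case where the challenge image collides with database values without the challenge preimage being recorded. We would try first to cite the existing quantum bit-fixing result directly. Failing that, we would carry out the compressed-oracle analysis described in Step 1, tracking $\Pi_{\le P}$ and splitting on whether $x$ or $y$ appears in the database.
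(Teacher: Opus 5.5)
Your proposal is correct and is essentially the paper's proof. The paper also bounds $\|\Pi_{\le P}M_{\mathcal E_{\mathsf{OWF}}}\Pi_{\le P}\|_{op}$ via \cref{thm:bf-operator-characterization} and a cited OWF bit-fixing bound, then applies \cref{lem:back to psi search} with exactly your $\tau_T=C(T+1)^2/\min\{N,M\}$, $\kappa=C$, $\Delta=\min\{N,M\}$, absorbs $r_{\mathsf{OWF}}-1=2T+3\le 3(T+1)^2$ to get constant $8C$, and deduces the restricted-norm statement from $\langle\psi|\Lambda|\psi\rangle\le P$ on $\mathcal L_P$. The only difference is the source for Step~1: the paper cites \cite[Lemma~4.10]{Liu23}, which is already stated with $\min\{N,M\}$ for a general codomain (with $T=0$ handled by monotonicity). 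That citation settles your general-codomain worry, so your heuristic compressed-oracle sketch is not needed.
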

\begin{proof}
    See \cref{subsec:pf of lem:owf-degree bound}.
\end{proof}

\begin{theorem}[QAI bound for the OWF game]
  \label{thm:owf-main}
  For all integers $S,T\ge 0$,
  \[
    \delta_{\mathsf{OWF}}^{\mathrm{QAI}}(S,T)
    =
      O\left(\frac{S(T+1) + (T+1)^{2}}{\min \{N,M\}}\right).
  \]
\end{theorem}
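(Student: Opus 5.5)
The bound follows directly from the reduction machinery of \cref{sec:The degree-Restricted Moment Method for Oracle-Dependent Advice} together with the size-restricted bound of \cref{applem:owf-degree bound}, in the same way as \cref{thm:prg-main,thm:yb-main}.

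The first step is to record the query counts for the OWF game of \cref{def:owf-game}. The sampler makes $T_{\Samp}=1$ query to compute $y=H(x)$. The verifier makes $T_{\Ver}=1$ query to check $H(x')=y$. The adversary makes $T$ queries. Hence $r_{\mathsf{OWF}}=2(T+2)$, as already noted before the lemma. We apply \cref{cor:qai-from-level-bounds} with $P:=Sr_{\mathsf{OWF}}=2S(T+2)$. This gives
\[
\delta_{\mathsf{OWF}}^{\mathsf{QAI}}(S,T)\le\sqrt2\sup_{\mathcal A}\bigl\|\Pi_{\le P}M_{\mathcal E_{\mathsf{OWF}}}\Pi_{\le P}\bigr\|_{\mathrm{op}},
\]
where the supremum is over $T$-query online algorithms.

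The second step is to insert the second part of \cref{applem:owf-degree bound}. It holds uniformly over all $T$-query online algorithms and all input registers, so the supremum is at most $C_{\mathsf{OWF}}\bigl(2S(T+2)+(T+1)^2\bigr)/\min\{N,M\}$. Since $T+2\le 2(T+1)$, this is $O\bigl((S(T+1)+(T+1)^2)/\min\{N,M\}\bigr)$. For $S=0$ the same computation applies, or one can use the trivial case of \cref{thm:QAI bounded by BF}.

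There is no real obstacle once \cref{applem:owf-degree bound} is granted; all the substance sits in that lemma. If it had to be proved here, the work would be to derive it from the known bit-fixing OWF bound $O((P+T^2)/\min\{N,M\})$ of \cite{CGLQ20,TCC:GLLZ21} using \cref{thm:bf-operator-characterization}, and then to convert the size cutoff into an expected-size bound via \cref{lem:back to psi search}. The only care needed is absorbing the $(r-1)/\min\{N,M\}$ shift into the $(T+1)^2$ term.
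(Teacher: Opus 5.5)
Your proposal is correct and is essentially the paper's proof: apply \cref{cor:qai-from-level-bounds} with $P=Sr_{\mathsf{OWF}}=2S(T+2)$ and insert the size-restricted bound of \cref{applem:owf-degree bound}, which holds uniformly over all $T$-query online algorithms. This gives $\sqrt2\,C_{\mathsf{OWF}}\bigl(2S(T+2)+(T+1)^2\bigr)/\min\{N,M\}$. The only difference is in your side remark: the paper derives that lemma from the bit-fixing bound of \cite[Lemma~4.10]{Liu23} via \cref{lem:back to psi search}, rather than from \cite{CGLQ20,TCC:GLLZ21}, and this does not affect the argument.
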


\begin{proof}
The size-restricted bound in \cref{applem:owf-degree bound} holds uniformly
over all $T$-query online algorithms. Applying
\cref{cor:qai-from-level-bounds} with $P=Sr_{\mathsf{OWF}}=2S(T+2)$ gives
\ifnum\shorter=1
the above bound.
\else
\[
    \delta_{\mathsf{OWF}}^{\mathsf{QAI}}(S,T)
    \le C_{\mathsf{OWF}}\sqrt{2}
    \frac{2S(T+2)+(T+1)^2}{\min\{N,M\}}.
\]
\fi
\end{proof}

\begin{theorem}[QAI bound for salted OWF game]
  \label{thm:sowf-main}
  Let $\mathsf{OWF}_{\mathsf{S}}$ be the salted OWF game with salt space $[K]$. For all $K,N,M\ge1$ and $S,T\ge0$,
  \[
      \delta_{\mathsf{OWF}_\mathsf{S}}^{\mathsf{QAI}}(S,T)
        =
      O\left(\frac{S(T+1)}{K\min\{N,M\}}+\frac{(T+1)^{2}}{\min\{N,M\}}\right).
  \]
\end{theorem}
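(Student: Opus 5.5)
The plan mirrors the salted PRG and salted Yao's box proofs. I will show that the base OWF game is $(1,\min\{N,M\})$-diffuse at query budget $T$ and then invoke \cref{thm:D-diffuse-salt}.

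First, \cref{applem:owf-degree bound} gives, for every $T$-query adversary and every unit vector $|\psi\rangle$ on the purified-oracle and auxiliary registers,
\[
    \bigl|\langle\psi|M_{\mathcal E_{\mathsf{OWF}}}|\psi\rangle\bigr|
    =\langle\psi|M_{\mathcal E_{\mathsf{OWF}}}|\psi\rangle
    \le C_{\mathsf{OWF}}\frac{(T+1)^2}{\min\{N,M\}}
    +C_{\mathsf{OWF}}\frac{\langle\psi|\Lambda|\psi\rangle}{\min\{N,M\}}.
\]
The absolute value can be dropped because $0\preceq E^H_{\mathsf{OWF}}$, so $M_{\mathcal E_{\mathsf{OWF}}}$ is positive semidefinite.

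This is exactly \cref{def:D-diffuse} with the following choices:
\[
    \gamma=1,\qquad \Delta=\min\{N,M\},\qquad \tau_T=C_{\mathsf{OWF}}\frac{(T+1)^2}{\min\{N,M\}},\qquad \kappa=C_{\mathsf{OWF}}.
\]
Here I simply discard the $\min\{1,\cdot\}$ cap.

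Next, I apply \cref{thm:D-diffuse-salt} with $r=r_{\mathsf{OWF}}=2(T+2)$. This yields
\[
    \delta^{\mathsf{QAI}}_{\mathsf{OWF}_{\mathsf S}}(S,T)\le\sqrt2\left[C_{\mathsf{OWF}}\frac{(T+1)^2}{\min\{N,M\}}+C_{\mathsf{OWF}}\frac{2S(T+2)}{K\min\{N,M\}}\right].
\]
Since $T+2\le 2(T+1)$, this is $O\!\left(\frac{S(T+1)}{K\min\{N,M\}}+\frac{(T+1)^2}{\min\{N,M\}}\right)$, as claimed.

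The only point needing care is that \cref{thm:D-diffuse-salt} relies on \cref{lem:fixed-salt-reduction}(ii), applied to the function $f(L)=\tau_T+\kappa L/\Delta$. That lemma requires the base-game bound to hold for arbitrary unit vectors, not just those in $\mathcal L_P$. \cref{applem:owf-degree bound} is stated in exactly this form, so the step goes through; the real work sits in the proof of that lemma, via \cref{lem:back to psi search}.
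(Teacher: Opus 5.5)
Your proposal is correct and is essentially the paper's proof. The paper likewise reads off from the OWF database-size lemma that the base game is $(1,\min\{N,M\})$-diffuse with $\tau_T=C_{\mathsf{OWF}}(T+1)^2/\min\{N,M\}$ and $\kappa=C_{\mathsf{OWF}}$, and then applies the salting theorem for diffuse games with $r_{\mathsf{OWF}}=2(T+2)$. Your extra remarks (dropping the absolute value via $M_{\mathcal E_{\mathsf{OWF}}}\succeq 0$, discarding the cap, and using $T+2\le 2(T+1)$) only make explicit steps the paper leaves implicit.
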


\begin{proof}
    \cref{applem:owf-degree bound} shows that the OWF game is $(1,\min\{N,M\})$-diffuse at query budget $T$ with
    \ifnum\shorter=1
    $\tau_{T}=\frac{C_{\mathsf {OWF}}(T+1)^2}{\min\{N,M\}}, \kappa=C_{\mathsf {OWF}}.$
    \cref{thm:D-diffuse-salt} proves the theorem.
    \else
    \[
        \tau_{T}=C_{\mathsf {OWF}}\frac{(T+1)^2}{\min\{N,M\}}, \qquad \kappa=C_{\mathsf {OWF}}.
    \]
    \cref{thm:D-diffuse-salt} proves the theorem.
    \fi
\end{proof}

\subsection{Permutation inversion game}

\begin{definition}[Permutation Inversion Game]
  \label{def:permutation-game}
  Let $H\sample\mathcal{F}^{\perm}:=S_{N}$.  The security game $\mathcal G_{\PermInv}=(\mathcal C_{\PermInv})$ is specified by
  \begin{enumerate}[label=\arabic*.]
    \item $\Samp^H(y)=y$ for $y\sample[N]$;
    \item $\Query^H(y,x)=H(x)$;
    \item $\Ver^H(y,x)=1$ if and only if $y=H(x)$.
  \end{enumerate}
\end{definition}

Fix a \(T\)-query adversary \(\mathcal A\). For every oracle $H$ and challenge $y\in[N]$, let $E_{y,z}^{H}$ be the POVM element on the advice register $A$ corresponding to output $z\in [N]$. Define
\[
    E_{\PermInv}^{H}:=\frac{1}{N}\sum_{y\in [N]}E_{y,H^{-1}(y)}^H.
\]
Then, $E^{H}_{\PermInv}$ is the game operator corresponding to the winning outcome and satisfies $0\preceq E^{H}_{\PermInv}\preceq I$. For an advice family $\{\rho_H\}_{H}$,
\[
    \Pr[\mathsf{win}]=\E_{H}[\Tr(\rho_{H}E_{\PermInv}^{H})].
\]
Consequently,
\[
    \delta^{\mathsf{QAI}}_{\PermInv}(S,T)=\sup_{\mathcal{A}}\E_H[\Tr(\rho_H E^H_{\PermInv})].
\]
Let $\mathcal{E}_{\PermInv}:=\{E^H_{\PermInv}\}_{H\in\mathcal{F}^{\perm}}$ and let $M_{\mathcal{{E}_{\PermInv}}}$ be its purified game operator.
$M_{\mathcal{{E}_{\mathsf{PermInv}}}}$ and $M_{{\mathcal E}_{\mathsf{PermInv}}^{(K)}}$ increase the database size by at most $r_{\mathsf {PermInv}}=2(T+1)$.

\begin{lemma}
\ifnum\shorter=0
  \label{lem:perm-degree bound}
  \fi
  \label{apple:perm-degree bound}
  There exists a universal constant $C_{\perm}>0$ such that for every unit vector $\lvert\psi\rangle$ on the purified-oracle and auxiliary registers,
  \[
    \langle\psi\rvert
      M_{\mathcal{{E}_{\PermInv}}}
      \lvert\psi\rangle
    \leq
    \min\!\left\{1,
      C_{\perm}
        \frac{\langle\psi\rvert\Lambda^{\perm}\lvert\psi\rangle+(T+1)^2}{N}
    \right\}.
  \]
  Consequently, for every integer $P\ge 0$,
  \[
      ||\Pi_{\le P}^{\perm}M_{\mathcal{E}_{\PermInv}}\Pi_{\le P}^{\perm}||_{op}\le \min\left\{1,C_{\perm}
        \frac{P+(T+1)^2}{N}
      \right\}.
  \]
\end{lemma}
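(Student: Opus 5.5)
The proof derives the expected-size bound from a bit-fixing bound, as announced before \cref{lem:back to psi search}. We first obtain the size-restricted bound $\|\Pi_{\le P}^{\perm}M_{\mathcal E_{\PermInv}}\Pi_{\le P}^{\perm}\|_{op}\le C(P+(T+1)^2)/N$ from the known $(P,T)$-bit-fixing security of permutation inversion in the QRPM~\cite{ABC+26}. That result bounds the bit-fixing success probability by $O((P+T^2)/N)$, or $O(PT+T^2)/N$ in a weaker form; we need the version that is linear in $P$. By \cref{thm:bf-operator-characterization}, this bit-fixing bound equals $\sup_{\mathcal A}\|\Pi_{\le P}^{\perm}M_{\mathcal E_{\PermInv}}\Pi_{\le P}^{\perm}\|_{op}$. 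This gives \eqref{eq:BF bound search} with $\Delta=N$, $\kappa=C$, and $\tau_T=C(T+1)^2/N$, and hence the second (``consequently'') claim directly.

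To pass from the size cutoff to the expected size $\langle\psi|\Lambda^{\perm}|\psi\rangle$, we invoke \cref{lem:back to psi search} with $r=2(T+1)$. Its conclusion gives $\langle\psi|M|\psi\rangle\le 2(\tau_T+C(r-1)/N)+2C\langle\psi|\Lambda^{\perm}|\psi\rangle/N$. Since $r-1\le 2(T+1)\le 2(T+1)^2$, the additive term is absorbed into $O((T+1)^2/N)$ after enlarging the constant to $C_{\perm}$. The bound $\min\{1,\cdot\}$ holds trivially because $0\preceq M_{\mathcal E_{\PermInv}}\preceq I$.

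If \cref{lem:back to psi search} were not available, we would prove the conversion directly. Decompose $|\psi\rangle=\sum_j|\psi_j\rangle$ into the orthogonal sectors $\Pi_{=j}^{\perm}$, and group these sectors dyadically into blocks $B_\ell$ of sizes in $[2^{\ell-1},2^\ell)$. Using $M\succeq 0$ and the triangle inequality for $M^{1/2}$, together with Cauchy--Schwarz weighted by $w_\ell\propto 2^{-\ell/2}$ or a similar scheme, we would bound $\langle\psi|M|\psi\rangle$ by $\sum_\ell$ (weight) $\cdot\|\Pi_{\le 2^\ell}M\Pi_{\le 2^\ell}\|\,\|\psi_{B_\ell}\|^2$. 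This should yield a bound linear in $\sum_\ell 2^\ell\|\psi_{B_\ell}\|^2\le 2\langle\Lambda^{\perm}\rangle$. The paper's constant $2$ suggests a sharper two-term split, in which the cross terms are controlled by the fact that $M$ raises the database size by at most $r$.

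The main obstacle is the first step: obtaining a bit-fixing bound for permutation inversion that is truly linear in $P$, with no $PT$ cross term, under forward-only quantum queries. This relies on the compressed permutation oracle analysis of~\cite{ABC+26}. The remaining steps are routine applications of \cref{thm:bf-operator-characterization} and \cref{lem:back to psi search}.
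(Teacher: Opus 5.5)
Your proposal is correct and matches the paper's proof. The paper likewise takes the linear-in-$P$ bit-fixing bound of \cite[Lemma~2]{ABC+26} and turns it into the size-restricted operator bound via \cref{thm:bf-operator-characterization}; it then applies \cref{lem:back to psi search} with $\tau_T=C_{\mathsf{BF}}(T+1)^2/N$, $\kappa=C_{\mathsf{BF}}$, $\Delta=N$ and absorbs the $(r_{\PermInv}-1)/N$ term into $(T+1)^2/N$, obtaining the constant $8C_{\mathsf{BF}}$ where your accounting gives $6C$, so the dyadic-block fallback you sketch is not needed.
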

\begin{proof}
    See \cref{subsec:pf of lem:perm-degree bound}.
\end{proof}

\begin{theorem}[QAI bound for Permutation Inversion Game]
  \label{thm:perm-main}
  For all integers $S,T\ge 0$,
  \[
    \delta_{\PermInv}^{\mathrm{QAI}}(S,T)
    =
      O\left(
      \frac{S(T+1) + (T+1)^{2}}{N}\right).
  \]
\end{theorem}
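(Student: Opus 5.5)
The plan mirrors the proofs of \cref{thm:owf-main} and \cref{thm:prg-main}: it combines the permutation size-restricted bound of \cref{lem:perm-degree bound} with the QAI-to-bit-fixing reduction of \cref{cor:qai-from-level-bounds}, instantiated for the permutation database hierarchy $\{\mathcal L_P^{\perm}\}$. \Cref{sec:The degree-Restricted Moment Method for Oracle-Dependent Advice} treats functions and permutations uniformly. The only permutation-specific inputs are the following.
\begin{itemize}
  \item The initial state $|\Omega_{\mathcal F^{\perm}}\rangle=|v_\varnothing\rangle$ lies in $\mathcal L_0^{\perm}$.
  \item Each query raises the database size by at most one, by \cref{fact:permutation-query-level}.
  \item The characterization in \cref{thm:bf-operator-characterization} holds; its permutation case was handled explicitly, including the case $P>N-1$.
\end{itemize}
So the corollary applies verbatim with $\Pi_{\le P}^{\perm}$ in place of $\Pi_{\le P}$.

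First I count queries to fix $r$. In $\mathcal G_{\PermInv}$ the sampler outputs $y$ without querying, so $T_{\Samp}=0$. The verifier checks $H(x)=y$ with one query, so $T_{\Ver}=1$. Hence $r_{\PermInv}=2(T+1)$, matching the remark preceding \cref{lem:perm-degree bound}. Accordingly $P=Sr_{\PermInv}=2S(T+1)$. The case $S=0$ is also covered, since $P=0$ and the corollary reduces to the uniform bound.

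Next I note that the second part of \cref{lem:perm-degree bound} bounds $\|\Pi_{\le P}^{\perm}M_{\mathcal E_{\PermInv}}\Pi_{\le P}^{\perm}\|_{\mathrm{op}}$ by $\min\{1,C_{\perm}(P+(T+1)^2)/N\}$, uniformly over all $T$-query online algorithms and input-register choices. Substituting into \cref{cor:qai-from-level-bounds} gives
\[
\delta_{\PermInv}^{\mathsf{QAI}}(S,T)\le \sqrt2\,C_{\perm}\frac{2S(T+1)+(T+1)^2}{N},
\]
which is $O\bigl((S(T+1)+(T+1)^2)/N\bigr)$.

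There is no real obstacle once \cref{lem:perm-degree bound} is granted; all the substance lies in that lemma, whose proof is deferred. The one point to verify carefully is that the moment argument in the proof of \cref{thm:QAI bounded by BF} uses only the nested hierarchy, the query-level fact, and the Hermiticity of $E^H$. None of these rely on orthogonality of the database states $|v_\alpha\rangle$, which fails for permutations. The projectors $\Pi_{\le P}^{\perm}$ are genuine orthogonal projections onto $\mathcal L_P^{\perm}$, so the inequality $\|v_{j+1}\|\le\|\Pi_{\le(j+1)r}^{\perm}M\,\Pi_{\le(j+1)r}^{\perm}\|_{\mathrm{op}}\|v_j\|$ goes through unchanged.
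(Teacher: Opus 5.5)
Your proposal is correct and matches the paper's proof. You apply \cref{cor:qai-from-level-bounds} with $P=Sr_{\PermInv}=2S(T+1)$ (using $T_{\Samp}=0$, $T_{\Ver}=1$) to the uniform size-restricted bound of \cref{lem:perm-degree bound}, which gives $\sqrt2\,C_{\perm}\bigl(2S(T+1)+(T+1)^2\bigr)/N$. Your added check, that the moment argument needs only the orthogonal projectors $\Pi^{\perm}_{\le P}$ onto the nested hierarchy and not orthogonality of the $|v_\alpha\rangle$, is sound; the paper leaves this implicit.
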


\begin{proof}
The size-restricted bound in \cref{apple:perm-degree bound} holds uniformly
over all $T$-query online algorithms. Applying
\cref{cor:qai-from-level-bounds} with $P=Sr_{\PermInv}=2S(T+1)$ 
\ifnum\shorter=1
proves the theorem.
\else
gives
\[
    \delta_{\PermInv}^{\mathsf{QAI}}(S,T)
    \le C_{\perm}\sqrt{2}\frac{2S(T+1)+(T+1)^2}{N}.
\]
\fi
\end{proof}

\begin{theorem}[QAI bound for salted Permutation Inversion game]
  \label{thm:sPERM-main}
  Let $\PermInv_{\mathsf{S}}$ be the salted permutation inversion game with salt space $[K]$. For all integers $S,T\ge 0$ and $K,N\ge 1$,
  \[
      \delta_{\PermInv_\mathsf{S}}^{\mathsf{QAI}}(S,T)
        =
      O\left(\frac{S(T+1)}{KN}+\frac{(T+1)^{2}}{N}\right).
  \]
\end{theorem}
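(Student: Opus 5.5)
The plan mirrors the salted OWF proof: establish that permutation inversion is $(1,N)$-diffuse at query budget $T$, then invoke \cref{thm:D-diffuse-salt}. By \cref{apple:perm-degree bound}, every $T$-query adversary with induced game-operator family $\mathcal E_{\PermInv}$ and every unit vector $|\psi\rangle$ on the purified-oracle and auxiliary registers satisfy
\[
    \langle\psi|M_{\mathcal E_{\PermInv}}|\psi\rangle
    \le C_{\perm}\frac{(T+1)^2}{N}+C_{\perm}\frac{\langle\psi|\Lambda^{\perm}|\psi\rangle}{N}.
\]
Since $M_{\mathcal E_{\PermInv}}\succeq 0$, the absolute value in \cref{def:D-diffuse} is harmless. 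This is exactly the $(1,N)$-diffuse condition with $\tau_T=C_{\perm}(T+1)^2/N$ and $\kappa=C_{\perm}$.

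The one point to check is that the salted machinery of \cref{sec:salting} applies to the permutation hierarchy. For each salt $s$, the spaces $\mathcal D_j^{\perm}$ are orthogonal with $\sum_j\Pi^{\perm}_{=j}=I$. So the tensor-product hierarchy $\mathcal L_P^{(K)}$ built from $\mathcal D^{\perm}_{j_1}\otimes\cdots\otimes\mathcal D^{\perm}_{j_K}$ is well defined, and $\sum_s\Lambda_s$ is at most $P$ on it. By \cref{fact:permutation-query-level} and the remark in \cref{subsec:Salted Filtration}, each query increases the total size by at most one. Also, \cref{lem:fixed-salt-reduction}(ii) transfers the diffuse bound from $\Lambda^{\perm}$ on $O_s$ to $\Lambda_s$, because the other salts' oracles act as $H_s$-independent workspace. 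These are the same ingredients already used in the proof of \cref{thm:D-diffuse-salt}, which was stated for both QROM and QRPM. I therefore expect this step to be a verification rather than a real obstacle.

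Applying \cref{thm:D-diffuse-salt} with $r=r_{\PermInv}=2(T+1)$ gives
\[
    \delta_{\PermInv_{\mathsf S}}^{\mathsf{QAI}}(S,T)\le\sqrt2\left[C_{\perm}\frac{(T+1)^2}{N}+C_{\perm}\frac{2S(T+1)}{KN}\right]
    =O\!\left(\frac{S(T+1)}{KN}+\frac{(T+1)^2}{N}\right).
\]
Any genuine difficulty would lie upstream, in \cref{apple:perm-degree bound}. There a bit-fixing bound of the form $C(P+(T+1)^2)/N$ (from \cite{ABC+26}) must be converted into an expected-size bound via \cref{lem:back to psi search}. That lemma incurs a factor $2$ and shifts the additive term by $\kappa(r-1)/N=O(T/N)$, which is absorbed into $(T+1)^2/N$. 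Since that lemma is given, the theorem itself follows directly.
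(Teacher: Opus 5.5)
Your proposal is correct and is essentially the paper's proof: it reads off $(1,N)$-diffuseness with $\tau_T=C_{\perm}(T+1)^2/N$ and $\kappa=C_{\perm}$ from \cref{apple:perm-degree bound}, then applies \cref{thm:D-diffuse-salt} with $r_{\PermInv}=2(T+1)$. Your side remarks, that the salted permutation hierarchy is well defined and that \cref{lem:back to psi search} accounts for the upstream constants, are consistent with how the paper sets things up.
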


\begin{proof}
    \cref{apple:perm-degree bound} shows that the permutation-inversion game is $(1,N)$-diffuse at query budget $T$ with
    \ifnum\shorter=1
    $\tau_{T}=\frac{C_{\perm}(T+1)^2}{N},\kappa=C_{\perm}.$
    Applying
    \cref{thm:D-diffuse-salt} proves the theorem.
    \else
    \[
        \tau_{T}=C_{\perm}\frac{(T+1)^2}{N}, \qquad \kappa=C_{\perm}.
    \]
    \cref{thm:D-diffuse-salt} proves the theorem.
    \fi
\end{proof}

\fi

\bibliographystyle{alpha}
\bibliography{cryptobib/abbrev3,cryptobib/crypto,ref}
\appendix
\crefalias{section}{appendix}

\ifnum\shorter=1
\section{Proofs of the Main Theorems}
\subsection{Proof of \texorpdfstring{\cref{thm:degree-restricted-moment}}{thm~degree-restricted-moment}}
\label{subsec: degree-restricted-proof}

\begin{proof}[Proof of \cref{thm:degree-restricted-moment}]
Fix an online algorithm $\mathcal A$, and write
$\mathcal E:=\mathcal E_{\mathcal A}$. Any offline phase making at most
$P$ queries produces, after postselection, a normalized state
$|\psi_0\rangle_{OAZB}\in\mathcal L_P(AZB)$: the initial oracle state
has size zero; each query increases the size by at most one
by~\cref{fact:function-query-level,fact:permutation-query-level},
and postselection on the algorithm's registers preserves membership in
this subspace. Its reduced state
$\sigma_{OA}:=\operatorname{Tr}_{ZB}|\psi_0\rangle\langle\psi_0|$
therefore satisfies $\sigma_{OA}=\Pi_{\le P}\sigma_{OA}\Pi_{\le P}$.
The bit-fixing advantage is consequently bounded by
\[
    \bigl|\operatorname{Tr}(\sigma_{OA}M_{\mathcal E})\bigr|
    =\bigl|\operatorname{Tr}(\sigma_{OA}\Pi_{\le P}M_{\mathcal E}\Pi_{\le P})\bigr|
    \le\bigl\|\Pi_{\le P}M_{\mathcal E}\Pi_{\le P}\bigr\|_{\mathrm{op}}.
\]

Conversely, we show that every unit vector
$|\psi\rangle_{OA}\in\mathcal L_P(A)$ can be prepared using at most
$P$ queries and postselection with positive probability. We give the
construction for both database hierarchies.
\paragraph{Case 1: Function.}
    Let $C_{\mathrm F}:=\sum_{j=0}^{P}\binom Nj(M-1)^j$ be the number of all possible databases of size at most $P$.
    Adjoining the register $R$ to the purified oracle state $|\Omega_{\mathcal F}\rangle = |\phi_{0}\rangle$, prepare
    \[
    |\Psi^{\mathrm F}_0\rangle_{OR}:=\frac{1}{\sqrt{C_{\mathrm F}}}\sum_{\substack{D\in\mathbb Z_M^N\\|D|\le P}}|\phi_0\rangle_O|D\rangle_R.
    \]
    For each $D$, write $\operatorname{supp}(D)=\{x_1<\cdots<x_k\}$, where $k=|D|$. For $1\le i\le k$, coherently compute $(x_i,D(x_i))$ from $D$, apply the phase oracle, and uncompute the query registers. If $k<P$, each of the remaining queries is made at a fixed point with phase label $0$. Since
    \[
    \mathcal O\bigl(|\phi_{D'}\rangle_O|x,u\rangle\bigr)=|\phi_{D'+u e_x}\rangle_O|x,u\rangle,
    \]
    the resulting state is
    \[
    |\Psi^{\mathrm F}_P\rangle_{OR}=\frac{1}{\sqrt{C_{\mathrm F}}}\sum_{\substack{D\in\mathbb Z_M^N\\|D|\le P}}|\phi_D\rangle_O|D\rangle_R.
    \]

\paragraph{Case 2: Permutation.}
    First suppose that $P\le N-1$. Prepare
    \[
    |\Psi^{\mathrm{perm}}_0\rangle_{OR}:=\frac{1}{\sqrt{\binom NP\,N!}}\sum_{X\in\binom{[N]}P}\sum_{H\in\mathcal F^{\mathrm{perm}}}|H\rangle_O|X,0^P\rangle_R.
    \]
    For $X=\{x_1<\cdots<x_P\}$, query $x_1,\ldots,x_P$ in the standard oracle form, implemented using one phase-oracle query per point by \cref{lem:std oracle and phase oracle}, and store their images in $R$. This gives
    \[
    |\Psi^{\mathrm{perm}}_P\rangle_{OR}=\frac{1}{\sqrt{\binom NP\,N!}}\sum_{X\in\binom{[N]}P}\sum_{H\in\mathcal F^{\mathrm{perm}}}|H\rangle_O|H|_X\rangle_R.
    \]
    Since $\sum_{H\supseteq\alpha}|H\rangle_O=\sqrt{(N-P)!}\,|v_\alpha\rangle_O$, we have
    \[
    |\Psi^{\mathrm{perm}}_P\rangle_{OR}=\sqrt{\frac{(N-P)!}{\binom NP\,N!}}\sum_{\substack{\alpha\text{ database}\\|\alpha|=P}}|v_\alpha\rangle_O|\alpha\rangle_R.
    \]
    By \cref{sec:Random Permutations: The Partial-Assignment Filtration}, the vectors $|v_\alpha\rangle$ with $|\alpha|=P$ span $\mathcal L_P^{\mathrm{perm}}$. 
    For $P>N-1$, use the construction with $P=N-1$, since $\mathcal L_P^{\mathrm{perm}}=\mathcal L_{N-1}^{\mathrm{perm}}$.
    
\paragraph{Constructing the unitary.} 
In either case, the resulting state has the form
    \[
    |\Psi_P\rangle_{OR}=\gamma\sum_a|w_a\rangle_O|a\rangle_R,
    \]
    where $\gamma>0$ is the corresponding coefficient above, and $(a,|w_a\rangle)$ denotes $(D,|\phi_D\rangle)$ in the function case and $(\alpha,|v_\alpha\rangle)$ in the permutation case. Since the vectors $|w_a\rangle$ span $\mathcal L_P$, choose vectors $|z_a\rangle_A$ such that
    \[
    |\psi\rangle_{OA}=\sum_a|w_a\rangle_O|z_a\rangle_A.
    \]
    Define an operator $F$ on $RA$ by
    \[
    F:=\sum_a|0\rangle\langle a|_R\otimes|z_a\rangle\langle0|_A.
    \]
    Then
    \[
    (I_O\otimes F)(|\Psi_P\rangle_{OR}|0\rangle_A)=\gamma|\psi\rangle_{OA}|0\rangle_R.
    \]

    Take $c:=1/(1+||F||_{op})>0$ and set $K:=cF$. Then $||K||_{op}\le 1$. Let $K=U\Sigma V^{\dagger}$ be its singular value decomposition, and choose an orthonormal basis $\{|j\rangle\}$ of $RA$ such that $\Sigma|j\rangle=s_j|j\rangle$, where $0\le s_j\le1$. We implement $K$ by postselection.
    
    Let $B$ be a flag qubit. For an arbitrary state $|r\rangle_{RA}$, start with $|r\rangle_{RA}|0\rangle_B$. Apply $V^{\dagger}$:
    \[
        |r\rangle_{RA}|0\rangle_B\mapsto \sum_{j}r_j |j\rangle|0\rangle_{B},\qquad r_{j}:=\langle j|V^{\dagger}|r\rangle.
    \]
    Apply the rotation:
    \[
        |j\rangle|0\rangle_{B} \mapsto |j\rangle (s_{j}|0\rangle_B+\sqrt{1-s_{j}^2}|1\rangle_B).
    \]
    After applying $U$, the resulting state is
    \[
    \begin{aligned}
    &\sum_j r_jU|j\rangle
      \left(s_j|0\rangle_B+\sqrt{1-s_j^2}|1\rangle_B\right)\\
    &\quad=
    U\Sigma V^\dagger|r\rangle|0\rangle_B
    +
    U\sqrt{I-\Sigma^2}V^\dagger|r\rangle|1\rangle_B\\
    &\quad=
    K|r\rangle|0\rangle_B+L|r\rangle|1\rangle_B
    \end{aligned}
    \]
    where $L:=U\sqrt{I-\Sigma^2}V^\dagger$. Let this unitary be $\mathcal U$.
    It follows that
    \[
    \begin{aligned}
    &(I_O\otimes\mathcal U)
    \bigl(|\Psi_P\rangle_{OR}|0\rangle_A|0\rangle_B\bigr)\\
    &\quad=
    c\gamma
    |\psi\rangle_{OA}|0\rangle_R|0\rangle_B+
    (I_O\otimes L)
    \bigl(|\Psi_P\rangle_{OR}|0\rangle_A\bigr)|1\rangle_B.
    \end{aligned}
    \]
    Thus, postselecting on $B=0$, which occurs with probability $c^2\gamma^2>0$, leaves $R$ in the fixed state $|0\rangle_R$ and the remaining registers in the state $|\psi\rangle_{OA}$.

For each unit vector $|\psi\rangle\in\mathcal L_P(A)$, the construction
gives a valid bit-fixing adversary with online algorithm $\mathcal A$
and advantage $|\langle\psi|M_{\mathcal E}|\psi\rangle|$.
The maximum of this expression is the database-size-restricted operator norm, attained
by an eigenvector whose eigenvalue has the largest absolute value.
Together with the upper bound, this proves the characterization for each
fixed online algorithm. Taking the supremum over online algorithms
proves~\eqref{eq:bf-operator-characterization}.
\end{proof}

\subsection{Proof of \texorpdfstring{\cref{thm:generic-public-salting}}{thm~generic-public-salting}}
\label{subsec:generic-salting}

\begin{proof}[Proof of \cref{thm:generic-public-salting}]
    Fix a challenge salt $s$ and a unit vector $|\psi\rangle\in \mathcal{L}_{P}^{(K)}$. Define $|\psi_{0}\rangle:=\Pi_{s,=0}|\psi\rangle$ and $|\psi_{1}\rangle:=(I-\Pi_{s,=0})|\psi\rangle$. Let $M_{\mathcal E_s}$ be the fixed-salt purified game operator of the salted game.

    For a search game, $0\preceq M_{\mathcal E_s}\preceq I$, and hence
    $M_{\mathcal E_s}^{1/2}$ is well defined.
    By \hyperref[item:image in level 0]
{\cref*{lem:fixed-salt-reduction}\ref*{item:image in level 0}} and the uniform-security assumption,
    \begin{equation}
        \lVert M_{\mathcal{E}_s}^{1/2}|\psi_{0}\rangle\rVert_{2}^{2}=\langle \psi_{0}|M_{\mathcal{E}_s}|\psi_{0}\rangle \le \lVert |\psi_{0}\rangle\rVert_{2}^{2}\nu_{\mathcal{G}}(T) \le \nu_{\mathcal{G}}(T)
        \label{eq:bound psi_0}
    \end{equation}
    From $M_{\mathcal{E}_s}\preceq I$ and \cref{eq:lambda_s and Pi_s=0}, we have
    \begin{align}
        \lVert M_{\mathcal{E}_s}^{1/2}|\psi_{1}\rangle\rVert_{2}^{2}&=\langle \psi_{1}|M_{\mathcal{E}_s}|\psi_{1}\rangle \notag \\
        &\le \langle \psi_{1}|I|\psi_{1}\rangle = \langle \psi|(I-\Pi_{s,=0})|\psi\rangle \notag \\
        &\le \langle \psi|\Lambda_{s}|\psi\rangle.
        \label{eq:bound psi_1}
    \end{align}
    Combining \eqref{eq:bound psi_0} and \eqref{eq:bound psi_1}, we get
    \begin{align}
        \langle \psi|M_{\mathcal{E}_s}|\psi\rangle &= \lVert M_{\mathcal{E}_s}^{1/2}|\psi\rangle\rVert_{2}^{2} \notag\\
        &\le (\lVert M_{\mathcal{E}_s}^{1/2}|\psi_{0}\rangle\rVert_{2}+\lVert M_{\mathcal{E}_s}^{1/2}|\psi_{1}\rangle\rVert_{2})^{2} \notag\\
        &\le  \left(\sqrt{\nu_{\mathcal{G}}(T)} + \sqrt{\langle \psi|\Lambda_{s}|\psi\rangle}\right)^2
        \label{eq:nu before averaging}
    \end{align}
    Averaging \eqref{eq:nu before averaging} over $s \in [K]$ and using \eqref{eq:averaged lambda inequality} and Jensen's inequality gives
    \[
        \langle \psi|M_{\mathcal E^{(K)}}|\psi\rangle \le \left(\sqrt{\nu_{\mathcal{G}}(T)} + \sqrt{P/K}\right)^2.
    \]
    Taking the supremum over unit vectors \(|\psi\rangle\in\mathcal L_P^{(K)}\) gives, for any $P\ge 0$,
    \[
        \bigl\|\Pi_{\le P}^{(K)}M_{\mathcal E^{(K)}}\Pi_{\le P}^{(K)}\bigr\|_{\mathrm{op}}
        \le \left(\sqrt{\nu_{\mathcal{G}}(T)} + \sqrt{P/K}\right)^2.
    \]
    
    This bound holds for every salted online algorithm, so
    applying \cref{cor:qai-from-level-bounds} with $P=Sr$ gives
    \[
        \delta_{\mathcal G_{\mathsf S}}^{QAI}(S,T)
        =
        O\left(
            \left(
                \sqrt{\nu_{\mathcal G}(T)}
                +
                \sqrt{\frac{Sr}{K}}
            \right)^2
        \right)
        =
        O\left(
            \nu_{\mathcal G}(T)+\frac{Sr}{K}
        \right).
    \]
    This proves $(i)$.

   For a decision game, \(M_{\mathcal E_s}\) need not be positive, so the preceding positive-square-root argument does not apply. \hyperref[item:image in level 0]
{\cref*{lem:fixed-salt-reduction}\ref*{item:image in level 0}}, the hypothesis of (ii), and \eqref{eq:lambda_s and Pi_s=0} give
    \begin{equation}
        |\langle \psi_{0}|M_{\mathcal{E}_s}|\psi_{0}\rangle| \le \epsilon_{\mathcal{G}}(T),\qquad \lVert |\psi_{1}\rangle\rVert_{2}^{2} \le \langle \psi|\Lambda_{s}|\psi\rangle.
        \label{eq:decision bound}
    \end{equation}
    By the Cauchy–Schwarz inequality and $||M_{\mathcal{E}_s}||_{op}\le 1$, 
    \begin{equation}
        |\langle \psi_{0}|M_{\mathcal{E}_s}|\psi_{1}\rangle| \le || M_{\mathcal{E}_s}||_{op}\lVert |\psi_{0}\rangle\rVert_{2} \lVert |\psi_{1}\rangle\rVert_{2} \le \lVert |\psi_{1}\rangle\rVert_{2}.
        \label{eq:cauchy}
    \end{equation}
    Combining \eqref{eq:cauchy} and \eqref{eq:decision bound} yields
    \begin{equation}
        |\langle \psi|M_{\mathcal{E}_s}|\psi\rangle| \le \epsilon_{\mathcal{G}}(T) + 2\sqrt{\langle \psi|\Lambda_{s}|\psi\rangle}+\langle \psi|\Lambda_{s}|\psi\rangle.
        \label{eq:epsilon before averaging}
    \end{equation}
    Averaging \eqref{eq:epsilon before averaging} over $s\in[K]$ and using the triangle inequality, \eqref{eq:averaged lambda inequality}, and Jensen’s inequality gives
    \[
        |\langle \psi|M_{\mathcal{E}^{(K)}}|\psi\rangle| \le \epsilon_{\mathcal{G}}(T) + 2\sqrt{P/K}+P/K. 
    \]
    As $|\psi\rangle$ and $P\ge 0$ are arbitrary, for any $P \ge 0$, 
    \[
        \bigl\|\Pi_{\le P}^{(K)}M_{\mathcal E^{(K)}}\Pi_{\le P}^{(K)}\bigr\|_{\mathrm{op}}
        \le \epsilon_{\mathcal{G}}(T) + 2\sqrt{P/K}+P/K.
    \]
    This bound holds for every salted online algorithm, so
    applying \cref{cor:qai-from-level-bounds} with $P=Sr$ gives
    \[
        \epsilon_{\mathcal G_{\mathsf S}}^{QAI}(S,T)
        =
        O\left(
            \epsilon_{\mathcal G}(T)
            +
            \sqrt{\frac{Sr}{K}}
            +
            \frac{Sr}{K}
        \right).
    \]
    If $Sr/K\le1$, then $Sr/K\le\sqrt{Sr/K}$. If $Sr/K>1$, the claimed
    bound follows from the trivial bound on the decision advantage.
    Therefore,
    \[
        \epsilon_{\mathcal G_{\mathsf S}}^{QAI}(S,T)
        =
        O\left(
            \epsilon_{\mathcal G}(T)
            +
            \sqrt{\frac{Sr}{K}}
        \right).
    \]
    This proves $(ii)$.
\end{proof}

\fi

\section{Missing Proofs}
\label{sec:Missing Proofs}

In this section,  we present proofs omitted from earlier sections. 
Throughout this section, we fix an orthonormal basis $\{|z\rangle\}_Z$ of $\mathcal H_Z$ for the internal register $Z$.

\subsection{Proof for the Uniform one-point reprogramming lemma}
\label{sec:random_reprogramming}

We use the following classical polynomial inequality.

\begin{lemma}[Markov's polynomial inequality~{\cite{Markov1890}}]
    \label{lem:markov-polynomial}
    Every real polynomial $p$ of degree at most $d$ satisfies
    \[
        \max_{s\in[-1,1]} |p'(s)|
        \le
        d^2 \max_{s\in[-1,1]} |p(s)|.
    \]
\end{lemma}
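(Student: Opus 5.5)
The statement is Markov's classical inequality, $\max_{[-1,1]}|p'|\le d^2\max_{[-1,1]}|p|$. I would prove it through the Chebyshev polynomials and the Bernstein inequality, reducing to the boundary behaviour of trigonometric polynomials.

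Normalize so that $\max_{[-1,1]}|p|=1$. Substitute $s=\cos\theta$ and set $q(\theta):=p(\cos\theta)$. This $q$ is a real trigonometric polynomial of degree at most $d$ with $|q|\le1$.

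The first step is Bernstein's inequality, $|q'(\theta)|\le d$. I would prove it by the standard comparison argument. Suppose $|q'(\theta_0)|>d$ at some point, and compare $q$ with $\lambda\cos(d(\theta-\phi))$ for a suitable $\lambda<1$ and phase $\phi$. Counting sign alternations shows that the difference has more than $2d$ zeros on a period, which contradicts its degree. Via the chain rule this gives the interior bound
\[
|p'(s)|\le \frac{d}{\sqrt{1-s^2}}.
\]

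The second step, which I expect to be the main obstacle, is to handle points near $\pm1$, where this interior bound blows up. There I would use Lagrange interpolation of $p'$ at the $d$ Chebyshev nodes $s_k=\cos\frac{(2k-1)\pi}{2d}$. This is Schaeffer–Duffin style: write $p'(s)$ as a combination of the values $p'(s_k)$ and $p(s_k)$ with weights built from $T_d$.

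Concretely, for $|s|\ge s_1=\cos\frac{\pi}{2d}$, the interpolation identity expresses $p'(s)$ through the nodal values. All the weights share a common sign. Their absolute sum equals $T_d'(s)$ evaluated at the endpoint, and $T_d'(\pm1)=\pm d^2$ in absolute value. Hence $|p'(s)|\le |T_d'(s)|\le d^2$ outside $[-s_1,s_1]$. The key fact to verify is this sign pattern of the weights; it follows from the factorization $T_d(s)=2^{d-1}\prod_k(s-s_k)$.

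Inside $[-s_1,s_1]$ the first step suffices, since
\[
\frac{d}{\sqrt{1-s^2}}\le \frac{d}{\sin(\pi/2d)}\le d\cdot d=d^2,
\]
using $\sin x\ge 2x/\pi$. Combining the two regions gives the claim.

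Since this is a classical result, in the paper itself I would simply cite \cite{Markov1890} rather than reprove it.
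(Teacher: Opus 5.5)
The paper gives no proof of this lemma. It quotes it from \cite{Markov1890} and uses it only once, at an endpoint, to obtain $|r'(-1)|\le(2T)^2$ in the proof of \cref{lem:uniform-one-point-reprogramming}. Your plan to cite it therefore matches the paper. Your sketch is the standard textbook proof: Bernstein's inequality, then Lagrange interpolation of $p'$ at the zeros of $T_d$ near $\pm1$, then $\sin x\ge 2x/\pi$ in the middle. It is correct in outline, but three slips should be fixed if you write it out.

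First, in the Bernstein comparison the factor $\lambda<1$ belongs on $q$, not on the cosine. Compare $\lambda q$ with $\cos(d(\theta-\phi))$, where $\|\lambda q\|_\infty<1$, $\lambda|q'(\theta_0)|>d$ still holds, and $\phi$ is chosen so the two agree at $\theta_0$. With amplitude $\lambda<1$ on the cosine, $|q|$ may exceed $\lambda$ at the cosine's extrema, so the sign alternation can fail.

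Second, only the values $p'(s_k)$ are needed, since $\deg p'\le d-1$; the values $p(s_k)$ play no role. The identity is $p'(s)=\sum_k \frac{p'(s_k)}{T_d'(s_k)}\cdot\frac{T_d(s)}{s-s_k}$. The factors $T_d(s)/(s-s_k)=2^{d-1}\prod_{j\ne k}(s-s_j)$ share a sign for $|s|\ge s_1$ and sum to $T_d'(s)$. Bernstein at the nodes, together with $|T_d'(s_k)|=d/\sqrt{1-s_k^2}$, gives $|p'(s_k)/T_d'(s_k)|\le\max_{[-1,1]}|p|$.

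Third, you need $|T_d'(s)|\le d^2$ on all of $[s_1,1]$, not just $|T_d'(\pm1)|=d^2$. This follows from $|T_d'(\cos\theta)|=d|\sin d\theta|/|\sin\theta|$ and $|\sin d\theta|\le d|\sin\theta|$.

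Since the paper only uses the endpoint case, a shorter self-contained route exists. Interpolate $p$ at the $d+1$ extrema $\eta_k=\cos(k\pi/d)$ of $T_d$. The Lagrange basis derivatives $L_k'(1)$ have sign $(-1)^k$, and $T_d(\eta_k)=(-1)^k$. Hence $\sum_k|L_k'(1)|=T_d'(1)=d^2$, so $|p'(1)|\le d^2\max_{[-1,1]}|p|$ without invoking Bernstein's inequality.
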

\begin{proof}[Proof of \cref{lem:uniform-one-point-reprogramming}]
    Purify the circuit's initial state and defer all intermediate
    measurements, so that the computation is unitary until a final
    projective measurement. These transformations can be chosen
    independently of $x$.

    For $v\in\mathbb{Z}_M$, let $S_v$ be the shift operator defined by
    $S_v|u\rangle=|u+v\rangle$, where addition is modulo $M$.
    For each $z\in\{0,1\}^N$, define the oracle query
    \[
        U_z
        :=
        \sum_{w\in[N]}
        |w\rangle\langle w|
        \otimes
        \bigl(
            (1-z_w)S_{H(w)}+z_wS_h
        \bigr).
    \]
    Thus, $U_0$ is the query to $H$, while $U_{e_x}$ is the query to
    $H_{x\to h}$, where $e_x$ denotes the Boolean vector whose only
    nonzero coordinate is $x$. We run the same circuit with $U_z$
    in place of each oracle query, keeping the initial state,
    inter-query operations, and final measurement fixed for every $z$.

    Every matrix entry of $U_z$ is affine in $z$. Consequently, if
    we write the final state as
    \[
        |\Psi(z)\rangle
        =
        \sum_j q_j(z)|j\rangle
    \]
    in a fixed orthonormal basis, each amplitude $q_j(z)$ is a
    polynomial of degree at most $T$. Let $\Pi$ be the final
    accepting projector and define
    \[
        A(z):=\langle\Psi(z)|\Pi|\Psi(z)\rangle.
    \]
    The acceptance probability $A(z)$ is therefore a polynomial
    of degree at most $2T$. Multilinearizing using
    $z_w^2=z_w$ on $\{0,1\}^N$, we obtain
    \begin{equation}
        A(z)
        =
        \sum_{\substack{S\subseteq[N]\\|S|\le 2T}}
        c_S\prod_{w\in S}z_w,
        \label{eq:uniform-acceptance-multilinear}
    \end{equation}
    where the coefficients $c_S$ are real. Moreover,
    $0\le A(z)\le 1$ for every $z\in\{0,1\}^N$, and
    \[
        A(0)=a(\varnothing),
        \qquad
        A(e_x)=a(x).
    \]

    For each $t\in[0,N]$, let $Z_1(t),\ldots,Z_N(t)$ be independent
    $\operatorname{Bernoulli}(t/N)$ random variables, and write
    $Z(t):=(Z_1(t),\ldots,Z_N(t))$.
    Taking expectations in
    \eqref{eq:uniform-acceptance-multilinear} gives
    \begin{equation}
        q(t)
        :=
        \E[A(Z(t))]
        =
        \sum_{\substack{S\subseteq[N]\\|S|\le 2T}}
        c_S\left(\frac{t}{N}\right)^{|S|}.
        \label{eq:uniform-univariate-polynomial}
    \end{equation}
    Hence $q$ is a real polynomial of degree at most $2T$.
    Since $q(t)$ is an average of acceptance probabilities,
    \[
        0\le q(t)\le 1
        \qquad\text{for every }t\in[0,N].
    \]
    
    Differentiating
    \eqref{eq:uniform-univariate-polynomial} at zero leaves only
    the singleton terms:
    \[
        q'(0)
        =
        \frac{1}{N}\sum_{x\in[N]}c_{\{x\}}.
    \]
    On the other hand,
    \eqref{eq:uniform-acceptance-multilinear} gives
    $A(0)=c_\emptyset$ and
    $A(e_x)=c_\emptyset+c_{\{x\}}$. Therefore,
    \[
    q'(0)
        =
        \frac{1}{N}\sum_{x\in[N]}
        \bigl(A(e_x)-A(0)\bigr)
        =
        \E_{x\sample[N]}[a(x)]-a(\varnothing).
    \]

    Finally, define
    \[
        r(s)
        :=
        2q\left(\frac{N(s+1)}{2}\right)-1.
    \]
    Then $\deg r\le 2T$ and $|r(s)|\le 1$ on $[-1,1]$.
    By Lemma~\ref{lem:markov-polynomial},
    \[
        |r'(-1)|\le (2T)^2.
    \]
    Since $r'(-1)=Nq'(0)$, it follows that
    \[
        \left|
            \E_{x\sample[N]}[a(x)]-a(\varnothing)
        \right|
        =
        |q'(0)|
        \le
        \frac{4T^2}{N},
    \]
    as claimed.
\end{proof}

\subsection{Diffuse Bounds for Decision Games}
\label{subsec:pf of lem:back to psi decision}

For completeness, we give the decision-game analogue of
\cref{lem:back to psi search}.

\begin{lemma}
    Fix an integer $T\ge0$. For a decision game $\mathcal G=(\mathcal C)$ with $C=(\Samp,\Query,\Ver)$, put $r:=\max\{1,2(T_{\Samp}+T+T_{\Ver})\}$. Suppose there exist $\tau_T\ge0$, a universal constant $\kappa>0$, and a constant $\Delta>0$ determined by $\mathcal G$ such that, for every $T$-query online algorithm and every integer $P\ge0$,
    \begin{equation}
        ||\Pi_{\le P}M_{\mathcal E}\Pi_{\le P}||_{op}\le \tau_T + \kappa\sqrt{\frac{P}{\Delta}}.
        \label{eq:BF bound decision}
    \end{equation}
    Then $\mathcal G$ is $(1/2,\Delta/(1+\lceil\log_2r\rceil))$-diffuse at query budget $T$ with parameters
    \[
        \tau_T':=4\tau_T,\qquad \kappa':=12\kappa.
    \]
    \label{lem:back to psi decision}
\end{lemma}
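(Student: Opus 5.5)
The plan is to decompose $|\psi\rangle$ into orthogonal pieces by database size and bound $\langle\psi|M_{\mathcal E}|\psi\rangle$ block by block, using \eqref{eq:BF bound decision} on each block. The structural fact I would use first is that $M_{\mathcal E}$ is \emph{banded}. Since $E^H=(I\otimes\langle0|)(V^H)^\dagger\Pi V^H(I\otimes|0\rangle)$ with $r$ oracle queries in total, \cref{fact:function-query-level,fact:permutation-query-level} show that $M_{\mathcal E}$ maps $\mathcal L_j$ into $\mathcal L_{j+r}$. Because $M_{\mathcal E}$ is Hermitian, $\Pi_{=i}M_{\mathcal E}\Pi_{=j}=0$ whenever $|i-j|>r$. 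The second elementary fact is that for $|u\rangle,|v\rangle\in\mathcal L_P$ we have $|\langle u|M_{\mathcal E}|v\rangle|\le\|\Pi_{\le P}M_{\mathcal E}\Pi_{\le P}\|_{\mathrm{op}}\,\||u\rangle\|\,\||v\rangle\|\le(\tau_T+\kappa\sqrt{P/\Delta})\||u\rangle\|\,\||v\rangle\|$.

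\emph{Coarse step (controls $\tau_T$).} Let $\psi_0:=\Pi_{\le r}\psi$ and $\psi_k:=(\Pi_{\le 2^kr}-\Pi_{\le 2^{k-1}r})\psi$ for $k\ge1$. Blocks $k$ and $k+2$ are separated by a size gap of at least $2^kr\ge r$, so only diagonal and adjacent cross terms survive. Bounding each cross term with $\||\psi_k\rangle\|\,\||\psi_{k+1}\rangle\|\le\frac12(\||\psi_k\rangle\|^2+\||\psi_{k+1}\rangle\|^2)$ gives
\[
|\langle\psi|M_{\mathcal E}|\psi\rangle|\le 3\sum_k\Bigl(\tau_T+\kappa\sqrt{2^{k+1}r/\Delta}\Bigr)\||\psi_k\rangle\|^2 .
\]
The $\tau_T$ part is at most $3\tau_T\le 4\tau_T$. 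For $k\ge1$, every size in block $k$ exceeds $2^{k-1}r$, so $2^{k+1}r\||\psi_k\rangle\|^2\le4\langle\psi_k|\Lambda|\psi_k\rangle$. By Cauchy--Schwarz, $\sum_{k\ge1}\sqrt{2^{k+1}r/\Delta}\,\||\psi_k\rangle\|^2\le 2\sqrt{L/\Delta}$, where $L=\langle\psi|\Lambda|\psi\rangle$.

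\emph{Fine step (the low block).} The only term not yet controlled by $L$ is the contribution of $\psi_0$, whose coefficient $\sqrt{2r/\Delta}$ is not comparable to $\langle\psi_0|\Lambda|\psi_0\rangle$ when the mass sits at small sizes. For this part I would redo the estimate for the terms involving $\psi_0$, namely $\langle\psi_0|M|\psi_0\rangle$ and $\langle\psi_0|M|\psi_1\rangle$, with fine dyadic blocks. These are $\phi_0:=\Pi_{=0}\psi$ and $\phi_j:=(\Pi_{\le2^j}-\Pi_{\le2^{j-1}})\psi$. Now a block interacts with all blocks within index distance $m:=1+\lceil\log_2r\rceil$. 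Writing the sum of surviving pairs as $\sum_j\langle\phi_j|M|\chi_j\rangle$, where $\chi_j$ collects the partner blocks $\phi_l$ with $j-m\le l\le j$ (each pair counted once), both vectors lie in $\mathcal L_{2^j}$. Cauchy--Schwarz then gives a $\kappa$-contribution of at most
\[
\kappa\sqrt{\sum_j 2^j\||\phi_j\rangle\|^2/\Delta}\cdot\sqrt{\sum_j\||\chi_j\rangle\|^2}\le\kappa\sqrt{2L/\Delta}\cdot\sqrt{m+1}.
\]
This is exactly the $\sqrt{L(1+\lceil\log_2r\rceil)/\Delta}$ scale appearing in the statement. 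Collecting constants from both steps should give $\kappa'=12\kappa$.

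\emph{Main obstacle.} The difficulty is that the fine step, done naively, charges $\tau_T$ with a $\sqrt{m}$ factor, whereas the statement allows only $4\tau_T$. I would therefore use the fine decomposition \emph{only} for the $\kappa$-part and the coarse one for the $\tau_T$-part. Concretely, I would try to bound the low block's restricted form via the sub-additive split $\|\Pi_{\le P}M\Pi_{\le P}\|\le\tau_T+\kappa\sqrt{P/\Delta}$, keeping the $\tau_T$ terms grouped as in the coarse sum (where $\sum_k\||\psi_k\rangle\|^2=1$). Making this separation rigorous is the delicate point, because \eqref{eq:BF bound decision} gives only a bound on the sum $\tau_T+\kappa\sqrt{P/\Delta}$, not two separate operator pieces. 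If a clean split fails, the fallback is a case analysis. When $\kappa\sqrt{mL/\Delta}\ge\tau_T$, the $\sqrt m$ loss on $\tau_T$ is absorbed into the $\kappa'$ term. Otherwise the coarse bound already suffices, since the low block then carries a coefficient comparable to $\tau_T$.
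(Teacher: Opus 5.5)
\textbf{There is a gap at the crux of the lemma.} Your coarse step is correct. You also identify the crux correctly: $\tau_T$ must keep an $O(1)$ coefficient while the factor $\sqrt{1+\lceil\log_2 r\rceil}$ multiplies only the $\kappa$-term. Neither proposed resolution achieves this.

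The ``sub-additive split'' is not available. For $u,v\in\mathcal L_P$, the hypothesis bounds $|\langle u|M_{\mathcal E}|v\rangle|$ only by the sum $(\tau_T+\kappa\sqrt{P/\Delta})\|u\|_2\|v\|_2$. Two bounds $|X|\le a_1+b_1$ and $|X|\le a_2+b_2$ do not imply $|X|\le a_1+b_2$.

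The fallback case analysis fails in both branches. Write $m=1+\lceil\log_2 r\rceil$ as you do.
\begin{itemize}
\item If $\kappa\sqrt{mL/\Delta}\ge\tau_T$, absorbing the loss gives $\sqrt m\,\tau_T\le m\,\kappa\sqrt{L/\Delta}$. This exceeds the allowed $12\kappa\sqrt{mL/\Delta}$ by the unbounded factor $\sqrt m/12$; the threshold would have to be $\tau_T\le\kappa\sqrt{L/\Delta}$.
\item In the other branch, your coarse bound still charges $\|\psi_0\|_2^2$ with $3\kappa\sqrt{2r/\Delta}$. Comparing $\tau_T$ with $L$ says nothing about $\kappa\sqrt{r/\Delta}$ versus $\tau_T$. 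For instance, let $|\psi\rangle$ have weight $1-\varepsilon$ at database size $0$ and weight $\varepsilon$ at size $1$. Then $L=\varepsilon$, and for fixed $\tau_T>0$ and small $\varepsilon$ you are in this branch. Yet your coarse bound equals $3\tau_T+3\kappa\sqrt{2r/\Delta}$, which is above the target $4\tau_T+12\kappa\sqrt{m\varepsilon/\Delta}$ once $r\gg\Delta\tau_T^2/\kappa^2$.
\end{itemize}
More fundamentally, no split on the global quantity $L$ can work. A single state can carry weight both below and above the size at which $\kappa\sqrt{P/\Delta}$ overtakes $\tau_T$.

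\textbf{The missing idea is a \emph{size-dependent} cutoff at that crossover,} namely $q:=\max\{1,\lceil\Delta\tau_T^2/\kappa^2\rceil\}$, which is what the paper uses.
\begin{itemize}
\item Put all weight of size below $q$ into one block $\phi_0\in\mathcal L_{q-1}$. Since $\kappa\sqrt{(q-1)/\Delta}\le\tau_T$, this gives $|\langle\phi_0|M_{\mathcal E}|\phi_0\rangle|\le2\tau_T\|\phi_0\|_2^2$. Because it is a single block, there is no logarithmic loss.
\item Start the dyadic blocks at $q$, i.e.\ sizes in $[2^{j-1}q,\,q_j]$ with $q_j=\min\{2^jq-1,r-1\}$. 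On each of them $q_j\ge q\ge\Delta\tau_T^2/\kappa^2$, so $\tau_T+\kappa\sqrt{q_j/\Delta}\le2\kappa\sqrt{q_j/\Delta}$. Thus $\tau_T$ is absorbed blockwise, and only the $\kappa$-term carries the $\sqrt J\le\sqrt{\lceil\log_2 r\rceil}$ factor.
\end{itemize}
The paper applies this only to $d_0=\langle\psi_0|M_{\mathcal E}|\psi_0\rangle$ inside the first width-$r$ block. There it pairs $\phi_j$ with $\sum_{i<j}\phi_i$, so no bandedness is needed. It never refines $\langle\psi_0|M_{\mathcal E}|\psi_1\rangle$. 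Instead of AM--GM, it uses an asymmetric Cauchy--Schwarz: the weight $\sqrt{(n+2)r-1}$ of the cross term between width-$r$ blocks $n$ and $n+1$ is attached to the higher block, whose sizes are at least $(n+1)r$. With this cutoff inserted, your banded $\chi_j$ bookkeeping would also go through.

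\textbf{Smaller slips to fix.}
\begin{itemize}
\item Your $\phi_1=(\Pi_{\le2}-\Pi_{\le1})|\psi\rangle=\Pi_{=2}|\psi\rangle$ omits size $1$.
\item The $j=0$ pair must be placed in $\mathcal L_0$ rather than $\mathcal L_1$; otherwise $\sum_j2^j\|\phi_j\|_2^2\le2L$ fails.
\item Off-diagonal pairs occur twice in $\langle\psi|M_{\mathcal E}|\psi\rangle$.
\item The constant $12$ has not actually been checked.
\end{itemize}
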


\begin{proof}

We simply write $r:=\max\{1,2(T_{\Samp}+T+T_{\Ver})\}$. Define the block projector, block vector, and block expectation for $m\ge 0$,
\[
    \Pi_{m}^{\mathsf{B}}:=\sum_{j=mr}^{(m+1)r-1}\Pi_{=j},\qquad |\psi_{m}\rangle:=\Pi_{m}^{\mathsf{B}}|\psi\rangle, \qquad d_{m}:=\langle \psi_m|M_{\mathcal{E}}|\psi_m\rangle.
\]
We adopt the convention \(\Pi_{=j}=0\) for \(j>N\).
As $M_{\mathcal{E}}$ increases the database size by at most $r$,
\[
    \Pi_{m}^{\mathsf{B}}M_{\mathcal{E}}\Pi_{n}^{\mathsf{B}}=0
\]
for $|m-n|\ge 2$. Thus we have
\[
    \langle\psi|M_{\mathcal E}|\psi \rangle = \sum_{m\ge 0}\langle \psi_{m}|M_{\mathcal E}|\psi_{m}\rangle + 2\sum_{m\ge 0}\Re(\langle \psi_{m}|M_{\mathcal E}|\psi_{m+1}\rangle).
\]
Here, we used $M_{\mathcal E}=M_{\mathcal E}^{\dagger}$. As $|\psi_{m}\rangle \in \mathcal L_{(m+1)r-1}$ and $|\psi_{m+1}\rangle \in \mathcal L_{(m+2)r-1}$,
\[
    \begin{aligned}
        &|\langle \psi_{m}|M_{\mathcal E}|\psi_{m}\rangle| \le \left(\tau_{T}+\kappa\sqrt{\frac{(m+1)r-1}{\Delta}}\right)||\psi_{m}||_2^2 \\
        &|\langle \psi_{m}|M_{\mathcal E}|\psi_{m+1}\rangle| \le \left(\tau_{T}+\kappa\sqrt{\frac{(m+2)r-1}{\Delta}}\right)||\psi_{m}||_2||\psi_{m+1}||_2.
    \end{aligned}
\]
We bound $\langle\psi|M_{\mathcal E}|\psi \rangle-\langle\psi_0|M_{\mathcal E}|\psi_0 \rangle$.
We have
\begin{align}
    \sum_{m\ge 1}((m+1)r-1)||\psi_m||_{2}^2 &=\sum_{m\ge 1}\sum_{j=mr}^{(m+1)r-1}((m+1)r-1)||\Pi_{=j}\psi||_{2}^2  \notag\\
    &\le \sum_{m\ge 1}\sum_{j=mr}^{(m+1)r-1}2j||\Pi_{=j}\psi||_{2}^2 \notag\\
    &\le 2\langle \psi | \Lambda | \psi\rangle.
    \label{eq:m r and Lambda}
\end{align}
By Cauchy–Schwarz and \eqref{eq:m r and Lambda},
\begin{align}
    \sum_{m\ge 1}\sqrt{(m+1)r-1}||\psi_{m}||_2^2 &\le \sqrt{\sum_{m\ge 1}((m+1)r-1)||\psi_m||_2^2} \sqrt{\sum_{m\ge 1}||\psi_{m}||_2^2} \notag \\
    &\le \sqrt{2\langle \psi | \Lambda | \psi\rangle}.
    \label{eq:bounded by 2Lambda 1}
\end{align}
Similarly,
\begin{align}
    \sum_{m\ge 0}\sqrt{(m+2)r-1}||\psi_{m}||_2 ||\psi_{m+1}||_2 &\le \sqrt{\sum_{m\ge 0}||\psi_m||_2^2} \sqrt{\sum_{m\ge 0}((m+2)r-1)||\psi_{m+1}||_2^2} \notag \\
    &\le \sqrt{\sum_{m\ge 1}((m+1)r-1)||\psi_{m}||_2^2} \notag  \\
    &\le \sqrt{2\langle \psi | \Lambda | \psi\rangle}.
    \label{eq:bounded by 2Lambda 2}
\end{align}
By Cauchy–Schwarz,
\begin{equation}
    \sum_{m\ge 0}||\psi_{m}||_{2}||\psi_{m+1}||_2 \le \sqrt{\sum_{m\ge 0}||\psi_{m}||_{2}^2} \sqrt{\sum_{m\ge 0}||\psi_{m+1}||_2^2}\le 1.
    \label{eq:psi psi}
\end{equation}
Combining \eqref{eq:bounded by 2Lambda 1,eq:bounded by 2Lambda 2,eq:psi psi}, we have
\[
    |\langle\psi|M_{\mathcal E}|\psi \rangle-\langle\psi_0|M_{\mathcal E}|\psi_0 \rangle| \le (3-||\psi_{0}||_2^2)\tau_T + 3\kappa\sqrt{\frac{2\langle \psi | \Lambda | \psi\rangle}{\Delta}}.
\]

It remains to bound $|d_0|$. Set
\[
q:=\max\left\{1,\left\lceil\frac{\Delta\tau_T^2}{\kappa^2}\right\rceil\right\},\qquad J:=\max\left\{0,\left\lceil\log_2(r/q)\right\rceil\right\}.
\]
Define
\[
|\phi_0\rangle:=\Pi_{\le q-1}|\psi_0\rangle.
\]
For $1\le j\le J$, define
\[
q_j:=\min\{2^jq-1,r-1\},\qquad |\phi_j\rangle:=\sum_{k=2^{j-1}q}^{q_j}\Pi_{=k}|\psi_0\rangle.
\]
These vectors are mutually orthogonal and satisfy
\[
|\psi_0\rangle=\sum_{j=0}^{J}|\phi_j\rangle.
\]
If $q\ge r$, then $J=0$ and $|\phi_0\rangle=|\psi_0\rangle$. Since $q\ge1$, we have $J\le\lceil\log_2r\rceil$.

The choice of $q$ gives
\[
\kappa\sqrt{\frac{q-1}{\Delta}}\le\tau_T.
\]
As $|\phi_0\rangle\in\mathcal L_{q-1}$, it follows that
\begin{align*}
|\langle\phi_0|M_{\mathcal E}|\phi_0\rangle|
&\le\left(\tau_T+\kappa\sqrt{\frac{q-1}{\Delta}}\right)\|\phi_0\|_2^2 \notag\\
&\le2\tau_T\|\phi_0\|_2^2.
\end{align*}
For $1\le j\le J$, we have $q_j\ge q\ge \Delta\tau_T^2/\kappa^2$, and hence
\[
\tau_T+\kappa\sqrt{\frac{q_j}{\Delta}}\le2\kappa\sqrt{\frac{q_j}{\Delta}}.
\]
Both $|\phi_j\rangle$ and $\sum_{i=0}^{j-1}|\phi_i\rangle$ belong to $\mathcal L_{q_j}$. Moreover, orthogonality gives
\[
\|\phi_j\|_2\le\|\psi_0\|_2,\qquad \left\|\sum_{i=0}^{j-1}\phi_i\right\|_2\le\|\psi_0\|_2.
\]
Expanding $d_0$ in this decomposition gives
\[
d_0=\sum_{j=0}^{J}\langle\phi_j|M_{\mathcal E}|\phi_j\rangle+2\sum_{j=1}^{J}\Re\left(\sum_{i=0}^{j-1}\langle\phi_i|M_{\mathcal E}|\phi_j\rangle\right).
\]
For $1\le j\le J$, both $\sum_{i=0}^{j-1}|\phi_i\rangle$ and $|\phi_j\rangle$ belong to $\mathcal L_{q_j}$. Hence,
\begin{align*}
\left|\sum_{i=0}^{j-1}\langle\phi_i|M_{\mathcal E}|\phi_j\rangle\right|
&=\left|\left\langle\sum_{i=0}^{j-1}\phi_i\middle|\Pi_{\le q_j}M_{\mathcal E}\Pi_{\le q_j}\middle|\phi_j\right\rangle\right| \notag\\
&\le\|\Pi_{\le q_j}M_{\mathcal E}\Pi_{\le q_j}\|_{\mathrm{op}}\left\|\sum_{i=0}^{j-1}\phi_i\right\|_2\|\phi_j\|_2 \notag\\
&\le2\kappa\sqrt{\frac{q_j}{\Delta}}\left\|\sum_{i=0}^{j-1}\phi_i\right\|_2\|\phi_j\|_2.
\end{align*}
Similarly,
\[
|\langle\phi_j|M_{\mathcal E}|\phi_j\rangle|\le2\kappa\sqrt{\frac{q_j}{\Delta}}\|\phi_j\|_2^2.
\]
Applying the preceding bounds, we obtain
\begin{align*}
|d_0|
&\le2\tau_T\|\phi_0\|_2^2+2\kappa\sum_{j=1}^{J}\sqrt{\frac{q_j}{\Delta}}\left(\|\phi_j\|_2^2+2\left\|\sum_{i=0}^{j-1}\phi_i\right\|_2\|\phi_j\|_2\right) \notag\\
&\le2\tau_T\|\psi_0\|_2^2+\frac{6\kappa\|\psi_0\|_2}{\sqrt{\Delta}}\sum_{j=1}^{J}\sqrt{q_j}\|\phi_j\|_2.
\end{align*}
For every $k$ in the $j$th interval, $q_j<2(2^{j-1}q)\le2k$. Therefore,
\begin{align*}
\sum_{j=1}^{J}q_j\|\phi_j\|_2^2
&=\sum_{j=1}^{J}\sum_{k=2^{j-1}q}^{q_j}q_j\|\Pi_{=k}\psi_0\|_2^2 \notag\\
&\le2\sum_{j=1}^{J}\sum_{k=2^{j-1}q}^{q_j}k\|\Pi_{=k}\psi_0\|_2^2 \notag\\
&\le2\langle\psi_0|\Lambda|\psi_0\rangle.
\end{align*}
By Cauchy--Schwarz,
\begin{align*}
\sum_{j=1}^{J}\sqrt{q_j}\|\phi_j\|_2
&\le\sqrt J\sqrt{\sum_{j=1}^{J}q_j\|\phi_j\|_2^2} \notag\\
&\le\sqrt{2J\langle\psi_0|\Lambda|\psi_0\rangle}.
\end{align*}
Thus,
\[
|d_0|\le2\tau_T\|\psi_0\|_2^2+6\kappa\|\psi_0\|_2\sqrt{\frac{2J\langle\psi_0|\Lambda|\psi_0\rangle}{\Delta}}.
\]
Combining this with the preceding bound on the terms outside the first block gives
\begin{align*}
|\langle\psi|M_{\mathcal E}|\psi\rangle|
&\le(3+\|\psi_0\|_2^2)\tau_T+3\kappa\sqrt{\frac{2\langle\psi|\Lambda|\psi\rangle}{\Delta}}+6\kappa\|\psi_0\|_2\sqrt{\frac{2J\langle\psi_0|\Lambda|\psi_0\rangle}{\Delta}} \notag\\
&\le4\tau_T+\sqrt2(3+6\sqrt J)\kappa\sqrt{\frac{\langle\psi|\Lambda|\psi\rangle}{\Delta}} \notag\\
&\le4\tau_T+12\kappa\sqrt{\frac{(1+\lceil\log_2r\rceil)\langle\psi|\Lambda|\psi\rangle}{\Delta}}.
\end{align*}
The second inequality uses $\|\psi_0\|_2\le1$ and $\langle\psi_0|\Lambda|\psi_0\rangle\le\langle\psi|\Lambda|\psi\rangle$. The last inequality follows from $J\le\lceil\log_2r\rceil$ and
\[
\sqrt2(3+6\sqrt J)\le3\sqrt{10}\sqrt{1+J}\le12\sqrt{1+J}.\qedhere
\]

\end{proof}

\subsection{Proof of \texorpdfstring{\cref{lem:back to psi search}}{Lemma~back to psi search}}
\label{subsec:pf of lem:back to psi search}

\begin{proof}

We simply write $r:=\max\{1,2(T_{\Samp}+T+T_{\Ver})\}$. Define the block projector for $m\ge 0$,
\[
    \Pi_{m}^{\mathsf{B}}:=\sum_{j=mr}^{(m+1)r-1}\Pi_{=j}.
\]
We adopt the convention \(\Pi_{=j}=0\) for \(j>N\).
As $M_{\mathcal{E}}$ increases the database size by at most $r$,
\begin{equation}
    \Pi_{m}^{\mathsf{B}}M_{\mathcal{E}}\Pi_{n}^{\mathsf{B}}=0
    \label{eq:Pi_m and Pi_n}
\end{equation}
for $|m-n|\ge 2$. Define the odd and even parts
\[
    |\psi_{e}\rangle:=\sum_{2|m}\Pi_{m}^{\mathsf{B}}|\psi\rangle,\qquad |\psi_{o}\rangle:=\sum_{2\nmid m}\Pi_{m}^{\mathsf{B}}|\psi\rangle.
\]
Define
\[
d_{e}:=\langle \psi_{e}|M_{\mathcal{E}}|\psi_{e}\rangle, \quad d_{o}:=\langle \psi_{o}|M_{\mathcal{E}}|\psi_{o}\rangle, \quad
d_{m}:=\langle \psi|\Pi_{m}^{\mathsf{B}}M_{\mathcal{E}}\Pi_{m}^{\mathsf{B}}|\psi\rangle.
\]
Then we have
\begin{equation}
    \langle \psi | M_{\mathcal{E}} | \psi\rangle \le d_{e}+d_{o}+|\langle \psi_{e}|M_{\mathcal{E}}|\psi_{o}\rangle|+|\langle \psi_{o}|M_{\mathcal{E}}|\psi_{e}\rangle|.
    \label{eq:d_e and d_o}
\end{equation}
By \eqref{eq:Pi_m and Pi_n}, $d_e$ and $d_o$ can be written as
\[
    d_e=\sum_{2|m}d_m,\qquad d_o=\sum_{2\nmid m}d_m.
\]
Since $0\preceq M_{\mathcal E}$, the Cauchy--Schwarz inequality gives, for any vectors $|u\rangle$ and $|v\rangle$,
\[
    |\langle u|M_{\mathcal{E}}|v\rangle|\le \sqrt{\langle u|M_{\mathcal{E}}|u\rangle\langle v|M_{\mathcal{E}}|v\rangle}.
\]
This gives
\[
    |\langle\psi_{e}|M_{\mathcal{E}}|\psi_{o}\rangle| \le \sqrt{d_ed_o}.
\]
Hence, \eqref{eq:d_e and d_o} and $2\sqrt{ab}\le a+b$ for nonnegative $a,b$ yield
\begin{equation}
    \langle \psi | M_{\mathcal{E}} | \psi\rangle \le d_e+d_o+2\sqrt{d_ed_o}\le 2(d_e+d_o)\le 2\sum_{m}d_m.
    \label{eq:sum d_m}
\end{equation}
Applying \eqref{eq:BF bound search} to $\Pi_{m}^{\mathsf{B}}|\psi\rangle \in \mathcal L_{(m+1)r-1}$, we get
\[
    d_m \le \tau_T||\Pi_{m}^{\mathsf{B}}|\psi\rangle||_{2}^2 + \kappa\frac{(m+1)r-1}{\Delta}||\Pi_{m}^{\mathsf{B}}|\psi\rangle||_{2}^2.
\]
The equality
\[
    \begin{aligned}
    \Pi_{m}^{\mathsf{B}}\Lambda\Pi_{m}^{\mathsf{B}}+(r-1)\Pi_{m}^{\mathsf{B}}-((m+1)r-1)\Pi_{m}^{\mathsf{B}}  = \sum_{j=mr}^{(m+1)r-1}(j-mr)\Pi_{=j}\succeq 0
    \end{aligned}
\]
gives
\[
    ((m+1)r-1)||\Pi_{m}^{\mathsf{B}}\psi||_{2}^{2} \le \langle \psi|\Pi_{m}^{\mathsf{B}}\Lambda\Pi_{m}^{\mathsf{B}}|\psi\rangle + (r-1)||\Pi_{m}^{\mathsf{B}}\psi||_{2}^{2}.
\]
This implies
\[
    \begin{aligned}
    \sum_{m}d_m &\le \frac{\kappa}{\Delta}\sum_{m}((m+1)r-1)||\Pi_{m}^{\mathsf{B}}\psi||_{2}^{2} +\sum_{m} \tau_T||\Pi_{m}^{\mathsf{B}}\psi||_{2}^{2}  \\
    &\le \frac{\kappa}{\Delta}\sum_{m}\langle \psi|\Pi_{m}^{\mathsf{B}}\Lambda\Pi_{m}^{\mathsf{B}}|\psi\rangle +\sum_{m} \left(\tau_T+\frac{\kappa(r-1)}{\Delta}\right)||\Pi_{m}^{\mathsf{B}}\psi||_{2}^{2}\\
    &= \frac{\kappa \langle \psi|\Lambda|\psi\rangle }{\Delta}+\left(\tau_T+\frac{\kappa(r-1)}{\Delta}\right)
    \end{aligned}
\]
Applying it to \eqref{eq:sum d_m} gives
\[
    \langle \psi | M_{\mathcal{E}} | \psi\rangle \le \frac{2\kappa \langle \psi|\Lambda|\psi\rangle }{\Delta}+2\left(\tau_T+\frac{\kappa(r-1)}{\Delta}\right).\qedhere
\]

\end{proof}

\subsection{Proof of \texorpdfstring{\cref{lem:prg-degree bound}}{Lemma~PRG database-size bound}}
\label{subsec:pf of lem:prg-degree bound}


\begin{proof}
Consider an arbitrary unit vector $|\psi\rangle$ which can be expressed as
\[
  |\psi\rangle
=
\sum_{z,D}
\alpha_{z,D}
|\phi_D\rangle_O|z\rangle_Z,
  \quad
  \sum_{z,D}|{\alpha_{z,D}}|^2=1,
  \quad
  \langle\psi\rvert\Lambda\lvert\psi\rangle=\sum_{z,D}|D||{\alpha_{z,D}}|^2.
\]
In oracle-conditioned components,
\[
  \lvert\psi\rangle
  =\frac1{M^{N/2}}\sum_H\lvert H\rangle\lvert\psi(H)\rangle,
  \qquad
  \lvert\psi(H)\rangle
  =\sum_{z,D}\alpha_{z,D}
    \omega_M^{\langle D,H\rangle}\lvert z\rangle.
\]
Then $\E_H[||{\psi(H)}||_2^2]=1$.

For $x\in[N]$ and $h\in[M]$, recall the one-point reprogrammed
function $H_{x\to h}$.  Introduce an environment register \(\mathsf{env}\) with orthonormal basis \(\{|x,H,h\rangle\}\), a challenge register \(\mathsf{ch}\), and a fresh workspace \(W\), and define
\begin{align}
  \lvert\Gamma_{\mathsf{unif}}\rangle
  &:=\frac1{\sqrt{NM^{N+1}}}
    \sum_{x,H,h}
    \lvert x,H,h\rangle_{\mathsf{env}}\lvert h\rangle_{\mathsf{ch}}
    \lvert\psi(H)\rangle_{Z}\lvert0\rangle_{W},
  \label{eq:prg-gamma-unif}\\
  \lvert\Gamma_{\mathsf{img}}\rangle
  &:=\frac1{\sqrt{NM^{N+1}}}
    \sum_{x,H,h}
    \lvert x,H,h\rangle_{\mathsf{env}}\lvert h\rangle_{\mathsf{ch}}
    \lvert\psi(H_{x\to h})\rangle_{Z}\lvert0\rangle_{W}.
  \label{eq:prg-gamma-img}
\end{align}
We compute the two norms and the Euclidean distance between the states.  We have
$
 ||{\Gamma_{\mathsf{unif}}}||_2^2
 =\E_H[||{\psi(H)}||_2^2]=1.
$
For fixed $x$, the map
\begin{equation}
  (H,h)\longleftrightarrow
  (G=H_{x\to h},a=H(x))
  \label{eq:G-H bijection}
\end{equation}
is a bijection.

Two states $\lvert\Gamma_{\mathsf{unif}}\rangle, \lvert\Gamma_{\mathsf{img}}\rangle$ correspond to $|\phi_{1}\rangle,|\phi_{2}\rangle$ defined in \cite[Section~5.2]{CGLQ20}, respectively. From the proof of \cite[Lemma~5.10]{CGLQ20}, we have 
\begin{align}
  ||{\Gamma_{\unif}-\Gamma_{\img}}||_2^2
   \le \frac{4\langle\psi\rvert\Lambda\lvert\psi\rangle}{N}.
  \label{eq:prg-component-distance}
\end{align}

For an oracle $J$ and challenge $h$, let $V_h^J$ denote the
corresponding adversary unitary.
On the environment register storing $(x,H,h)$, define
\[
\begin{aligned}
    \widehat V_{\unif}
    &:=
    \sum_{x,H,h}
    |x,H,h\rangle\langle x,H,h|
    \otimes V_h^H,
    \\
    \widehat V_{\star}
    &:=
    \sum_{x,H,h}
    |x,H,h\rangle\langle x,H,h|
    \otimes V_h^{H_{x\to h}}.
\end{aligned}
\]
Define
\[
    |\Phi_{\unif}\rangle
    :=
    \widehat V_{\unif}|\Gamma_{\unif}\rangle,
    \qquad
    |\Phi_{\star}\rangle
    :=
    \widehat V_{\star}|\Gamma_{\unif}\rangle,
    \qquad
    |\Phi_{\img}\rangle
    :=
    \widehat V_{\star}|\Gamma_{\img}\rangle.
\]

Let $\Pi_0$ be the projector corresponding to adversary output $0$, and put
\[
    p_{\unif}
    :=
    \lVert\Pi_0\Phi_{\unif}\rVert_2^2,
    \qquad
    p_{\star}
    :=
    \lVert\Pi_0\Phi_{\star}\rVert_2^2,
    \qquad
    p_{\img}
    :=
    \lVert\Pi_0\Phi_{\img}\rVert_2^2.
\]
The last two experiments apply the same controlled unitary
$\widehat V_{\star}$.
Equivalently, they use the same POVM element
\[
    \widetilde E_{\star}
    :=
    \widehat V_{\star}^{\dagger}
    \Pi_0
    \widehat V_{\star},
    \qquad
    0\preceq\widetilde E_{\star}\preceq I.
\]

For unit vectors $|u\rangle,|v\rangle$ and an operator $0\preceq E\preceq I$,
applying Cauchy--Schwarz to $E-\tfrac12I$ gives
\[
\begin{aligned}
    \bigl|\langle u|E|u\rangle-\langle v|E|v\rangle\bigr|
    &\le 2\bigl\|E-\tfrac12I\bigr\|_{\mathrm{op}}\,\lVert u-v\rVert_2\\
    &\le\lVert u-v\rVert_2.
\end{aligned}
\]
Applying this to $\widetilde E_{\star}$ and the normalized states
$|\Gamma_{\img}\rangle,|\Gamma_{\unif}\rangle$, and using
\eqref{eq:prg-component-distance}, gives

\begin{equation}
    |p_{\img}-p_{\star}|
    \le
    \lVert\Gamma_{\img}-\Gamma_{\unif}\rVert_2
    \le
    \sqrt{
        \frac{
            4\langle\psi|\Lambda|\psi\rangle
        }{N}
    }.
    \label{eq:prg-img-star}
\end{equation}

Orthogonality of the environment labels gives
\begin{align}
  p_{\mathsf{unif}}
  &=\frac1{NM^{N+1}}\sum_{x,H,h}
    \langle\psi(H)\rvert E_h^H\lvert\psi(H)\rangle,
  \label{eq:prg-punif-detailed}\\
  p_\star
  &=\frac1{NM^{N+1}}\sum_{x,H,h}
    \langle\psi(H)|E_h^{H_{x\to h}}|\psi(H)\rangle.
  \label{eq:prg-pstar-detailed}
\end{align}
For each $H$, put $w_H:=\norm{\psi(H)}_2^2$ and normalize the nonzero oracle-conditioned component as $\lvert\eta_H\rangle=w_H^{-1/2}\lvert\psi(H)\rangle$.  For $w_H>0$, define
\[
    a_{H,h}(x)
    :=
    \langle\eta_H|
    E_h^{H_{x\to h}}
    |\eta_H\rangle,
    \qquad
    a_{H,h}(\varnothing)
    :=
    \langle\eta_H|
    E_h^H
    |\eta_H\rangle.
\]
When $w_H=0$, set both quantities equal to zero. 
\eqref{eq:prg-punif-detailed,eq:prg-pstar-detailed} give
\begin{align*}
 p_\star-p_{\mathsf{unif}}
 &=\frac1{M^{N+1}}\sum_{H,h}w_H
   \left(\frac1N\sum_xa_{H,h}(x)-a_{H,h}(\varnothing)\right).
\end{align*}
Condition on $H$ and $h$. Then the base oracle $H$, the replacement
value $h$, the normalized initial state $|\eta_H\rangle$, and the entire
$T$-query adversary for challenge $h$ are fixed, while only
$x\sample[N]$ varies. Hence \cref{lem:uniform-one-point-reprogramming} applies.
So, the absolute value of the parenthesis is at most $4T^2/N$.  Hence
\begin{equation}
 |p_\star-p_{\mathsf{unif}}|
 \le\frac{4T^2}{N}\frac1{M^{N+1}}\sum_{H,h}w_H
 =\frac{4T^2}{N}\E_H[w_H]
 =\frac{4T^2}{N}.
 \label{eq:star-unif PRG}
\end{equation}
The same change of variables gives
\begin{equation}
  \frac{1}{2}(p_{\mathsf{img}}-p_{\mathsf{unif}})
  =\langle\psi\rvert M_{\mathcal E_{\mathsf{PRG}}}\lvert\psi\rangle.
  \label{eq:prg-rayleigh-identity}
\end{equation}
Indeed, the uniform experiment simplifies to
\[
 p_{\mathsf{unif}}
 =\frac1{M^N}\sum_H
  \left\langle\psi(H)\left|
    \frac1M\sum_hE_h^H
  \right|\psi(H)\right\rangle.
\]
For the image experiment, apply the bijection
\eqref{eq:G-H bijection} and sum over the original value $a$:
\begin{align*}
 p_{\mathsf{img}}
 &=\frac1{NM^{N+1}}\sum_{x,H,h}
  \langle\psi(H_{x\to h})|E_h^{H_{x\to h}}|\psi(H_{x\to h})\rangle\\
 &=\frac1{NM^N}\sum_{x,G}
  \langle\psi(G)|E_{G(x)}^G|\psi(G)\rangle \\
 &=\frac1{M^N}\sum_{G}
  \left\langle\psi(G)\left|\frac{1}{N}\sum_{x}E_{G(x)}^G
  \right|\psi(G)\right\rangle
\end{align*}
Subtracting these two identities proves
\eqref{eq:prg-rayleigh-identity}.
Combining \eqref{eq:prg-img-star}, \eqref{eq:prg-rayleigh-identity} and
\eqref{eq:star-unif PRG} with the triangle inequality proves
\eqref{eq:prg-complete-degree-sensitive}. Since \(\|E_{\mathsf{PRG}}^H\|_{\mathsf{op}}\le1\) for every \(H\), we have \(\|M_{\mathcal E_{\mathsf{PRG}}}\|_{\mathsf{op}}\le1\), which gives the cap. Restricting to $\mathcal L_P$
and using $\langle \psi|\Lambda|\psi\rangle \le P$ on $\mathcal L_P$ proves
\eqref{eq:prg-degree-restricted}.
\end{proof}

\subsection{Proof of \texorpdfstring{\cref{lem:yb degree bound}}{Lemma~YB database-size bound}}
\label{subsec:pf of lem:yb degree bound}

We prove the lemma by using the method introduced in the proof of \cite[Lemma~6.2]{CGLQ20}.

\begin{proof}

Fix an arbitrary unit vector $|\psi\rangle$.
Define the projector on the oracle register, for $x\in [N]$,
\[
    F_{x}:=\sum_{D:D(x)=0}|\phi_D\rangle \langle \phi_D|_{O}.
\]
Then, $I-F_{x}$ corresponds to the projection $P_{x}$ defined in \cite[Section~6]{CGLQ20}. Define 
\[
    p_{x}:=||(I-F_{x})\psi||_{2}^2
\]
which corresponds to $p_{x}^{(0)}$ defined in \cite[Section~6]{CGLQ20}.
Calculation yields
\[
    \sum_{x\in [N]}p_{x}= \langle \psi|\Lambda |\psi\rangle.
\]
In the proof of \cite[Lemma~6.2]{CGLQ20}, they showed
\[
    \sqrt{\Pr[\mathsf{win}|x]} \le \sqrt{p_{x}}+\sqrt{1/2}.
\]
Squaring both sides and averaging over $x\in [N]$ gives
\[
    \Pr[\mathsf{win}]-\frac{1}{2}\le \frac{\langle \psi|\Lambda |\psi\rangle}{N}+ \sqrt{\frac{2\langle \psi|\Lambda |\psi\rangle}{N}} \le (1+\sqrt{2})\sqrt{\frac{\langle \psi|\Lambda |\psi\rangle}{N}}.
\]
Applying the same argument to the adversary with its output bit flipped bounds $1/2-\Pr[\mathsf{win}]$. Combining the two bounds yields
\[
    \left |\Pr[\mathsf {win}]-\frac{1}{2} \right | \le (1+\sqrt{2})\sqrt{\frac{\langle \psi|\Lambda |\psi\rangle}{N}}.
\]
$|\langle \psi |M_{\mathcal E_{\mathsf YB}}|\psi \rangle |=|\Pr[\mathsf{win}]-\frac{1}{2}|$ proves the lemma.

\end{proof}

\subsection{Proof of \texorpdfstring{\cref{lem:owf-degree bound}}{Lemma OWF database-size bound}}
\label{subsec:pf of lem:owf-degree bound}


We use the following result from \cite{Liu23}.
\begin{lemma}[{\cite[Lemma~4.10]{Liu23}}]
    There exists a universal constant $C_{\mathrm{BF}}>0$ such that, for all integers $P,T\ge0$,
    \[
    \delta_{\mathsf{OWF}}^{\mathsf{BF}}(P,T)\le\min\left\{1,C_{\mathsf{BF}}\frac{P+(T+1)^2}{\min\{N,M\}}\right\}.
    \]
    \label{lem:BF bound owf}
\end{lemma}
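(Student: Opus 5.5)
Since this lemma is quoted from \cite{Liu23}, I would re-derive it from the operator characterization of \cref{thm:bf-operator-characterization}. By that theorem it suffices to show that for every $T$-query online inverter and every unit vector $|\psi\rangle\in\mathcal L_P(AZ)$ we have
\[
\langle\psi|M_{\mathcal E_{\mathsf{OWF}}}|\psi\rangle\le C\frac{P+(T+1)^2}{\min\{N,M\}}.
\]
The cap by $1$ is trivial. I would analyze the whole online experiment, starting from $|\psi\rangle$, in Zhandry's compressed-oracle picture. That experiment consists of one sampler query at a uniformly random $x$ producing $y=H(x)$, then $T$ adversary queries, then one verifier query at the output $x'$. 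I would switch from the Fourier basis $|\phi_D\rangle$ to the compressed standard basis, in which a database of size at most $P$ is a superposition of partial assignments $\{(x_i,y_i)\}$ with at most $P$ entries.

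The main step is a progress measure. Let $\Pi_{\mathsf{good}}^{y}$ project onto databases containing some pair $(x'',y)$ with $x''\neq x$, or containing $x$ itself at the moment it matters. I would bound the amplitude of the state in this subspace at three stages.

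\textbf{Before the adversary's first query.} There are two sources of amplitude. First, the sampler's point $x$ is uniform and independent of $|\psi\rangle$, so the weight on branches where $x$ already lies in the database is $\E_x\|\Pi_{x\in D}\psi\|^2=\langle\psi|\Lambda|\psi\rangle/N\le P/N$. Second, on the remaining branches the sampler's query to a fresh point yields a $y$ that is (up to the usual $O(1/M)$ compressed-oracle error) uniform over $[M]$ and independent of the existing at most $P$ entries. Hence the weight on "some existing entry maps to $y$" is at most $P/M$.

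\textbf{During the $T$ adversary queries.} Each query can add a new entry $(x'',y'')$ whose value is near-uniform. The standard Zhandry-type lemma says the norm of the good part grows by at most $O(\sqrt{t/M})$ at step $t$. Combined with the initial amplitude $O(\sqrt{P/\min\{N,M\}})$, this gives a final good amplitude of at most $O\bigl(\sqrt{P/\min\{N,M\}}+T/\sqrt M\bigr)$.

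\textbf{At verification.} By Zhandry's lemma, the probability that $H(x')=y$ is at most
\[
\Bigl(\|\Pi_{\mathsf{good}}\cdot\|+\sqrt{1/M}\Bigr)^2.
\]
Squaring then gives the bound $O\bigl((P+(T+1)^2)/\min\{N,M\}\bigr)$. The ``$+1$'' accounts for the sampler and verifier queries. The case where the adversary outputs $x$ itself is covered because that event requires either $x\in D$ (already charged $P/N$) or having learned $y$'s preimage, which is a good-subspace event.

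I expect the main obstacle to be the first stage. There, $|\psi\rangle$ is an arbitrary superposition over databases that is entangled with $A$ and $Z$, rather than a classical list, so ``$y$ collides with an existing entry'' must be handled as an operator inequality. I would try first to write the collision weight as $\E_{x,y}\langle\psi|\sum_{x''}\Pi_{D(x'')=y}|\psi\rangle$ in the compressed standard basis. Averaging over the fresh uniform $y$ would bound this by $\langle\psi|\Lambda|\psi\rangle/M$, using that the number of nonempty entries is exactly measured by $\Lambda$, which is basis-invariant because $\Lambda$ commutes with the compression map. If the $x$/$y$ correlation created by the sampler's query causes trouble, I would fall back on the CGLQ-style reduction from $P$-bit-fixing to the $P$-fold multi-instance OWF game, and use the known multi-instance bound there.
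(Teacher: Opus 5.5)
The paper does not re-derive this bound. It imports \cite[Lemma~4.10]{Liu23} and only does bookkeeping: absorbing the implicit constant, using $T^2\le(T+1)^2$, getting $T=0$ from $T=1$ by monotonicity, and noting the trivial cap. Your self-contained compressed-oracle route is therefore a different argument. As written, however, it has a genuine gap: the event that the adversary outputs the sampler's own point $x$. Once the sampler has queried $x$, the entry $(x,y)$ is in the database on every branch. So no database-based progress measure can register that the adversary has \emph{located} $x$, and your good projector (pairs $(x'',y)$ with $x''\neq x$) certainly does not. Your claim that this event ``requires \ldots\ having learned $y$'s preimage, which is a good-subspace event'' is unfounded. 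Concretely, take $P=0$ and $M\gg N$. Grover search over $x'\in[N]$ for $H(x')=y$ outputs $x$ with probability $\Theta(T^2/N)$. Its queries at $x$ add nothing to the database, and its remaining queries put only $O(T^2/M)$ weight on your good subspace. Your accounting would thus certify $O((T+1)^2/M)$, far below what the attack achieves. The $T^2/N$ part of the bound must come from a separate hiding argument that your proposal does not contain.

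One way to close the gap: the winning operator splits into positive parts for $x'=x$ and $x'\neq x$, so the two probabilities can be bounded separately, and your progress measure handles $x'\neq x$. For $x'=x$, an amplitude-level hybrid works. Make the change of variables $H=G_{x\to a}$, with $a=H(x)$ the challenge. First replace the initial oracle-conditioned component $\psi(G_{x\to a})$ by $\psi(G)$; this costs $O(\sqrt{\langle\psi|\Lambda|\psi\rangle/N})$ in norm by the one-point comparison of \cite[Lemma~5.10]{CGLQ20}, the same one used for the PRG lemma. Then replace the oracle $G_{x\to a}$ by $G$; this costs $O(T/\sqrt N)$ in norm by a BBBV query-magnitude bound averaged over $x$. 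In the final experiment the view is independent of $x$, so $x'=x$ has probability exactly $1/N$. Squaring the triangle inequality keeps the dependence on $P$ linear.

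Separately, your Stage~2 increment is misstated. Since $\sum_{t\le T}\sqrt{t/M}=\Theta(T^{3/2}/\sqrt M)$, it would give $T^3/M$ rather than the $T/\sqrt M$ you then write. For a fixed target $y$ the correct per-query increment is $O(1/\sqrt M)$, independent of the database size, because only the single value $y$ at a point $x''\neq x$ enters the good subspace. Your Stage~1, which you flagged as the main obstacle, is essentially fine as described.
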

For $T\ge1$, this follows from the cited result after absorbing its implicit constant into $C_{\mathrm{BF}}$ and using $T^2\le (T+1)^2$; the case $T=0$ follows by monotonicity in $T$ from the case $T=1$, and the cap by $1$ is trivial.

We now prove \cref{lem:owf-degree bound}.
\begin{proof}

By \cref{thm:bf-operator-characterization} and \cref{lem:BF bound owf}, for any $P\ge 0$,
\[
    ||\Pi_{\le P}M_{\mathcal E_{\mathsf {OWF}}}\Pi_{\le P}||_{op}\le C_{\mathsf {BF}}\frac{P+(T+1)^{2}}{\min \{N,M\}}.
\]
Set
\[
    \tau_T:=C_{\mathsf {BF}}\frac{(T+1)^{2}}{\min \{N,M\}},\qquad \kappa:=C_{\mathsf{BF}},\qquad \Delta:=\min\{N,M\}.
\]
Applying \cref{lem:back to psi search} to the inequality above yields
\[
    \begin{aligned}
    \langle \psi|M_{\mathcal E_{\mathsf{OWF}}}|\psi\rangle &\le \frac{2C_{\mathsf {BF}} \langle \psi|\Lambda|\psi\rangle}{\min \{N,M\}}+\frac{2C_{\mathsf{BF}}(T+1)^{2}}{\min \{N,M\}}+\frac{2C_{\mathsf{BF}}(r_{\mathsf{OWF}}-1)}{\min\{N,M\}} \\
    &\le 8C_{\mathsf {BF}}\frac{\langle \psi|\Lambda|\psi\rangle+(T+1)^{2}}{\min\{N,M\}}.
    \end{aligned}
\]

\end{proof}

\subsection{Proof of \texorpdfstring{\cref{lem:perm-degree bound}}{Lemma~\ref{lem:perm-degree bound}}}
\label{subsec:pf of lem:perm-degree bound}

We recall the following result from \cite{ABC+26}.
\begin{lemma}[{\cite[Lemma~2]{ABC+26}}]
    There exists a universal constant $C_{\mathsf {BF}}>0$ such that, for all integers $P,T\ge0$,
    \[
    \delta_{\mathsf{PermInv}}^{\mathsf{BF}}(P,T)\le\min\left\{1,C_{\mathsf{BF}}\frac{P+(T+1)^2}{N}\right\}.
    \]
    \label{lem:BF bound PermInv}
\end{lemma}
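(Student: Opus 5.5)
The plan is to obtain this bound as a restatement of the bit-fixing theorem for permutation inversion in~\cite{ABC+26}, in the same way as \cref{lem:BF bound owf} was derived from~\cite[Lemma~4.10]{Liu23}, rather than proving it from scratch. The cited result shows that a $(P,T)$-bit-fixing adversary inverts a random permutation on a uniformly random image with probability $O((P+T^2)/N)$ for $T\ge1$. That setting uses the same permutation database hierarchy $\mathcal L^{\perm}_P$ from \cref{sec:Random Permutations: The Partial-Assignment Filtration}, and the same forward-query interface as \cref{def:permutation-game}. The first step is to check that the two game formalizations match. Here $\Samp^H(y)=y$ makes no query, while $\Ver^H$ makes one forward query. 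So the online experiment is exactly the one analysed there, and the offline phase with postselection matches their bit-fixing definition.

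\textbf{Case $T\ge1$.} Let $C'$ be the implicit constant in the cited bound, and set $C_{\mathsf{BF}}\ge C'$. Since $T^2\le (T+1)^2$, we get $\delta^{\mathsf{BF}}_{\PermInv}(P,T)\le C_{\mathsf{BF}}(P+(T+1)^2)/N$.

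\textbf{Case $T=0$.} Here I use monotonicity in $T$. A $0$-online-query adversary is in particular a $1$-query adversary that ignores its query. Hence $\delta^{\mathsf{BF}}_{\PermInv}(P,0)\le\delta^{\mathsf{BF}}_{\PermInv}(P,1)\le C_{\mathsf{BF}}(P+4)/N$. After enlarging $C_{\mathsf{BF}}$ by a factor of at most $4$, this is at most $C_{\mathsf{BF}}(P+(0+1)^2)/N$.

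\textbf{The cap by $1$.} This is immediate, since $\delta^{\mathsf{BF}}$ is a probability.

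If a self-contained argument were wanted instead, I would follow the structure of~\cite{ABC+26}. First, I would write the postselected offline state as an element of $\mathcal L^{\perm}_P$, using \cref{thm:bf-operator-characterization}. Second, I would decompose it along the partial assignments $|v_\alpha\rangle$ with $|\alpha|\le P$. Third, I would bound the contribution of the event that the challenge $y$ lies in $\operatorname{im}(\alpha)$ by $P/N$, using uniformity of $y$. Fourth, I would bound the contribution of the remaining event by a compressed-permutation progress measure, giving $O((T+1)^2/N)$ through a Zhandry-style per-query increment of $O(\sqrt{1/N})$ in amplitude.

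The main obstacle in such a direct proof is the non-orthogonality of the states $|v_\alpha\rangle$. Because of it, the split into ``$y$ recorded'' and ``$y$ not recorded'' is not a clean orthogonal decomposition, and the cross terms must be controlled via the representation-theoretic orthogonal sectors $\mathcal D^{\perm}_j$. Since the paper permits invoking the cited lemma, I would rely on~\cite{ABC+26} for this step and only carry out the constant-absorption and $T=0$ monotonicity argument above.
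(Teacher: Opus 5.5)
Your proposal is correct and follows the paper's argument: absorb the implicit constant of the cited bound for $T\ge1$ using $T^2\le(T+1)^2$, handle $T=0$ by monotonicity in $T$ from the case $T=1$, and take the cap by $1$ as trivial. One minor point: at $T=1$ the cited bound already gives $C'(P+1)/N$ directly, so enlarging $C_{\mathsf{BF}}$ by a factor of $4$ is unnecessary, though harmless.
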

For $T\ge1$, this follows from the cited result after absorbing its implicit constant into $C_{\mathrm{BF}}$ and using $T^2\le (T+1)^2$; the case $T=0$ follows by monotonicity in $T$ from the case $T=1$, and the cap by $1$ is trivial.

We now prove \cref{lem:perm-degree bound}.
\begin{proof}

By \cref{thm:bf-operator-characterization} and \cref{lem:BF bound PermInv}, for any $P\ge 0$,
\[
    ||\Pi_{\le P}M_{\mathcal E_{\mathsf{PermInv}}}\Pi_{\le P}||_{op}\le C_{\mathsf {BF}}\frac{P+(T+1)^{2}}{N}.
\]
Set
\[
    \tau_T:=C_{\mathsf {BF}}\frac{(T+1)^{2}}{N},\qquad \kappa:=C_{\mathsf{BF}},\qquad \Delta:=N.
\]
\cref{lem:back to psi search} with the inequality above yields
\[
    \begin{aligned}
    \langle \psi|M_{\mathcal E_{\mathsf{PermInv}}}|\psi\rangle &\le \frac{2C_{\mathsf {BF}} \langle \psi|\Lambda|\psi\rangle}{N}+\frac{2C_{\mathsf{BF}}(T+1)^{2}}{N}+\frac{2C_{\mathsf{BF}}(r_{\mathsf{PermInv}}-1)}{N} \\
    &\le 8C_{\mathsf {BF}}\frac{\langle \psi|\Lambda|\psi\rangle+(T+1)^{2}}{N}.
    \end{aligned}
\]

\end{proof}

\ifnum\shorter=1

\fi
\section{Database Size Increment per Query}
\label{sec:Appendix A}

We examine how
one family query acts on $\mathcal L_P^{(K)}$. In the following
calculation, $s$, $x$, and $u$ denote the family-query salt index, the
query input, and the Fourier-basis response value, respectively; all
remaining basis labels of $Z$ are suppressed.
For random functions, by linearity, it suffices to consider a basis component of $\mathcal L_P^{(K)}$. Fix $D_1,\ldots,D_K$ with $\sum_{t=1}^K|D_t|\le P$ and consider
\[
\begin{aligned}
&|\psi\rangle=\sum_s c_s\ket{s,x,u}_Z
\otimes
\ket{\phi_{D_1},\ldots,\phi_{D_K}}_O
\\
&=
\frac{1}{M^{NK/2}}
\sum_{s,\mathbf H}
c_s
\omega_M^{\sum_{t=1}^K\langle D_t,H_t\rangle}
\ket{s,x,u}_Z\ket{H_1,\ldots,H_K}_O
\\
&\xrightarrow{\mathcal O}
\frac{1}{M^{NK/2}}
\sum_{s,\mathbf H}
c_s
\omega_M^{\sum_{t=1}^K\langle D_t,H_t\rangle+uH_s(x)}
\ket{s,x,u}_Z\ket{H_1,\ldots,H_K}_O
\\
&=
\frac{1}{M^{NK/2}}
\sum_{s,\mathbf H}
c_s
\omega_M^{
\sum_{t\ne s}\langle D_t,H_t\rangle
+
\langle D_s+ue_x,H_s\rangle}
\ket{s,x,u}_Z\ket{H_1,\ldots,H_K}_O
\\
&=
\sum_s c_s\ket{s,x,u}_Z
\otimes
\ket{\phi_{D_1},\ldots,\phi_{D_s+ue_x},\ldots,\phi_{D_K}}_O.
\end{aligned}
\]
As $|D_s+ue_x|\le |D_s|+1$, one family query increases total database size by at most one.

For random permutations, by linearity, it suffices to fix
$s\in[K]$, $x\in[N]$, $u\in\mathbb Z_N$, and a tensor product of database states in $\mathcal L_P^{(K)}$.
Fix $\alpha_1,\ldots,\alpha_K$ satisfying
$\sum_{t=1}^K|\alpha_t|\le P$, and consider
\[\begin{aligned}
&|\psi\rangle = |s,x,u\rangle_Z
\otimes
|v_{\alpha_1},\ldots,v_{\alpha_K}\rangle_O
\\
&=
\frac{1}{
\sqrt{\prod_{t=1}^K (N-|\alpha_t|)!}
}
\sum_{\substack{
H_1\supseteq\alpha_1,\ldots,H_K\supseteq\alpha_K
}}
|s,x,u\rangle_Z
|H_1,\ldots,H_K\rangle_O
\\
&\xrightarrow{\mathcal O}
\frac{1}{
\sqrt{\prod_{t=1}^K (N-|\alpha_t|)!}
}
\sum_{\substack{
H_1\supseteq\alpha_1,\ldots,H_K\supseteq\alpha_K
}}
\omega_N^{uH_s(x)}
|s,x,u\rangle_Z
|H_1,\ldots,H_K\rangle_O .
\end{aligned}\]
If $x\in\operatorname{dom}(\alpha_s)$, then
\[
    \mathcal O|\psi\rangle
    =
    \omega_N^{u\alpha_s(x)}
    |s,x,u\rangle_Z
    \otimes
    |v_{\alpha_1},\ldots,v_{\alpha_K}\rangle_O.
\]
Hence, the total database size is unchanged.
Suppose $x\not\in \operatorname{dom}(\alpha_{s})$. Then,
\[\begin{aligned}
\sum_{H_s\supseteq\alpha_s}
\omega_N^{uH_s(x)}|H_s\rangle
&=
\sum_{\substack{
H_s\supseteq\alpha_s
}}
\sum_{y\notin\operatorname{im}(\alpha_s)}
\omega_N^{uy}\one[H_s(x)=y]
|H_s\rangle
\\
&=
\sum_{y\notin\operatorname{im}(\alpha_s)}
\omega_N^{uy}
\sum_{H_s\supseteq\alpha_s\cup\{x\mapsto y\}}
|H_s\rangle.
\end{aligned}\]
Note that $|\alpha_s\cup\{x\mapsto y\}|=|\alpha_s|+1$ and
\[
    |v_{\alpha_s\cup\{x\mapsto y\}}\rangle
=
\frac{1}{\sqrt{(N-|\alpha_s|-1)!}}
\sum_{H_s\supseteq\alpha_s\cup\{x\mapsto y\}}
|H_s\rangle.
\]
Hence, 
\[\sum_{H_s\supseteq\alpha_s}
\omega_N^{uH_s(x)}|H_s\rangle
=
\sqrt{(N-|\alpha_s|-1)!}
\sum_{y\notin\operatorname{im}(\alpha_s)}
\omega_N^{uy}
|v_{\alpha_s\cup\{x\mapsto y\}}\rangle.\]
For $t\ne s$,
\[\sum_{H_t\supseteq\alpha_t}|H_t\rangle
=
\sqrt{(N-|\alpha_t|)!}\,|v_{\alpha_t}\rangle.\]
The calculation above gives
\[
\begin{aligned}
    \mathcal O|\psi\rangle
    &=
    \frac{1}{\sqrt{N-|\alpha_s|}}
    \sum_{y\notin\operatorname{im}(\alpha_s)}
    \omega_N^{uy}
    |s,x,u\rangle_Z
    \\
    &\qquad\otimes
    |v_{\alpha_1},\ldots,
    v_{\alpha_s\cup\{x\mapsto y\}},
    \ldots,v_{\alpha_K}\rangle_O.
\end{aligned}
\]
Thus every component in the second case has total database size at most \(P+1\). By linearity, the two cases show that one query to \(\mathbf H\) raises the total database size by at most one.
\begin{lemma}
    \label{lem:generic-degree-increase}
    The salted purified game operator raises the total database size by at most \(r:=2(T_{\mathrm{Sample}}+T+T_{\mathrm{Verify}})\); namely, for every $P\ge0$,
    \[
        M_{\mathcal E^{(K)}}\mathcal L_P^{(K)}
        \subseteq
        \mathcal L_{P+r}^{(K)}.
    \]
\end{lemma}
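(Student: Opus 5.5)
The plan is to write $M_{\mathcal E^{(K)}}$ as an average of fixed-salt purified operators, each a sandwich of a projector between a unitary and its adjoint, and then count queries. Since $M_{\mathcal E^{(K)}}=\frac1K\sum_{s}M_{\mathcal E_s}$ and $\mathcal L_{P+r}^{(K)}$ is a subspace, it suffices to show $M_{\mathcal E_s}\mathcal L_P^{(K)}\subseteq\mathcal L_{P+r}^{(K)}$ for each fixed challenge salt $s$.

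For fixed $s$, I will first write the purified operator explicitly. Let $\widehat V:=\sum_{\boldsymbol H}|\boldsymbol H\rangle\langle\boldsymbol H|_O\otimes V^{\boldsymbol H}$ be the controlled version of the joint unitary of sampler, adversary and verifier. Then
\[
M_{\mathcal E_s}=(I_{OA}\otimes\langle0|_W)\,\widehat V^{\dagger}\,\widetilde\Pi\,\widehat V\,(I_{OA}\otimes|0\rangle_W).
\]
Here $\widetilde\Pi$ is $\Pi$ in the search case and $\frac12(\Pi_{\mathsf T}-\Pi_{\mathsf F})$ in the decision case. Both act only on the verification register, so they commute with every oracle-register projector.

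Next I will track the database size through each factor. Appending $|0\rangle_W$ maps $\mathcal L_P^{(K)}(A)$ into $\mathcal L_P^{(K)}(AW)$. By construction $\widehat V$ interleaves oracle-independent unitaries with purified oracle queries. The count of such queries is at most $T_{\Samp}$ from the sampler, at most $T$ from the adversary (each $\Query_{\mathsf S}$ call uses at most one query to some $H_{s'}$), and at most $T_{\Ver}$ from the verifier.

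Oracle-independent unitaries act as the identity on $O$ and so preserve $\mathcal L_Q^{(K)}(\cdot)$. The computation just above shows that one family query maps $\mathcal L_Q^{(K)}$ into $\mathcal L_{Q+1}^{(K)}$, uniformly in the other registers. Hence $\widehat V$ raises total size by at most $r/2$, and $\widetilde\Pi$ preserves it.

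The one step needing care is $\widehat V^\dagger$. The inverse of a phase-oracle query is a phase-oracle query with label $-u$, again of the form treated in the appendix calculation, so it also raises size by at most one. In the permutation case the same case analysis applies with $\omega_N^{-uy}$. Therefore $\widehat V^\dagger$ raises size by at most $r/2$.

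Finally, $(I\otimes\langle0|_W)$ is an operator on non-oracle registers and so preserves $\mathcal L_Q^{(K)}$. Composing the factors gives $M_{\mathcal E_s}\mathcal L_P^{(K)}\subseteq\mathcal L_{P+r}^{(K)}$, and averaging over $s$ finishes the argument. The same reasoning with $K=1$ justifies the unsalted claim used in the proof of \cref{thm:QAI bounded by BF}.
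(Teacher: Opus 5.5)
Your proof is correct and matches the paper's argument. You write the purified operator as $(I\otimes\langle0|)\widehat V^{\dagger}\widetilde\Pi\,\widehat V(I\otimes|0\rangle)$, bound the size increase of each of $\widehat V$ and $\widehat V^{\dagger}$ by $T_{\Samp}+T+T_{\Ver}$ using the one-query calculation (an adjoint phase query being a phase query with negated label), and observe that the remaining factors act off the oracle register. The differences are cosmetic: you split off the average over $s$ first and fold the decision case into the single Hermitian operator $\frac12(\Pi_{\mathsf T}-\Pi_{\mathsf F})$, whereas the paper uses the full joint unitary and appeals to linearity for the signed combination.
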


\begin{proof}
    Let $U$ be the joint unitary implemented by the sampler, adversary, and verifier in the purified-oracle representation. It contains at most
    $T_{\Samp}+T+T_{\Ver}$ underlying oracle queries.
    By the one-query calculations above,
    \[
        U\mathcal L_P^{(K)}
        \subseteq
        \mathcal L_{
            P+T_{\Samp}+T+T_{\Ver}
        }^{(K)}.
    \]
    The same inclusion holds for $U^\dagger$, since each adjoint
    phase query is obtained by negating the response label before and
    after a phase query; these relabelings act only on the query registers
    and preserve the database size.

    For a search game, the purified game operator has the form
    \[
    (I\otimes\langle0|)U^\dagger\Pi U(I\otimes|0\rangle)
    \]
    and therefore increases the total database size by at most $2(T_{\mathsf{Sample}}+T+T_{\mathsf{Verify}})$. For a decision game, the purified game operator is a signed linear combination of two operators of this form. Since the target subspace in the database hierarchy is linear, the same inclusion holds.
\end{proof}
When $K=1$, the salted database hierarchy reduces to that of the base game,
$\mathcal L_P^{(1)}=\mathcal L_P$, and the salted query interface reduces
to the base-game query interface. Hence,
\cref{lem:generic-degree-increase} also gives
\[
    M_{\mathcal E}\mathcal L_P
    \subseteq
    \mathcal L_{
        P+2(T_{\Samp}+T+T_{\Ver})
    }.
\]
\end{document}